\documentclass[11pt]{article}
\usepackage{etex}
\usepackage[latin1]{inputenc}
\usepackage{amsmath, amsthm, amssymb}
\usepackage[english]{babel}
\usepackage{pdfpages}
\usepackage{times}

\usepackage{thmtools}
\usepackage{thm-restate}

\usepackage{colortbl}
\usepackage{graphicx}
\usepackage{caption}
\usepackage{subcaption}
\usepackage{longtable}
\usepackage{booktabs}
\usepackage{tabu}
\usepackage[referable]{threeparttablex}
\usepackage{multirow}
\usepackage{hyperref}
\usepackage[capitalise]{cleveref}

\crefname{figure}{Figure}{Figure}
\crefformat{footnote}{#2\footnotemark[#1]#3}
\usepackage{tikz}
\usepackage{comment}
\usepackage{parskip}

\usepackage{todonotes}
\usepackage{algorithmic}
\usepackage[ruled, linesnumbered]{algorithm2e}
\usepackage{tikz}
\usetikzlibrary{arrows,automata,shapes,decorations,decorations.markings,calc, matrix,decorations.pathmorphing}

\usepackage{enumerate,paralist}

\newtheorem{theorem}{Theorem}
\newtheorem{corollary}{Corollary}
\newtheorem{lemma}{Lemma}

\newtheorem{proposition}{Proposition}

\theoremstyle{remark}
\newtheorem{remark}{Remark}

\theoremstyle{definition}

%

\usepackage{fullpage}

\usepackage{graphicx} 
\newcommand{\ov}{\overline}
\newcommand{\Nats}{\mathbb{N}}
\newcommand{\NatsPlus}{\Nats^+}

\newcommand{\Realsplus}{\mathbb{R}_{\geq 0}}

\newcommand{\restr}[1]{[#1]}

\newcommand{\Path}{\rightsquigarrow}

\newcommand{\Prb}{\mathbb{P}}
\newcommand{\Prob}[1]{\mathbb{P}[#1]}
\newcommand{\Probr}[1]{\mathbb{P}[#1]}

\newcommand{\ProbT}[1]{\mathbb{P}^{\Temp}[#1]}

\newcommand{\Expect}[1]{\mathsf{E}[#1]}

\newcommand{\eps}{\varepsilon}

\newcommand{\MC}{\mathcal{M}}
\newcommand{\MCStates}{\mathcal{X}}
\newcommand{\trans}{\delta}
\newcommand{\Walk}{\rho}

\newcommand{\Fitness}{\mathsf{F}}
\newcommand{\Degree}{\mathsf{deg}}
\newcommand{\Weight}{\mathsf{w}}
\newcommand{\Neighbors}{\mathsf{Nh}}
\newcommand{\Temp}{\mathsf{T}}
\newcommand{\Uniform}{\mathsf{U}}
\newcommand{\State}{\mathsf{X}}
\newcommand{\ModState}{\ov{\State}}
\newcommand{\FixProb}{\mathsf{\rho}}
\newcommand{\FixProbU}{\mathsf{f}^\Uniform}

\newcommand{\FixEv}{\mathcal{E}}
\newcommand{\Size}{\mathsf{size}}
\newcommand{\ev}{\mathcal{E}}

\newcommand{\SelfLoop}{\mathsf{sl}}
\newcommand{\Graphseq}{\mathcal{G}}
\newcommand{\Diameter}{\mathrm{diam}}
\newcommand{\SpanTree}{\mathcal{T}}

\newcommand{\Sink}{\mathcal{H}}

\newcommand{\Frontier}{\mathcal{F}}
\newcommand{\Distance}{\lambda}
\newcommand{\Parent}{\mathsf{par}}
\newcommand{\Children}{\mathsf{chl}}
\newcommand{\Ancestors}{\mathsf{anc}}
\newcommand{\Descendants}{\mathsf{des}}
\newcommand{\HeterState}{\mathcal{H}}
\newcommand{\DeadState}{\mathcal{D}}
\newcommand{\coeff}{\eta}
\newcommand{\In}{\mathsf{In}}
\newcommand{\Out}{\mathsf{Out}}

\newcommand\numberthis{\addtocounter{equation}{1}\tag{\theequation}}


\sloppy

\newif\iffullproofs
\newif\ifshortproofs
\fullproofstrue 
\shortproofsfalse

\title{Strong Amplifiers of Natural Selection: Proofs}

\author
{
Andreas Pavlogiannis$^{1}$, Josef Tkadlec$^{1}$, 
Krishnendu Chatterjee$^{1}$, Martin A. Nowak$^{2}$\\
\\
\normalsize{$^{1}$IST Austria, A-3400 Klosterneuburg, Austria}\\
\normalsize{$^{2}$Program for Evolutionary Dynamics, Department of  Organismic and Evolutionary Biology,}\\
\normalsize{Department of Mathematics, Harvard University, Cambridge, MA 02138, USA}\\
\\
}


\date{}

\begin{document}
\maketitle

\begin{abstract}
We consider the modified Moran process on graphs to study the spread of genetic and cultural mutations on structured populations. An initial mutant arises either spontaneously (aka \emph{uniform initialization}), or during reproduction (aka \emph{temperature initialization}) in a population of $n$ individuals, and has a fixed fitness advantage $r>1$ over the residents of the population. The fixation probability is the probability that the mutant takes over the entire population. Graphs that ensure fixation probability of~1 in the limit of infinite populations are called  \emph{strong amplifiers}. Previously, only a few examples of strong amplifiers were known for uniform initialization, whereas no strong amplifiers were known for temperature initialization.

In this work, we study necessary and sufficient conditions for strong amplification, and prove negative and positive results. We show that for temperature initialization, graphs that are unweighted and/or self-loop-free have fixation probability upper-bounded by $1-1/f(r)$, where $f(r)$ is a function linear in $r$. Similarly, we show that for uniform initialization, bounded-degree graphs that are unweighted and/or self-loop-free have fixation probability upper-bounded by $1-1/g(r,c)$, where $c$ is the degree bound and $g(r,c)$ a function linear in $r$. Our main positive result complements these negative results, and is as follows:  every family of undirected graphs with (i)~self loops and (ii)~diameter bounded by $n^{1-\epsilon}$, for some fixed $\epsilon>0$, can be assigned weights that makes it a strong amplifier, both for uniform and temperature initialization.
\end{abstract}

\tableofcontents

\section{Introduction}\label{sec:intro}

\noindent{\em The Moran process.}
Evolutionary dynamics study the change of population over time under the effect of natural selection and random drift~\cite{Nowak06b}.
The Moran process~\cite{Moran1962} is an elegant stochastic model for the rigorous study of how mutations spread in a population.
Initially, a population of $n$ individuals, called the residents, exists in a homogeneous state, and a random individual becomes mutant.
The mutants are associated with a fitness advantage $r\geq 1$, whereas the residents have fitness normalized to~1.
The Moran process is a discrete-time stochastic process, described as follows.
In every step, a single individual is chosen for reproduction with probability proportional to its fitness.
This individual produces a single offspring (a copy of itself), which replaces another individual chosen uniformly at random from the population.
The main quantity of interest is the \emph{fixation probability} $\FixProb(n, r)$, defined as the probability that the single invading mutant will eventually take over the population. 
As typically $r$ is small (i.e., $r=1+\epsilon$, for some small $\epsilon>0$) and $n$ is large, we study the fixation probability at the limit of large populations,
i.e., $\FixProb(r)=\lim_{n\to\infty}\FixProb(n,r)$.
It is known that $\FixProb(r)=1-r^{-1}$.

\noindent{\em The Moran process on graphs.}
The standard Moran process takes place on \emph{well-mixed} populations where the reproducing individual can replace any other in the population.
However, natural populations have spatial structure, where each individual has a specific set of neighbors, and mutation spread must respect this structure.
Evolutionary graph theory represents spatial structure as a (generally weighted, directed) graph, where each individual occupies a vertex of the graph, and edges define interactions between neighbors~\cite{Lieberman05}.
The Moran process on graphs is similar to the standard Moran process, with the exception that the offspring replaces a neighbor of the reproducing individual.
The well-mixed population is represented by the complete graph $K_n$.
If the graph is strongly connected, the Moran process is guaranteed to reach a homogeneous state where mutants either fixate or go extinct.

\noindent{\em Mutant initialization.}
The asymmetry introduced by the population structure makes the fixation probability depend on the placement of the initial mutant. 
In \emph{uniform initialization}, the initial mutant arises \emph{spontaneously}, i.e., uniformly at random on each vertex.
In \emph{temperature initialization}, the initial mutant arises\emph{during reproduction} i.e., on each vertex with probability proportional to the rate that the vertex is replaced by offspring from its neighbors.
Hence our interested is on the fixation probability $\FixProb(G_n^{\Weight}, r, Z)$ for a weighted graph $G_n^{\Weight}$ of $n$ vertices and under initialization $Z\in\{\Uniform, \Temp\}$, denoting uniform and temperature initialization, respectively.

\noindent{\em Amplifiers of selection.}
Population structure affects the fixation probability of mutants. 
An infinite family of graphs $(G_n^{\Weight})_n$ is \emph{amplifying} for initialization $Z$ if $\lim_{n\to\infty}\FixProb(G_n^{\Weight}, r, Z)>1-r^{-1}$,
Intuitively, the fitness advantage of mutants is being ``amplified'' by the structure compared to the well-mixed population.
\emph{Strong amplifying} families have $\lim_{n\to\infty}\FixProb(G_n^{\Weight}, r, Z)=1$, and hence ensure the fixation of mutants.
On the other hand, \emph{bounded amplifiers} have $\lim_{n\to\infty}\FixProb(G_n^{\Weight}, r, Z)\leq 1-1/f(r)$, where $f$ is a linear function,
and hence provide limited amplification at best.

\noindent{\em Existing results.}
The Moran process on graphs was introduced in~\cite{Lieberman05}, where several amplifying and strongly amplifying families were presented.
Under uniform initialization, the canonical example is the family of undirected Star graphs, with fixation probability $1-r^{-2}$, making it a \emph{quadratic uniform amplifier}~\cite{Lieberman05,Broom2008,Monk2014}.
Among directed graphs, strongly amplifying families are known to exist: 
(i)~Superstars and Metafunnels were already introduced in~\cite{Lieberman05}, where their strong amplifying properties were outlined, and
(ii)~more recently, the family of Megastars was rigorously proved to be a strong amplifying family~\cite{Galanis17}.
Megastars were subsequently shown to be optimal (up to logarithmic factors) wrt the rate that fixation probability converges to~1 as a function of $n$~\cite{Goldberg17}.
Among undirected graphs, the family of Stars was the best amplifying family know for a long time, and the existence of strong amplifiers was open.
Recently, undirected strong amplifiers were presented independently in~\cite{Goldberg17} and~\cite{Giakkoupis16}.
 
Under temperature initialization, the landscape is more scarce.
None of the uniform amplifiers mentioned in the previous paragraph is a temperature amplifier.
It turns out that on all those structures the mutants go extinct with high probability when the initial placement is according to temperature.
Recently, the Looping Star family was introduced in~\cite{Adlam15} and was shown to be a quadratic amplifier under both initialization schemes.
Crucially, Looping Stars contain self-loops and weights.
To our knowledge, no other temperature amplifier has been known.

\noindent{\em Our contributions.}
In this work, we study necessary and sufficient conditions for strong amplifiers, and prove negative and positive results.
\begin{compactenum}
\item Our negative results are as follows.
For temperature initialization, we show that graphs which are unweighted and/or self-loop-free have fixation probability upper-bounded by $1-1/f(r)$, where $f(r)$ is a function linear in $r$. Hence, without both weights and self-loops, there are only bounded temperature amplifiers.
Similarly, we show that for uniform initialization, bounded-degree graphs that are unweighted and/or self-loop-free have fixation probability upper-bounded by $1-1/g(r,c)$, where $c$ is the degree bound and $g(r,c)$ a function linear in $r$.
Hence, without both weights and self-loops, bounded-degree graph families are only bounded uniform amplifiers.
\item Our positive result complements these negative results and is as follows.
We show that every family of undirected graphs with (i)~self loops and (ii)~diameter bounded by $n^{1-\epsilon}$, for some fixed $\epsilon>0$, 
can be assigned weights that makes the family a strong amplifier, both for uniform and temperature initialization.
Moreover, the weight construction requires $O(n)$ time.
\end{compactenum}

Our proof techniques rely on the analysis of Markov chains, the Cauchy-Schwarz inequality, concentration bounds, stochastic domination and coupling arguments.
The weight construction in our positive result is straightforward, however proving the amplification properties of the resulting structure is more involved.

\subsection{Other Related Work}\label{subsec:other_related}
Strong amplifiers were already introduced in~\cite{Lieberman05}, however it was later shown that the fixation probability on Superstars is weaker than originally stated, and hence the heuristic argument for strong amplification cannot be made formal~\cite{Diaz13}.
In~\cite{Galanis17}, it was shown that the fixation probability on Superstars as appeared in~\cite{Lieberman05} is indeed too optimistic, by proving an upper bound on the rate that the probability can tend to~1 as a function of $n$.
A revised analysis of Superstars appeared in~\cite{Hauert}.
The work of~\cite{Pavlogiannis17} introduced the Metastars as a family of unweighted undirected graphs with better amplification properties than Stars, for specific values of the fitness advantage $r$.
Other aspects of the Moran process on graphs have also been studied in the literature.
In~\cite{Mertzios13}, the authors studied undirected suppressors of selection, which are graphs that suppress the selective advantage of mutants, as opposed to amplifying it.
Recently, a family of strong suppressors was presented~\cite{Giakkoupis16}.
The work of~\cite{Mertzios13b} studies selective amplifiers, a notion that characterizes the number of initial vertices that guarantee mutant fixation.
Randomly structured populations were shown to have no effect on fixation probability in~\cite{Adlam2014}.
Besides the fixation probability, the absorption time of the Moran process is crucial for characterizing the rate of evolution~\cite{Frean2013b}
and has been studied on various graphs~\cite{Diaz16}.
Finally, computational aspects of computing the fixation probability on graphs were studied in~\cite{CSPPL}, where the problem was shown to admit a
fully polynomial randomized approximation scheme, later improved in~\cite{Chatterjee17}.

\section{Organization}
The organization of this document is as follows: Before presenting our proofs we present the detailed description of our model and the results in Section 2.
We then present the formal notation (Section 3), the proofs of our negative results (Section 4) and the proofs of our positive results (Section 5).

\section{Model and Summary of Results}

\subsection{Model}

\smallskip\noindent{\em The birth-death Moran process.}
The \textit{Moran process} considers a population of $n$ individuals, which undergoes
reproduction and death, and each individual is either a resident or a mutant~\cite{Moran1962}.
The residents and the mutants have constant fitness 1 and $r$, respectively. 
The Moran process is a discrete-time stochastic process defined as follows: in the initial step, a single mutant is introduced 
into a homogeneous resident population. 
At each step, an individual is chosen randomly for reproduction 
with probability proportional to its fitness; another individual is chosen uniformly at random for death and is replaced by 
a new individual of the same type as the reproducing individual. 
Eventually, this Markovian process ends when all individuals become of one of the two types.
The probability of the event that all individuals become mutants is called the {\em fixation} probability. 

\smallskip\noindent{\em The Moran process on graphs.}
In general, the Moran process takes place on a population structure, which is represented
as a graph.
The vertices of the graph represent individuals and edges represent interactions between individuals \cite{Lieberman05, Nowak06b}. 
Formally, let $G_n=(V_n,E_n,W_n)$ be a weighted, directed graph, 
where $V_n=\{1,2,\ldots,n\}$ is the vertex set , $E_n$ is the Boolean edge matrix, and $W_n$ is a stochastic weight matrix.
An edge is a pair of vertices $(i,j)$ which is indicated by $E_n[i,j]=1$ and denotes that there is an interaction from $i$ to $j$
(whereas we have $E_n[i,j]=0$ if there is no interaction from $i$ to $j$).
The stochastic weight matrix $W_n$ assigns weights to interactions, i.e., $W_n[i,j]$ is positive iff $E_n[i,j]=1$,
and for all $i$ we have $\sum_{j} W_n[i,j]=1$.
For a vertex $i$, we denote by $\In(i)=\{ j \mid E_n[j,i]=1\}$ (resp., $\Out(i)=\{ j \mid E_n[i,j]=1\}$) 
the set of vertices that have incoming (resp., outgoing) interaction or edge to (resp., from) $i$.
Similarly to the Moran process, at each step an individual is chosen randomly for reproduction with probability proportional to its fitness.
An edge originating from the reproducing vertex is selected randomly with probability equal to its weight. 
The terminal vertex of the chosen edge takes on the type of the vertex at the origin of the edge.
In other words, the stochastic matrix $W_n$ is the weight matrix that represents the choice probability 
of the edges.
We only consider graphs which are {\em connected}, i.e., every pair of vertices is connected by a path.
This is a sufficient condition to ensure that in the long run, the Moran process reaches a homogeneous state
(i.e., the population consists entirely of individuals of a single type).
\iffullproofs
See Figure~\ref{fig:moran} for an illustration.
\fi
The well-mixed population is represented by a complete graph where all edges 
have equal weight of $1/n$.

\iffullproofs
\begin{figure}
\centering
\setlength{\unitlength}{\textwidth} 
\begin{picture}(1,0.875)
\put(0.05,0){\includegraphics[width=0.9\unitlength]{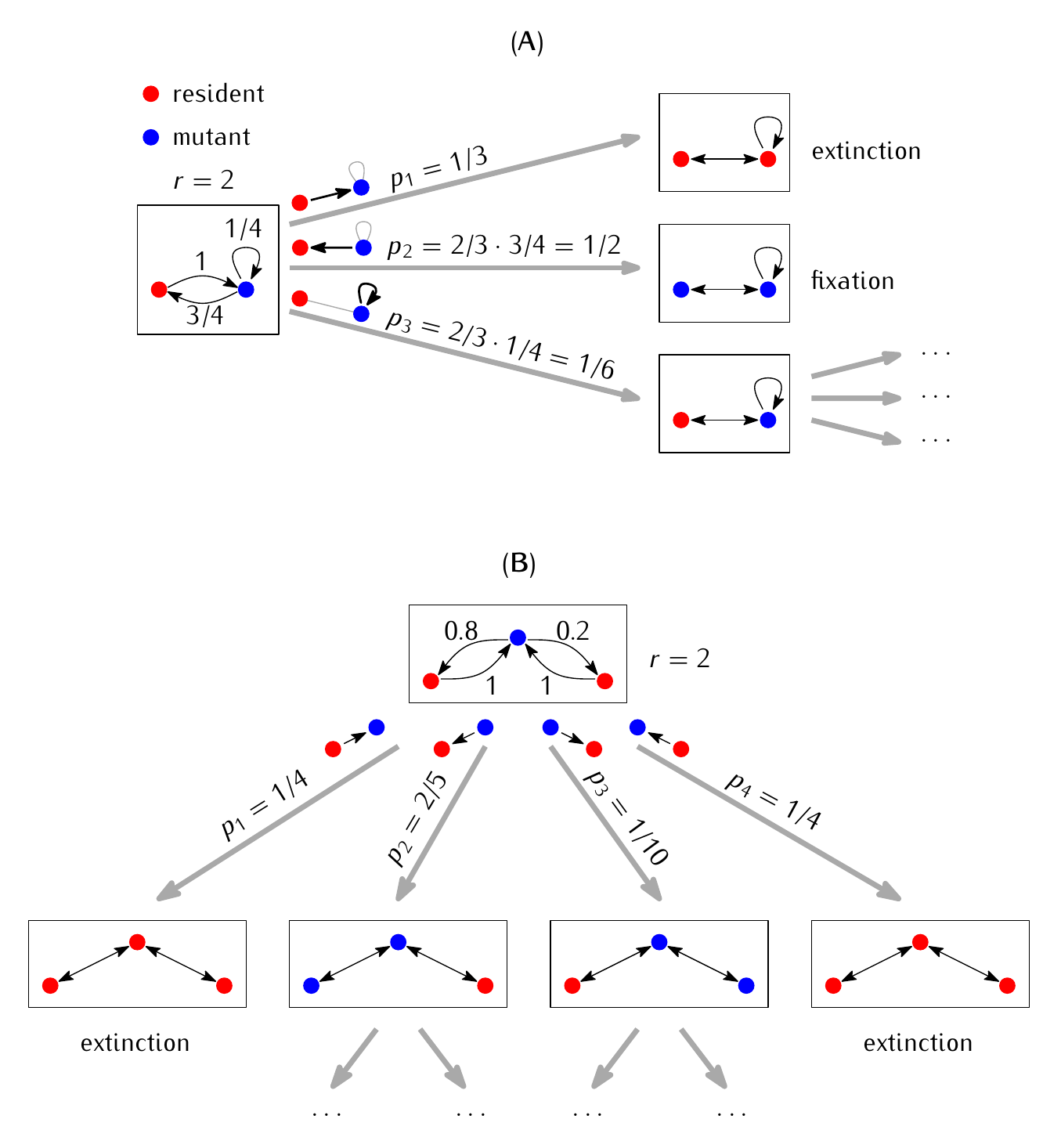}}
\end{picture}
\caption{
Illustration of one step of the Moran process on a 
weighted graph with self-loops. Residents are depicted as red vertices, and 
mutants as blue vertices. 
As a concrete example, we consider the relative fitness of the mutants 
is $r=2$.
In Figure~\ref{fig:moran}(A), the total fitness of the population is $\mathcal{F}=1+2=3$, and hence the probability of selecting resident (resp., mutant) for reproduction equals $1/3$ (resp., $2/3$). 
The mutant reproduces along an edge, and the edge is chosen randomly 
proportional to the edge weight.
Figure~\ref{fig:moran}(B) shows that different reproduction events might 
lead to the same outcome.
}
\label{fig:moran}
\end{figure}
\fi

\smallskip\noindent{\em Classification of graphs.}
We consider the following classification of graphs:
\begin{enumerate}
\item {\em Directed vs undirected graphs.}
A graph $G_n=(V_n,E_n,W_n)$ is called {\em undirected} if for all $1\leq i,j \leq n$
we have $E_n[i,j]=E_n[j,i]$. 
In other words, there is an edge from $i$ to $j$ iff there is an edge from $j$ to $i$,
which represents symmetric interaction.
If a graph is not undirected, then it is called a {\em directed} graph.

\item {\em Self-loop free graphs.}
A graph $G_n=(V_n,E_n,W_n)$ is called a {\em self-loop free} graph iff for all 
$1 \leq i \leq n$ we have $E_n[i,i]=W_n[i,i]=0$.

\item {\em Weighted vs unweighted graphs.}
A graph $G_n=(V_n,E_n,W_n)$ is called an {\em unweighted} graph if for all 
$1 \leq i \leq n$ we have 
\[
W_n[i,j]=
\begin{cases}
\frac{1}{|\Out(i)|}  & j \in \Out(i);\\
0 & j \not\in \Out(i)
\end{cases}
\]
In other words, in unweighted graphs for every vertex the edges are choosen 
uniformly at random. Note that for unweighted graphs the weight matrix is not
relevant, and can be specified simply by the graph structure $(V_n,E_n)$.
In the sequel, we will represent unweighted graphs as $G_n=(V_n,E_n)$.

\item {\em Bounded degree graphs.}
The degree of a graph $G_n=(V_n,E_n,W_n)$, denoted $\deg(G_n)$, is 
$\max\{ \In(i),\Out(i) \mid 1 \leq i \leq n\}$, i.e., the maximum in-degree 
or out-degree. 
For a family of graphs $(G_n)_{n > 0}$ we say that the family has bounded degree, 
if there exists a constant $c$ such that the degree of all graphs in the family
is at most $c$, i.e., for all $n$ we have $\deg(G_n) \leq c$. 
\end{enumerate}

\smallskip\noindent{\em Initialization of the mutant.}
The fixation probability is affected by many different factors \cite{Patwa2008}. 
In a well-mixed population, the fixation probability depends on the population size $n$ and 
the relative fitness advantage $r$ of mutants~\cite{Maruyama1974a, Nowak06b}. 
For the Moran process on graphs, the fixation probability also depends on 
the population structure, which breaks the symmetry and homogeneity 
of the well-mixed population
\cite{levin1974disturbance, levin1976population, durrett1994stochastic, Lieberman05, Broom2008, Frean2013, Whitlock2003, Houchmandzadeh2011}. 
Finally, for general population structures, the fixation probability typically depends on the initial 
location of the mutant \cite{Allen2014, Antal2006}, unlike the well-mixed population 
where the probability of the mutant fixing is independent of where the mutant 
arises \cite{Maruyama1974a, Nowak06b}. 
There are two standard ways mutants may arise in a population~\cite{Lieberman05,Adlam15}.
First, mutants may arise spontaneously and with equal probability at any vertex of the population structure. 
In this case we consider that the mutant arise at any vertex uniformly at random and 
we call this \textit{uniform initialization}. 
Second, mutants may be introduced through reproduction, and thus arise at a vertex with rate proportional to the incoming
edge weights of the vertex.
We call this \textit{temperature initialization}. 
In general, uniform and temperature initialization result in different fixation probabilities.

\smallskip\noindent{\em Amplifiers, quadratic amplifiers, and strong amplifiers.}
Depending on the initialization, a population structure can distort fitness differences \cite{Lieberman05, Nowak06b, Broom2008}, 
where the well-mixed population serves as a canonical point of comparison. 
Intuitively, amplifiers of selection exaggerate variations in fitness by increasing 
(respectively decreasing) the chance of fitter (respectively weaker) mutants fixing 
compared to their chance of fixing in the well-mixed population. 
In a well-mixed population of size $n$, the fixation probability is 
\[
\frac{1-1/r}{1-(1/r)^n}.
\]
Thus, in the limit of large population (i.e., as $n \to \infty$) 
the fixation probability in a well-mixed population is $1-1/r$.
We focus on two particular classes of amplifiers that are of special interest.
A family of graphs $(G_n)_{n>0}$ is a {\em quadratic} amplifier if in the limit
of large population the fixation probability is $1-1/r^2$.
Thus, a mutant with a 10\% fitness advantage over the resident has approximately the 
same chance of fixing in quadratic amplifiers as a mutant with a 21\% fitness advantage 
in the well-mixed population.
A family of graphs $(G_n)_{n> 0}$ is an {\em arbitrarily strong} amplifier (hereinafter called simply a strong amplifier) if
for any constant $r> 1$ the fixation probability approaches~1 at the limit of large population sizes,
whereas when $r<1$, the fixation probability approaches~0.
There is a much finer classification of amplifiers presented in~\cite{Adlam15}. 
We focus on quadratic amplifiers which are the most well-known among polynomial
amplifiers, and strong amplifiers which represent the strongest
form of amplification.

Amplifiers tend to have fixation times longer than the well mixed population. Therefore they are  especially useful in situations where the rate limiting step is the discovery and evaluation of marginally advantageous mutants. An interesting direction for future work would be to consider amplifiers as well as the time-scale of evolutionary trajectories.

\smallskip\noindent{\em Existing results.} 
We summarize the main existing results in terms of uniform and temperature initialization.
\begin{enumerate}
\item {\em Uniform initialization.} 
First, consider the family of Star graphs, which consist of one central vertex and $n-1$ leaf vertices,
with each leaf being connected to and from the central vertex.
Star graphs are unweighted, undirected, self-loop free graphs, whose degree is linear in the population size.
Under uniform initialization, the family of Star graphs is a 
quadratic amplifier~\cite{Lieberman05, Nowak06b}.
A generalization of Star graphs, called Superstars~\cite{Lieberman05,Nowak06b,Hauert,CSPPL}, are known to be 
strong amplifiers under uniform initialization~\cite{Galanis17}. 
The Superstar family consists of unweighted, self-loop free, 
but directed graphs where the degree is linear in the population size.
Another family of directed graphs with strong amplification properties, called Megastars, was recently introduced in~\cite{Galanis17}.
The Megastars are stronger amplifiers than the Superstars, as the fixation probability on the former is a approximately $1-n^{-1/2}$ (ignoring logarithmic factors),
and is asymptotically optimal (again, ignoring logarithmic factors).
In contrast, the fixation probability on the Superstars is approximately $1-n^{-1/2}$.
In the limit of $n\to \infty$, both families approach the fixation probability~1.

\item {\em Temperature initialization.} 
While the family of Star graphs is a quadratic amplifier under uniform 
initialization, it is not even an amplifier under temperature 
initialization~\cite{Adlam15}. 
It was shown in~\cite{Adlam15} that by adding self-loops and weights to the edges 
of the Star graph, a graph family, namely the family of Looping Stars, can be constructed, 
which is a quadratic amplifier simultaneously under temperature and uniform initialization.
Note that in contrast to Star graphs, the Looping Star graphs are weighted and also have self-loops.

\end{enumerate}

\smallskip\noindent{\em Open questions.} 
Despite several important existing results on amplifiers of selection, several basic questions have remained open:
\begin{enumerate}
\item {\em Question~1.} Does there exist a family of self-loop free graphs
(weighted or unweighted) that is a quadratic amplifier under temperature 
initialization? 

\item {\em Question~2.} Does there exist a family of unweighted graphs 
(with or without self-loops) that is a quadratic amplifier under temperature
initialization?

\item {\em Question~3.} Does there exist a family of bounded degree 
self-loop free (weighted or unweighted) graphs that is a strong amplifier under 
uniform initialization? 

\item {\em Question~4.} Does there exist a family of bounded degree unweighted graphs 
(with or without self-loops) that is a strong amplifier under uniform
initialization?

\item {\em Question~5.} Does there exist a family of graphs that is  
a strong amplifier under temperature initialization?
More generally, does there exist a family of graphs that is  
a strong amplifier both under temperature and uniform initialization?

\end{enumerate}
To summarize, the open questions ask for (i)~the existence of quadratic amplifiers
under temperature initialization without the use of self-loops, or weights 
(Questions~1 and~2); (ii)~the existence of strong amplifiers
under uniform initialization without the use of self-loops, or weights, and while the degree of the graph is small;
 and (iii)~the existence of strong amplifiers under temperature initialization.
While the answers to Question~1 and Question~2 are positive under uniform initialization, 
they have remained open under temperature initialization.
Questions~3 and~4 are similar to~1 and~2, but focus on uniform rather than temperature initialization.
The restriction on graphs of bounded degree is natural: large degree means that some individuals 
must have a lot of interactions, whereas graphs of bounded degree represent simple structures.
Question~5 was mentioned as an open problem in~\cite{Adlam15}.
Note that under temperature initialization, even the existence of a cubic 
amplifier, that achieves fixation probability at least $1-(1/r^3)$ in the 
limit of large population, has been open~\cite{Adlam15}.

\subsection{Results}
In this work we present several negative as well as positive results that answer
the open questions (Questions~1-5) mentioned above. 
We first present our negative results.

\smallskip\noindent{\em Negative results.} 
Our main negative results are as follows:
\begin{enumerate}
\item 
Our first result (Theorem~1) shows that for any self-loop free weighted graph 
$G_n=(V_n,E_n,W_n)$, for any $r\geq 1$, under temperature initialization the 
fixation probability is at most $1-1/(r+1)$. 
The implication of the above result is that it answers Question~1 in negative.

\item 
Our second result (Theorem~2) shows that for any unweighted (with or 
without self-loops) graph $G_n=(V_n,E_n)$, for any $r \geq 1$, under temperature 
initialization the fixation probability is at most $1-1/(4r+2)$. 
The implication of the above result is that it answers Question~2 in negative.

\item 
Our third result (Theorem~3) shows that for any bounded degree 
self-loop free graph (possibly weighted) $G_n=(V_n,E_n,W_n)$, for any $r \geq 1$, 
under uniform initialization the fixation probability is at most $1-1/(c+c^2r)$,
where $c$ is the bound on the degree, i.e., $\deg(G_n) \leq c$. 
The implication of the above result is that it answers Question~3 in negative.

\item 
Our fourth result (Theorem~4) shows  that for any unweighted, bounded degree 
graph (with or without self-loops) $G_n=(V_n,E_n)$, for any $r \geq 1$, 
under uniform initialization the fixation probability is at most $1-1/(1+r c)$,
where $c$ is the bound on the degree, i.e., $\deg(G_n) \leq c$. 
The implication of the above result is that it answers Question~4 in negative.

\end{enumerate}

\smallskip\noindent{\em Significance of the negative results.}
We now discuss the significance of the above results. 
\begin{enumerate}
\item The first two negative results show that in order to obtain quadratic amplifiers 
under temperature initialization, self-loops and weights are inevitable, complementing 
the existing results of~\cite{Adlam15}. 
More importantly, it shows a sharp contrast between temperature and uniform 
initialization: while self-loop free, unweighted graphs (namely, Star graphs) are quadratic
amplifiers under uniform initialization, no such graph families are quadratic
amplifiers under temperature initialization.

\item The third and fourth results show that without using self-loops and weights, bounded degree graphs 
cannot be made strong amplifiers even under uniform initialization.
See also \cref{rem:bounded_degree_uniform}.
\end{enumerate}

\smallskip\noindent{\em Positive result.}
Our main positive result shows the following:
\begin{enumerate}
\item For any constant $\epsilon>0$, consider any connected unweighted graph $G_n=(V_n,E_n)$ 
of $n$ vertices with self-loops and which has {\em diameter} at most $n^{1-\epsilon}$.
The diameter of a connected graph is the maximum, among all pairs of vertices, of 
the length of the shortest path between that pair.
We establish (Theorem~5) that there is a stochastic weight matrix $W_n$ 
such that for any $r>1$ the fixation probability on $G_n=(V_n,E_n,W_n)$
both under uniform and temperature initialization is at least $1-\frac{1}{n^{\epsilon/3}}$.
An immediate consequence of our result is the following: 
for any family of connected unweighted graphs with self-loops $(G_n=(V_n,E_n))_{n> 0}$ 
such that the diameter of $G_n$ is at most 
$n^{1-\epsilon}$, for a constant $\epsilon>0$, one can construct a stochastic weight 
matrix $W_n$ such that the resulting family $(G_n=(V_n,E_n,W_n))_{n > 0}$ of 
weighted graphs is a strong amplifier simultaneously under uniform 
and temperature initialization.
Thus we answer Question~5 in affirmative.
\end{enumerate}

\smallskip\noindent{\em Significance of the positive result.}
We highlight some important aspects of the results established in this work.
\begin{enumerate}
\item First, note that for the fixation probability of the Moran process on graphs to be well defined,
a necessary and sufficient condition is that the graph is connected.
A uniformly chosen random connected unweighted graph of $n$ vertices has diameter bounded by a constant,
with high probability.
Hence, within the family of connected, unweighted graphs,
the family of graphs of diameter at most $O(n^{1-\epsilon})$, for any constant $0<\epsilon<1$,
has probability measure~1.
Our results establish a strong dichotomy: (a)~the negative results
state that without self-loops and/or without weights, {\em no} family of 
graphs can be a quadratic amplifier (even more so a strong amplifier) 
even for only temperature initialization; and 
(b)~in contrast, for {\em almost all} families of connected graphs with 
self-loops, there exist weight functions such that the resulting family 
of weighted graphs is a strong amplifier both under temperature 
and uniform initialization.
 
\item Second, with the use of self-loops and weights, even simple 
graph structures, such as Star graphs, Grids, and well-mixed structures (i.e., complete graphs) can be made strong amplifiers.

\item Third, our positive result is constructive, rather than existential.
In other words, we not only show the existence of 
strong amplifiers, but present a construction of them.


\end{enumerate} 

Our results are summarized in \cref{table}.

\begin{remark}
\smallskip\noindent{\em Edges with zero weight.}
Note that edges can be effectively removed by being assigning zero weight (however, no weight assignment can create edges that don't exist.) Therefore, when our construction works for some graph, it also works for a graph that contains some additional edges. In particular, our construction easily works for complete graphs. The construction can also be extended to a scenario in which we insist that each edge is assigned a positive (non-zero) weight.
\end{remark}

\begin{table}
\begin{center}
\begin{tabular}{|c|c|c|c|c|}
\hline
  & \multicolumn{2}{|c|}{Temperature}  & \multicolumn{2}{|c|}{Uniform$^\star$} \\ 
\cline{2-5} 
 & Loops & No Loops  & Loops & No Loops \\
\hline
Weights & $\checkmark$ & $\times$ & $\checkmark$& $\times$ \\
\hline
No Weights & $\times$ & $\times$ & $\times$ & $\times$ \\
\hline
\end{tabular}
\end{center}
\caption{
Summary of  our results on existence of strong 
amplifiers for different initialization schemes (temperature initialization 
or uniform initialization) and graph families (presence or absence of loops 
and/or weights). The ``$\checkmark$'' symbol marks that for given choice 
of initialization scheme and graph family, almost all graphs admit a weight 
function that makes them strong amplifiers. 
The ``$\times$'' symbol marks that for given choice of initialization scheme 
and graph family, no strong amplifiers exist (under any weight function). 
The asterisk signifies that the negative results under uniform initialization 
only hold for bounded degree graphs.
}
\label{table}
\end{table}

\section{Preliminaries: Formal Notation}\label{sec:preliminaries}

\subsection{The Moran Process on Weighted Structured Populations}\label{subsec:moran_process}
We consider a population of $n$ individuals on a graph $G_n=(V_n,E_n,W_n)$.
Each individual of the population is either a \emph{resident}, or a \emph{mutant}.
Mutants are associated with a \emph{reproductive rate} (or \emph{fitness}) $r$, whereas the reproductive rate of residents is normalized to $1$.
Typically we consider the case where $r>1$, i.e., mutants are \emph{advantageous}, whereas when $r<1$ we call the mutants \emph{disadvantageous}.
We now introduce the formal notation related to the process.

\smallskip\noindent{\em Configuration.}
A \emph{configuration} of $G_n$ is a subset $S \subseteq V$ which specifies the vertices of $G_n$ 
that are occupied by mutants and thus the remaining vertices $V\setminus S$ are occupied by residents.
We denote by $\Fitness(S)=r\cdot |S| + n-|S|$  the 
total fitness of the population in configuration $S$, where $|S|$ is the number of mutants in $S$.

\smallskip\noindent{\em The Moran process.}
The birth-detah Moran process on $G_n$ is a discrete-time Markovian random process.
We denote by $\State_i$ the random variable for a configuration at time step $i$,
and $\Fitness(\State_i)$ and $|\State_i|$ denote the total fitness and the number of
mutants of the corresponding configuration, respectively.
The probability distribution for the next configuration $\State_{i+1}$ at time $i+1$
is determined by the following two events in succession: 
\begin{compactdesc}
\item[{\em Birth:}] One individual is chosen at random to reproduce, with probability proportional to its fitness.
That is, the probability to reproduce is $r/\Fitness(\State_i)$ for a mutant, and $1/\Fitness(\State_i)$ for a resident. 
Let $u$ be the vertex occupied by the reproducing individual.
\item[{\em Death:}] A neighboring vertex $v\in \Out(u)$ is chosen randomly with probability $W_n[u,v]$. 
The individual occupying $v$ dies, and the reproducing individual places a copy of its own on $v$.
Hence, if $u\in \State_i$, then $\State_{i+1}=\State_i\cup \{v\}$, otherwise $\State_{i+1}=\State_i\setminus\{v\}$.
\end{compactdesc}
The above process is known as the \emph{birth-death} Moran process, where the death event is conditioned on the birth event,
and the dying individual is a neighbor of the reproducing one.

\smallskip\noindent{\em Probability measure.}
Given a graph $G_n$ and the fitness $r$, the birth-death Moran process defines 
a probability measure on sequences of configurations, which we denote as 
$\Prb^{G_n,r}[\cdot]$. 
If the initial configuration is $\{u\}$, then we define the probability measure
as $\Prb^{G_n,r}_u[\cdot]$, and if the graph and fitness $r$ is clear from the 
context, then we drop the superscript. 

\smallskip\noindent{\em Fixation event.}
The fixation event, denoted $\FixEv$, represents that all vertices are mutants,
i.e., $\State_i=V$ for some $i$.
In particular, $\Prb^{G_n,r}_u[\FixEv]$ denotes the fixation probability in $G_n$
for fitness $r$ of the mutant, when the initial mutant is placed on vertex $u$.
We will denote this fixation probability as $\FixProb(G_n,r,u)=\Prb^{G_n,r}_u[\FixEv]$.
 
\subsection{Initialization and Fixation Probabilities}
We will consider three types of initialization, namely, (a)~uniform initialization, where
the mutant arises at vertices with uniform probability, (b)~temperature initialization,
where the mutant arises at vertices proportional to the temperature, and (c)~convex combination 
of the above two.

\smallskip\noindent{\em Temperature.} 
For a weighted graph $G_n=(V_n,E_n,W_n)$, the temperature of a vertex $u$, denoted 
$\Temp(u)$, is $\sum_{v \in \In(u)} W_n[v,u]$, i.e., the sum of the incoming weights.
Note that $\sum_{u \in V_n} \Temp(u)=n$, and a graph is {\em isothermal} iff $\Temp(u)=1$ 
for all vertices $u$.

\smallskip\noindent{\em Fixation probabilities.}
We now define the fixation probabilities under different initialization.

\begin{compactenum}
\item {\em Uniform initialization.} The fixation probability under uniform initialization is
\[
\FixProb(G_n,r,\Uniform) =\sum_{u \in V_n} \frac{1}{n} \cdot \FixProb(G_n,r,u).
\]

\item {\em Temperature initialization.} The fixation probability under temperature initialization is
\[
\FixProb(G_n,r,\Temp) =\sum_{u \in V_n} \frac{\Temp(u)}{n} \cdot \FixProb(G_n,r,u). 
\]

\item {\em Convex initialization.} In $\coeff$-\emph{convex initialization}, where $\coeff\in[0,1]$, 
the initial mutant arises with probability $(1-\coeff)$ via uniform initialization, and with probability $\coeff$ 
via temperature initialization.
The fixation probability is then 
\[
\FixProb(G_n,r,\coeff) =(1-\coeff) \cdot \FixProb(G_n,r,\Uniform) + \coeff \cdot \FixProb(G_n,r,\Temp).
\]
\end{compactenum}


\subsection{Strong Amplifier Graph Families}

A \emph{family} of graphs $\Graphseq$ is an infinite sequence of weighted graphs $\Graphseq=(G_n)_{n\in \NatsPlus}$.
\begin{compactitem}
\item {\em Strong amplifiers.} 
A family of graphs $\Graphseq$ is a \emph{strong uniform amplifier} (resp. \emph{strong temperature amplifier}, \emph{strong convex amplifier}) 
if for every fixed $r_1>1$ and $r_2<1$ we have that
\[
\liminf_{n\to\infty} \FixProb(G_n,r_1,Z) = 1 \qquad \text{and} \qquad \limsup_{n\to\infty} \FixProb(G_n,r_2,Z) = 0\ ;
\]
where $Z=\Uniform$ (resp., $Z=\Temp$, $Z=\coeff$).
\end{compactitem}
Intuitively, strong amplifiers ensures (a)~fixation of advantageous mutants with probability~1 and 
(b)~extinction of disadvantageous mutants with probability~1.
In other words, strong amplifiers represent the strongest form of amplifiers possible.

\section{Negative Results}\label{sec:bounded_temp}
In the current section we present our negative results, which show the nonexistence of strong amplifiers in 
the absence of either self-loops or weights.
In our proofs, we consider weighted graph $G_n=(V_n,E_n,W_n)$, and for notational simplicity we drop
the subscripts from vertices, edges and weights, i.e., we write $G_n=(V,E,W)$.
We also consider that $G_n$ is connected and $n\geq 2$.
Throughout this section we will use a technical lemma, which we present below.
Given a configuration $\State_i=\{u\}$ with one mutant,
let $x$ and $y$ be the probability that in the next configuration the mutants increase and go extinct, respectively.
The following lemma bounds the fixation probability $\FixProb(G_n,r,u)$ as a function of $x$ and $y$.

\begin{lemma}\label{lemm:technical}
Consider a vertex $u$ and the initial configuration $\State_0=\{u\}$ where the
initial mutant arises at vertex $u$.
For any configuration $\State_i=\{u\}$, let 
\[
x=\Prb^{G_n,r}[|\State_{i+1}|=2 \mid \State_i=\{u\}]
\qquad \text{and} \quad
y=\Prb^{G_n,r}[|\State_{i+1}|=0 \mid \State_i=\{u\}]\ .
\]
be the probability that the number of mutants increases (or decreases) in a single step.
Then the fixation probability from $u$ is at most $x/(x+y)$,
i.e., 
\[
\FixProb(G_n,r,u) \leq \frac{x}{x+y} = 1 -\frac{y}{x+y}\ .
\]
\end{lemma}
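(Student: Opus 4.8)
The plan is to use a one-step (first-step) analysis, exploiting the fact that, starting from the single-mutant configuration $\State_0=\{u\}$, the only configurations reachable in one step have $0$, $1$, or $2$ mutants, and moreover the \emph{only} $1$-mutant configuration reachable is $\{u\}$ itself. Indeed, a single birth-death step changes the type of exactly one vertex, so starting from $\{u\}$ the mutant count can increase to $2$ (when $u$ reproduces onto a resident neighbor), drop to $0$ (when a resident reproduces onto $u$), or stay at $1$ (when a resident replaces another resident, or $u$ reproduces along a self-loop); in this last case the configuration is again exactly $\{u\}$. This structural observation is what makes the argument clean, since it lets the process ``restart'' from the identical state.

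Given this, I would set $p=\FixProb(G_n,r,u)=\Prb^{G_n,r}_u[\FixEv]$ and condition on the outcome of the first step. With probability $x$ the process moves to some $2$-mutant configuration, with probability $y$ it goes extinct, and with the remaining probability $1-x-y$ it returns to $\{u\}$. By the Markov property the continuation from $\{u\}$ again has fixation probability $p$, while extinction contributes $0$. Writing $\bar q$ for the fixation probability averaged over the $2$-mutant configurations reached in the first step, this yields
\[
p = x\cdot \bar q + y\cdot 0 + (1-x-y)\cdot p,
\]
which rearranges to $(x+y)\,p = x\,\bar q$, hence $p=\frac{x}{x+y}\,\bar q$. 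Since $\bar q\le 1$, we conclude $p\le \frac{x}{x+y}=1-\frac{y}{x+y}$, which is exactly the claimed bound; note the inequality (rather than equality) arises precisely from bounding $\bar q$ by $1$.

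Before invoking the formula I would check that $x+y>0$ so that the expression is well defined: since $G_n$ is connected with $n\ge 2$, the vertex $u$ has an out-neighbor $v\neq u$, which is necessarily a resident, so with positive probability $u$ reproduces along the edge to $v$ and the count reaches $2$; thus $x>0$. The main (essentially only) obstacle is justifying the recursion rigorously, i.e.\ confirming that the ``stay at one mutant'' event returns the process to the \emph{same} state $\{u\}$ rather than some other $1$-mutant state, so that the strong Markov property legitimately reinserts the same $p$ on the continuation. An equivalent and perhaps cleaner way to package this is a geometric escape (gambler's-ruin) argument: the process repeatedly returns to $\{u\}$ until it first leaves the $1$-mutant level, reaching $2$ mutants with probability $x/(x+y)$ and going extinct with probability $y/(x+y)$; fixation then requires both escaping upward and subsequently fixating from the $2$-mutant configuration, giving the same bound.
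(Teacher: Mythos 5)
Your proposal is correct and takes essentially the same route as the paper: both rest on the observation that the only one-mutant configuration reachable from $\{u\}$ is $\{u\}$ itself, so the process behaves as a lazy three-state random walk at the single-mutant level, and the probability of reaching two mutants before extinction is $x/(x+y)$. The paper invokes the standard absorption-probability formula for this three-state chain where you derive it via a first-step recursion, but these are the same argument; your additional checks (that $x>0$ by connectedness, and that the inequality comes from bounding the continuation probability $\bar q$ by $1$) are sound.
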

\ifshortproofs
\begin{proof}
Let's focus on the first event that changes the number of mutants. The probability that this event decreases the number of mutants equals $\frac{y}{x+y}$. In such case, the mutants have gone extinct, hence the extinction probability is at least $\frac{y}{x+y}$ and the fixation probability is at most $\frac{x}{x+y}$. \end{proof}
\fi
\iffullproofs
\begin{proof}
We upperbound the fixation probability $\FixProb(G_n,r,u)$ starting from $u$ by the probability that a configuration 
$\State_t$ is reached with $|\State_t|=2$. 
Note that to reach fixation the Moran process must first reach a configuration with at least two mutants.
We now analyze the probability to reach at least two mutants.
This is represented by a three-state one dimensional random walk, where two states are absorbing, 
one absorbing state represents a configuration with two mutants, and the other absorbing state represents 
the extinction of the mutants, and the bias towards the absorbing state representing two mutants is $x/y$.
See \cref{fig:three_state_rw} for an illustration.
Using the formulas for absorption probability in one-dimensional three-state Markov chains (see, e.g.,~\cite{Kemeny12},~\cite[Section~6.3]{Nowak06b}), 
we have the probability that a configuration with two mutants is reached is 
\[
\frac{1 -(x/y)^{-1}}{1-(x/y)^{-2}}= \frac{1}{1 + (x/y)^{-1}}= \frac{x}{x+y}\ .
\]
Hence it follows that $\FixProb(G_n,r,u) \leq 1-\frac{y}{x+y}$.
\end{proof}
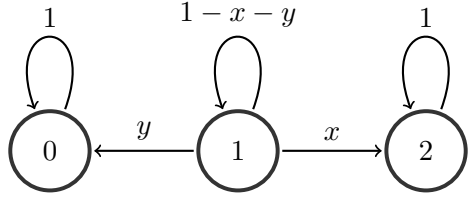
\begin{figure}[!h]
\centering
\begin{tikzpicture}[thick,
pre/.style={<-,shorten >= 1pt, shorten <=1pt, thick,  bend angle = 20},
post/.style={->,shorten >= 1pt, shorten <=1pt,  thick,  bend angle = 20},
every loop/.style={<-,shorten >= 1pt, shorten <=1pt, thick, auto, in=110,out=70, looseness=8, min distance=14mm},
state/.style={circle,draw=black!80, line width=1.5pt, inner sep=2pt,minimum size=30pt},
target state/.style={circle, draw=black!80, line width = 1.2pt, minimum size = 25pt},
virt/.style={circle,draw=white!50,fill=white!20,thick, inner sep=2pt,minimum size=30pt},
node distance=2.5cm]

\newcommand{\xdisp}{2.7}
\newcommand{\ydisp}{1.5}
\newcommand{\bend}{20}
\newcommand{\bendtwo}{45}

\node [state] (0) at (0,0)	{$0$};
\node [state, right of=0]	(1)	{$1$};
\node [state, right of=1]	(2)	{$2$};

\draw[post,] (1) to node[auto, above]	{$y$} (0);
\draw[post,] (1) to node[auto, above]	{$x$} (2);

\draw[post, loop above] (1) to node[auto, above]	{$1-x-y$} (0);
\draw[post, loop above] (0) to node[auto, above]	{$1$} (0);
\draw[post, loop above] (2) to node[auto, above]	{$1$} (2);

\end{tikzpicture}
\caption{Illustration of the Markov chain of \cref{lemm:technical}.}
\label{fig:three_state_rw}
\end{figure}
\fi

\subsection{Negative Result~1}\label{subsec:bounded_loopless_temperatue}
We now prove our negative result~1.

\smallskip
\begin{theorem}\label{them:loopless_bounded_temp}
For all self-loop free graphs $G_n$ and for every $r\geq 1$ we have  
$\FixProb(G_n,r,\Temp)\leq 1-1/(r+1)$.
\end{theorem}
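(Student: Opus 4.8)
The plan is to reduce everything to the single-mutant bound supplied by \cref{lemm:technical} and then average over the temperature distribution. Fix a vertex $u$ and take the initial configuration $\State_0=\{u\}$. I first compute the two quantities $x$ and $y$ from the lemma, i.e.\ the one-step probabilities that the mutant count rises to $2$ or drops to $0$. With a single mutant the total fitness is $\Fitness(\State_0)=r+n-1$, so $u$ reproduces with probability $r/(r+n-1)$ and each resident with probability $1/(r+n-1)$.

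The self-loop-free hypothesis is exactly what makes these two probabilities clean. When $u$ reproduces, its offspring lands on some $v\in\Out(u)$; since $u$ is the only mutant and $W[u,u]=0$, every such $v$ is a resident, so the count always increases. Hence
\[
x=\frac{r}{r+n-1}\sum_{v\in\Out(u)}W[u,v]=\frac{r}{r+n-1}.
\]
The count drops to $0$ only if some resident $w\neq u$ reproduces onto $u$, which happens with probability $\tfrac{1}{r+n-1}\sum_{w\neq u}W[w,u]$. Because $W[u,u]=0$, the restricted sum $\sum_{w\neq u}W[w,u]$ equals the full incoming-weight sum $\sum_{w\in\In(u)}W[w,u]=\Temp(u)$, so $y=\Temp(u)/(r+n-1)$. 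Plugging into \cref{lemm:technical} and canceling the common denominator gives the per-vertex bound
\[
\FixProb(G_n,r,u)\;\le\;\frac{x}{x+y}\;=\;\frac{r}{r+\Temp(u)}.
\]

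To finish, I substitute this into the definition of temperature-initialized fixation. Writing $t_u=\Temp(u)$ and using $\sum_u t_u=n$,
\[
\FixProb(G_n,r,\Temp)=\sum_{u\in V}\frac{t_u}{n}\,\FixProb(G_n,r,u)\;\le\;\frac{1}{n}\sum_{u\in V}\frac{r\,t_u}{r+t_u}.
\]
The last step is a concavity (Jensen) argument: the map $g(t)=rt/(r+t)$ satisfies $g''(t)=-2r^2/(r+t)^3<0$, so it is concave on $t\ge 0$. Since the numbers $t_u$ have average $\tfrac1n\sum_u t_u=1$, Jensen's inequality yields $\tfrac1n\sum_u g(t_u)\le g(1)=r/(r+1)$. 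Therefore $\FixProb(G_n,r,\Temp)\le r/(r+1)=1-1/(r+1)$, as claimed.

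I expect the only genuinely delicate points to be bookkeeping ones rather than conceptual: making sure that the identity $\sum_{w\neq u}W[w,u]=\Temp(u)$ uses the self-loop-free assumption in the right place (this is where the hypothesis is consumed), and confirming that $t\mapsto rt/(r+t)$ is concave so that Jensen applies against the normalized temperature sum $\sum_u t_u = n$. Everything else is a direct one-step computation of $x$ and $y$ feeding into the already-proved \cref{lemm:technical}.
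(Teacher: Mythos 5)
Your proposal is correct and follows essentially the same route as the paper: the same one-step computation of $x=r/\Fitness(\State_0)$ and $y=\Temp(u)/\Fitness(\State_0)$ feeding into \cref{lemm:technical}, the same per-vertex bound $\FixProb(G_n,r,u)\le r/(r+\Temp(u))$, and the same averaging against the temperature distribution. The only difference is the final inequality, where you apply Jensen to the concave map $t\mapsto rt/(r+t)$ at the mean temperature $1$ while the paper applies the Cauchy--Schwarz (Engel-form) inequality to $\sum_u \Temp(u)^2/(\Temp(u)+r)$; since $r t/(r+t)=t-t^2/(t+r)$ these are two equivalent ways of closing the identical estimate, and both yield $1-1/(r+1)$.
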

\ifshortproofs
\begin{proof}
Since $G_n$ is self-loop free, for all $u$ we have $W[u,u]=0$. 
Hence $ \Temp(u) = \sum_{v\in \In(u)\setminus\{u\}} W[v,u]$.
Consider the case where the initial mutant is placed on vertex $u$.
Using Lemma~\ref{lemm:technical},
a simple calculation shows that the fixation probability $\FixProb(G_n,r,u)$ 
from $u$ is at most
\[
1-\frac{\Temp(u)}{\Temp(u)+r}\ .
\]
Summing over all vertices $u$, we obtain
\[
\FixProb(G_n,r,\Temp) = \sum_{u} \frac{\Temp(u)}{n} \cdot \FixProb(G_n,r,u) \leq 
\frac{1}{n}\cdot \sum_{u} \Temp(u)\cdot \left(1-\frac{\Temp(u)}{\Temp(u)+r}\right)  \leq 1-\frac{1}{r+1}\ .
\]
where the first inequality uses the bound above, and the last inequality is obtained using the Cauchy-Schwarz inequality
in the form
$$\sum_u \frac{\Temp(u)^2}{\Temp(u)+r} \geq \frac{\left(\sum_u \Temp(u)\right)^2}{\sum_u (\Temp(u)+r)}=\frac{n}{r+1}.
$$
The desired result follows.
\end{proof}
\fi
\iffullproofs
\begin{proof}
Since $G_n$ is self-loop free, for all $u$ we have $W[u,u]=0$. 
Hence $ \Temp(u) = \sum_{v\in \In(u)\setminus\{u\}} W[v,u]$.
Consider the case where the initial mutant is placed on vertex $u$, i.e, 
$\State_0=\{u\}$.
For any configuration $\State_i=\{u\}$, we have the following:
\[
x=\Prb^{G_n,r}[|\State_{i+1}|=2 \mid \State_i=\{u\}] = \frac{r}{\Fitness(\State_i)} 
\] 
\[
y=\Prb^{G_n,r}[|\State_{i+1}|=0 \mid \State_i=\{u\}] =
\frac{1}{\Fitness(\State_i)}\cdot \sum_{v\in \In(u)\setminus\{u\}} W[v,u] = 
\frac{1}{\Fitness(\State_i)}\cdot \Temp(u)\ .
\]
Thus $x/y= r/\Temp(u)$.
Hence by Lemma~\ref{lemm:technical} we have
\[
\FixProb(G_n,r,u) \leq 1-\frac{\Temp(u)}{\Temp(u)+r}\ .
\]
Summing over all $u$, we obtain
\[
\FixProb(G_n,r,\Temp)=\sum_{u} \frac{\Temp(u)}{n} \cdot \FixProb(G_n,r,u) \leq \frac{1}{n}\cdot \sum_{u} \Temp(u)\cdot \left(1-\frac{\Temp(u)}{\Temp(u)+r}\right) = 1-\frac{1}{n}\cdot \sum_u \frac{\Temp(u)^2}{\Temp(u)+r}\ ;
\numberthis\label{eq:temp_upperbound}
\]
since $\sum_u\Temp(u)=n$.
Using the Cauchy-Schwarz inequality, we obtain
\[
\sum_u \frac{\Temp(u)^2}{\Temp(u)+r}\geq \frac{\left(\sum_u\Temp(u)\right)^2}{\sum_u (\Temp(u) + r)} = \frac{n^2}{n+n\cdot r} = \frac{n}{r+1}\ ;
\]
and thus \cref{eq:temp_upperbound} becomes
\[
\FixProb(G_n,r,\Temp) \leq 1-\frac{1}{n}\cdot \frac{n}{r+1} = 1-\frac{1}{r+1}\; 
\]
as desired.
\end{proof}
\fi

We thus arrive at the following corollary.
\smallskip
\begin{corollary}\label{cor:loopless_bounded_temp}
There exists no self-loop free family of graphs which is a strong temperature amplifier.
\end{corollary}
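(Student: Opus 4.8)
The plan is to derive this as an immediate consequence of \cref{them:loopless_bounded_temp}. The crucial feature of that theorem is that the upper bound $1-1/(r+1)$ it provides on $\FixProb(G_n,r,\Temp)$ is \emph{uniform in $n$}: it holds for every self-loop free graph $G_n$ regardless of its size, and depends only on the fitness $r$. This uniformity is precisely what lets the bound pass through the limit in $n$, which is where the definition of a strong amplifier lives.

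Concretely, I would argue directly. Let $\Graphseq=(G_n)_n$ be an arbitrary family of self-loop free graphs, and fix any $r_1>1$. Since each $G_n$ is self-loop free and $r_1\geq 1$, \cref{them:loopless_bounded_temp} gives $\FixProb(G_n,r_1,\Temp)\leq 1-1/(r_1+1)$ for every $n$. Because this inequality holds for all $n$ with a right-hand side independent of $n$, it survives the $\liminf$:
\[
\liminf_{n\to\infty} \FixProb(G_n,r_1,\Temp) \leq 1-\frac{1}{r_1+1} < 1,
\]
where the final strict inequality uses $r_1>1$, so that $1/(r_1+1)>0$. This violates the defining requirement $\liminf_{n\to\infty}\FixProb(G_n,r_1,\Temp)=1$ of a strong temperature amplifier. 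Since the family was arbitrary, no self-loop free family can be a strong temperature amplifier.

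I do not anticipate any genuine obstacle: the substantive work has already been carried out in \cref{them:loopless_bounded_temp}, and the corollary merely recasts its $n$-uniform bound in the limiting language of the strong-amplifier definition. The only point requiring care is to stress that the bound is independent of $n$, so the $\liminf$ cannot exceed it; in particular there is no need to examine the disadvantageous regime $r_2<1$, since failing the advantageous-mutant condition for a single $r_1>1$ already disqualifies the family.
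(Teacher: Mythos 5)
Your argument is correct and is exactly the (implicit) reasoning behind the paper's corollary, which is stated without proof as an immediate consequence of \cref{them:loopless_bounded_temp}: the $n$-uniform bound $1-1/(r_1+1)<1$ caps the $\liminf$, contradicting the strong-amplifier requirement for advantageous mutants. Your observation that the $r_2<1$ condition need not be checked is also right.
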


\subsection{Negative Result~2}\label{subsec:bounded_weightless_temperature}
We now prove our negative result~2. 
\smallskip
\begin{theorem}\label{them:unweighted_bounded_temp}
For all unweighted graphs $G_n$ and for every $r\geq 1$ we have
$\FixProb(G_n,r,\Temp) \leq 1-1/(4r+2)$.
\end{theorem}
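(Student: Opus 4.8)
The plan is to mimic the proof of \cref{them:loopless_bounded_temp}, applying \cref{lemm:technical} vertex-by-vertex and then aggregating with the Cauchy--Schwarz inequality, but with extra care to account for the self-loops that an unweighted graph may now carry. Write $d_u=|\Out(u)|$ for the out-degree and $s_u\in\{0,1\}$ for the indicator that $u$ carries a self-loop; since the graph is unweighted, $W[v,u]=1/d_v$ whenever $v\to u$. Fix a vertex $u$ and start from $\State_0=\{u\}$. First I would compute, exactly as in the loopless case, the one-step increase and extinction probabilities. The mutant at $u$ raises the count to two only if it reproduces and copies onto one of its $d_u-s_u$ non-self out-neighbours, so $x=\frac{r}{\Fitness(\State_0)}\cdot\frac{d_u-s_u}{d_u}$; and the mutant dies only if some resident in-neighbour $v\neq u$ reproduces onto $u$, so $y=\frac{1}{\Fitness(\State_0)}\sum_{v\in\In(u)\setminus\{u\}}\frac1{d_v}$. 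Introducing the ``effective'' temperature $\Temp'(u):=\Temp(u)-s_u/d_u=\sum_{v\in\In(u)\setminus\{u\}}1/d_v$ and the factor $p_u:=(d_u-s_u)/d_u\le 1$, this reads $x=rp_u/\Fitness(\State_0)$ and $y=\Temp'(u)/\Fitness(\State_0)$.

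Then by \cref{lemm:technical} we get $1-\FixProb(G_n,r,u)\ge \frac{y}{x+y}=\frac{\Temp'(u)}{rp_u+\Temp'(u)}\ge \frac{\Temp'(u)}{r+\Temp'(u)}$, where the last step drops $p_u\le1$ from the denominator. Weighting by the temperatures and using $\sum_u\Temp(u)=n$ gives
\[
\FixProb(G_n,r,\Temp)=\sum_u\frac{\Temp(u)}{n}\FixProb(G_n,r,u)\le 1-\frac1n\sum_u \Temp(u)\cdot\frac{\Temp'(u)}{r+\Temp'(u)}.
\]
Since $\Temp(u)\ge \Temp'(u)$, each summand dominates $\frac{\Temp'(u)^2}{r+\Temp'(u)}$, and Cauchy--Schwarz in the same form used for \cref{them:loopless_bounded_temp} yields $\sum_u \frac{\Temp'(u)^2}{r+\Temp'(u)}\ge \frac{T^2}{nr+T}$, where $T:=\sum_u\Temp'(u)$.

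The one genuinely new ingredient---and the step I expect to be the crux---is a lower bound on $T$. Here the self-loops would seem to hurt, because the self-loop contribution $s_u/d_u$ is ``wasted'': it inflates $\Temp(u)$ but cannot drive the mutant extinct. The key observation is that this waste is globally small. By (strong) connectivity and $n\ge2$, any vertex carrying a self-loop must also have an edge to some other vertex, so $d_u\ge2$ and hence $s_u/d_u\le\frac12$ for every $u$. Consequently $T=\sum_u\Temp(u)-\sum_u s_u/d_u\ge n-\frac n2=\frac n2$. Because $T\mapsto T^2/(nr+T)$ is increasing for $T>0$, substituting $T\ge n/2$ gives $\frac{T^2}{nr+T}\ge \frac{(n/2)^2}{nr+n/2}=\frac{n}{4r+2}$, and dividing by $n$ produces the claimed bound $\FixProb(G_n,r,\Temp)\le 1-\frac1{4r+2}$. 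The factor $4$ and the additive $2$ are precisely the price paid for the self-loops: in the loopless case every $s_u=0$, so $T=n$ and one recovers the sharper bound $1-1/(r+1)$ of \cref{them:loopless_bounded_temp}.
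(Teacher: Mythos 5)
Your proposal is correct and follows essentially the same route as the paper's proof: the same auxiliary quantity $\Temp'(u)=\sum_{v\in\In(u)\setminus\{u\}}1/|\Out(v)|$, the same application of \cref{lemm:technical} followed by Cauchy--Schwarz, the same lower bound $\sum_u\Temp'(u)\ge n/2$ obtained from the observation that a self-loop vertex has out-degree at least $2$, and the same monotonicity argument for $x\mapsto x^2/(x+nr)$. The only (harmless) difference is that you track the factor $(d_u-s_u)/d_u$ in the upward probability exactly before discarding it, whereas the paper bounds it by $1$ from the outset.
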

\ifshortproofs
\begin{proof}
For every vertex $u\in V$, let
\[
\Temp'(u) = \sum_{v\in \In(u)\setminus\{u\}} \frac{1}{|\Out(v)|}\ .
\]
We have the following two inequalities for $\Temp'$  
\[
\Temp(u) \geq \Temp'(u) \qquad \text{ and } \qquad
\sum_u\Temp'(u) \geq  \frac{n}{2}\ .
\]
Consider the case where the initial mutant is placed on vertex  $u$.
Using Lemma~\ref{lemm:technical},
a simple calculation shows that the fixation probability $\FixProb(G_n,r,u)$  from $u$ is at most
\[
1-\frac{\Temp'(u)}{\Temp'(u)+r}.
\]
Summing over all vertices $u$, we obtain
\[
\FixProb(G_n,r,\Temp) =  \sum_{u} \frac{\Temp(u)}{n}\cdot \FixProb(G_n,r,u) \leq 
\frac{1}{n}\cdot \sum_{u} \Temp(u)\cdot \left(1-\frac{\Temp'(u)}{\Temp'(u)+r}\right)  \leq 1-\frac{1}{4r+2}\ ;
\]
where the last inequality follows from the inequalities for $\Temp'$, the Cauchy-Schwarz inquality in the form
$$\sum_u \frac{\Temp'(u)^2}{\Temp'(u)+r} \geq \frac{\left(\sum_u \Temp'(u)\right)^2}{\sum_u (\Temp'(u)+r)}$$
and from the fact that the function $\frac{x^2}{x+n\cdot r}$ is increasing in $x$ for $x>0$ and any $r,n>0$.
The desired result follows.
\end{proof}
\fi
\iffullproofs
\begin{proof}
For every vertex $u\in V$, let
\[
\Temp'(u) = \sum_{v\in \In(u)\setminus\{u\}} \frac{1}{|\Out(v)|}\ .
\]
We establish two inequalities related to $\Temp'$.
Since $G_n$ is unweighted, we have
\[
\Temp(u) = \sum_{v\in \In(u)} \frac{1}{|\Out(v)|}\geq \Temp'(u)\ .
\]
For a vertex $u$, let $\SelfLoop(u)=1$ if $u$ has a self-loop and $\SelfLoop(u)=0$ otherwise.
Since $G_n$ is connected, each vertex $u$ has at least one neighbor other than itself.
Thus for every vertex $u$ with $\SelfLoop(u)=1$ we have that $|\Out(u)|\geq 2$.
Hence 
\begin{align*}
\sum_u\Temp'(u) = &   \sum_u\left (\sum_{v\in \In(u)} \frac{1}{|\Out(v)|} -\SelfLoop(u)\frac{1}{|\Out(u)|}\right)
=\sum_u\left (\sum_{v\in \In(u)} \frac{1}{|\Out(v)|}\right) -\sum_{u:\SelfLoop(u)=1}\left(\frac{1}{|\Out(u)|}\right)\\[2ex]
 \geq & \sum_u \Temp(u) - \sum_u\frac{1}{2}= n- \frac{n}{2} = \frac{n}{2}.
 \numberthis\label{eq:tempprime}
\end{align*}

Similarly to the proof of \cref{them:loopless_bounded_temp}, the fixation probability given that a mutant is initially placed on vertex $u$ is at most
\[
\FixProb(G_n,r,u) \leq 
1-\frac{\Temp'(u)}{\Temp'(u)+r}
\]
Summing over all $u$, we obtain
\[
\FixProb(G_n,r,\Temp)= \frac{1}{n}\cdot \sum_{u} \Temp(u)\cdot \FixProb(G_n,r,u)  \leq  \frac{1}{n}\cdot \sum_{u} \Temp(u)\cdot \left(1-\frac{\Temp'(u)}{\Temp'(u)+r}\right) \leq  1-\frac{1}{n}\cdot \sum_u \frac{\Temp'(u)^2}{\Temp'(u)+r}\ ;
\numberthis \label{eq:temp_upperbound2}
\]
since $\sum_u\Temp(u)=n$ and $\Temp(u)\geq \Temp'(u)$.

Using the Cauchy-Schwarz inequality we get
\[
\sum_u \frac{\Temp'(u)^2}{\Temp'(u)+r}\geq \frac{\left(\sum_u\Temp'(u)\right)^2}{\sum_u (\Temp'(u) + r)} = \frac{x^2}{x+n\cdot r}, 
\]
where $x=\sum_u\Temp'(u)$. Note that the function $f(x)=\frac{x^2}{x+n\cdot r}$ is increasing in $x$ for $x>0$ and any $r,n>0$. Since $x>n/2$, the right-hand side is minimized for $x=n/2$, that is
\[
\sum_u \frac{\Temp'(u)^2}{\Temp'(u)+r}\geq \frac{(n/2)^2}{n/2+n\cdot r} = \frac{n}{4r+2}.
\]
Thus \cref{eq:temp_upperbound2} becomes
\[
\FixProb(G_n,r,\Temp) \leq 1-\frac{1}{n}\cdot \frac{n}{4r+2} = 1-\frac{1}{4r+2}
\]
as desired.


\end{proof}
\fi

We thus arrive at the following corollary.
\smallskip
\begin{corollary}\label{cor:unweighted_bounded_temp}
There exists no unweighted family of graphs which is a strong temperature amplifier.
\end{corollary}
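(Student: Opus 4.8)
The plan is to derive the corollary immediately from \cref{them:unweighted_bounded_temp}, whose bound is the whole point: it holds uniformly in $n$ (indeed for \emph{every} unweighted graph and every $r\geq 1$), so it caps the limiting fixation probability rather than merely bounding individual terms of a sequence. Recall that a family $\Graphseq=(G_n)_n$ qualifies as a strong temperature amplifier only if, for every fixed $r_1>1$, we have $\liminf_{n\to\infty}\FixProb(G_n,r_1,\Temp)=1$ (together with the $r_2<1$ extinction condition). To refute that a given unweighted family is a strong temperature amplifier, it suffices to violate this single requirement.

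First I would argue by contradiction: suppose some unweighted family $\Graphseq=(G_n)_n$ is a strong temperature amplifier, and fix an arbitrary $r_1>1$. Since each $G_n$ is unweighted, \cref{them:unweighted_bounded_temp} applies to every member of the family and yields $\FixProb(G_n,r_1,\Temp)\leq 1-\frac{1}{4r_1+2}$ for all $n$. The key observation is that the right-hand side is a constant depending only on $r_1$, not on $n$, so the entire sequence $\bigl(\FixProb(G_n,r_1,\Temp)\bigr)_n$ is bounded above by this same constant.

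Consequently $\liminf_{n\to\infty}\FixProb(G_n,r_1,\Temp)\leq 1-\frac{1}{4r_1+2}<1$, the strict inequality holding because $4r_1+2$ is finite and positive. This contradicts the defining condition $\liminf_{n\to\infty}\FixProb(G_n,r_1,\Temp)=1$ of a strong temperature amplifier, so no unweighted family can be one, which is exactly the assertion of the corollary.

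I do not anticipate any real obstacle: the substantive work is already carried out in \cref{them:unweighted_bounded_temp}, and the corollary is just the observation that a uniform (in $n$) upper bound strictly below $1$ forecloses the limit required by the definition. The only subtlety worth flagging explicitly is the uniformity of the bound over $n$; once that is noted, the passage to the $\liminf$ is routine and the conclusion follows.
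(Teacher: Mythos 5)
Your proposal is correct and matches the paper's intent exactly: the paper states this corollary without an explicit proof, treating it as an immediate consequence of \cref{them:unweighted_bounded_temp}, and your argument --- that the bound $1-\frac{1}{4r_1+2}$ is uniform in $n$ and strictly below $1$, hence the $\liminf$ condition in the definition of a strong temperature amplifier fails --- is precisely the intended derivation. No gaps.
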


\subsection{Negative Result~3}\label{subsec:bounded_loopless_uniform}
We now prove our negative result~3. 
\smallskip
\begin{theorem}\label{them:loopless_bounded_uniform}
For all self-loop free graphs $G_n$ with $c=\deg(G_n)$, and 
for every $r\geq 1$ we have $\FixProb(G_n,r,\Uniform)\leq 1-1/(c+r\cdot c^2)$.
\end{theorem}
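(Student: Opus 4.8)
The plan is to mirror the single-mutant analysis of \cref{them:loopless_bounded_temp} to bound $\FixProb(G_n,r,u)$ vertex by vertex, and then to average these bounds under the \emph{uniform} weights $1/n$ rather than the temperature weights $\Temp(u)/n$. First I would fix a vertex $u$ and the configuration $\State_i=\{u\}$. Since $G_n$ is self-loop free, every out-neighbor of $u$ is a resident, so whenever the mutant reproduces the mutant count strictly increases; hence the probability of increase is $x=r/\Fitness(\State_i)$. Likewise the mutant vanishes exactly when some resident $v\in\In(u)$ reproduces across the edge to $u$, giving $y=\Temp(u)/\Fitness(\State_i)$. Feeding $x/y=r/\Temp(u)$ into \cref{lemm:technical} yields the same per-vertex bound as in \cref{them:loopless_bounded_temp}, namely $\FixProb(G_n,r,u)\le 1-\Temp(u)/(\Temp(u)+r)$. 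I would stress that self-loop freeness is essential here: with a self-loop a reproduction of $u$ may be ``wasted'' onto $u$ itself, which both lowers $x$ and removes the self-loop term from $y$, and for cold vertices ($\Temp(u)<1$) this strictly raises the fixation probability above $r/(r+\Temp(u))$ --- precisely the mechanism the positive result later exploits.

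The second step is the structural bound that converts bounded degree into a cap on temperature. Since $G_n$ is self-loop free and $\deg(G_n)\le c$, each vertex $u$ has at most $c$ in-neighbors, each contributing an incoming weight at most $1$, so $\Temp(u)=\sum_{v\in\In(u)}W[v,u]\le c$. Using $c\ge 1$ (hence $r\le r c^2$) together with $\Temp(u)\le c$, I would bound the denominator crudely by $r+\Temp(u)\le c+r c^2$, which gives the pointwise inequality $\Temp(u)/(\Temp(u)+r)\ge \Temp(u)/(c+r c^2)$. Averaging the per-vertex bounds with uniform weights and using $\sum_u\Temp(u)=n$ then finishes the argument:
\[
1-\FixProb(G_n,r,\Uniform)=\frac1n\sum_u\bigl(1-\FixProb(G_n,r,u)\bigr)\ge \frac1n\sum_u\frac{\Temp(u)}{c+r c^2}=\frac{1}{c+r c^2},
\]
so that $\FixProb(G_n,r,\Uniform)\le 1-1/(c+r c^2)$, as claimed. (A sharper constant $1-1/(c+r)$ is in fact available by keeping $\sum_u\Temp(u)^2\le c\sum_u\Temp(u)=cn$ and applying Cauchy--Schwarz as in \cref{them:loopless_bounded_temp}, but the crude denominator bound already delivers the stated inequality.)

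The conceptual hurdle --- and the reason bounded degree appears only in the uniform statements --- is that uniform initialization weights every vertex equally, whereas \cref{them:loopless_bounded_temp} weighted vertex $u$ by $\Temp(u)/n$. In the temperature average the ``cold'' vertices, which are exactly the ones whose fixation probability is close to $1$, are automatically down-weighted, and Cauchy--Schwarz balances the two regimes. Under uniform weights no such automatic down-weighting occurs, so in principle a graph with many cold vertices (such as the unbounded-degree Star, whose leaves have $\Temp\to 0$) could push the uniform fixation probability toward $1$. The bounded-degree hypothesis is precisely what rules this out: it caps $\Temp(u)\le c$, and the decisive trick is to retain the factor $\Temp(u)$ in the numerator of each per-vertex term so that the global constraint $\sum_u\Temp(u)=n$ rescues the otherwise unweighted sum. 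I expect identifying this interplay --- that the degree bound must enter through $\Temp(u)\le c$ and be combined with the temperature normalization --- to be the only non-routine part of the proof.
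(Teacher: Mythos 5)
Your proof is correct, but it takes a genuinely different route from the paper's. The paper partitions the vertices into ``hot'' vertices $V^h$ (those receiving some incoming edge of weight at least $1/c$, which exist because every vertex's out-degree is at most $c$), shows $|V^h|\geq n/c$ by a counting argument, bounds the fixation probability from each hot vertex by $rc/(1+rc)$ via \cref{lemm:technical}, and crudely bounds the fixation probability from all other vertices by $1$; averaging gives $\frac{1}{c}\cdot\frac{rc}{1+rc}+\frac{c-1}{c}=1-\frac{1}{c+rc^2}$. You instead reuse verbatim the per-vertex bound $\FixProb(G_n,r,u)\leq 1-\Temp(u)/(\Temp(u)+r)$ from the proof of \cref{them:loopless_bounded_temp}, convert the degree bound into the pointwise cap $\Temp(u)\leq c$ (each of at most $c$ in-neighbors contributes incoming weight at most $1$), and close with the global constraint $\sum_u\Temp(u)=n$. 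All steps check out: $x=r/\Fitness(\State_i)$ and $y=\Temp(u)/\Fitness(\State_i)$ are exactly as in the paper's Negative Result~1, the inequality $\Temp(u)+r\leq c+rc^2$ holds since $c\geq 1$, and the uniform average then yields the stated bound. Your approach is arguably cleaner in that it avoids the hot/cold dichotomy and treats every vertex uniformly, and---as you correctly observe---it yields the strictly sharper bound $1-1/(c+r)$ via Cauchy--Schwarz with $\sum_u\Temp(u)^2\leq cn$, which the paper's argument does not immediately deliver. What the paper's argument buys in exchange is a more local statement (a constant fraction of vertices individually have bounded fixation probability), but for the theorem as stated both arguments are complete and your bound subsumes the paper's.
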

\ifshortproofs
\begin{proof}
Let $G_n=(V,E,W)$ and $\gamma=1/c$. 
Let $V^{h}$ be the set of vertices $u$ which have a neighbor $v \in \In(u)$ such that $W[v,u] \geq \gamma$. 
In words, the set $V^{h}$ contains ``hot'' vertices, since each vertex  $u\in V^h$ is replaced frequently 
(with rate at least $\gamma$) by at least one neighbor $v$.
Consider that the initial mutant is placed on a vertex $u\in V^h$. 
Then using Lemma~\ref{lemm:technical} a simple calculation shows that the fixation probability is at most 
$(r\cdot c)/(1+r\cdot c)$, and hence it is upperbounded by a constant.
Thus for vertices $u \in V^h$ we have $\FixProb(G_n,r,u) \leq (r\cdot c)/(1+r\cdot c)$.
Additionally, we have $|V^{h}| \geq \frac{n}{c}$, i.e., the set $V^{h}$ contains at least a constant fraction of the vertices. 
Hence the initial mutant will be placed on some hot vertex from $V^{h}$ with at least some constant probability, by uniform
initialization.
To favor fixation even more, we consider that if the initial mutant is placed on some vertex outside $V^{h}$, 
then fixation is achieved with probability~1.
We have 
\[
\FixProb(G_n,r,\Uniform) \leq \frac{|V^{h}|}{n} \cdot \frac{r\cdot c}{1+r\cdot c}  + 
\frac{n-|V^{\gamma|}}{n}\cdot 1 = \frac{1}{c}\cdot \frac{r\cdot c}{1+r\cdot c} + \frac{c-1}{c}   \leq 1-\frac{1}{c+r\cdot c^2}\ .
\]
The desired result follows.
\end{proof}
\fi
\iffullproofs
\begin{proof}
Let $G_n=(V,E,W)$ and $\gamma=1/c$.
For a vertex $u$, denote by $\Out^{\gamma}(u)=\{v\in \Out(u) \ :\ W[u,v] \geq \gamma \}$.
Observe that since $\deg(G_n)=c$, every vertex $u$ has an outgoing edge of weight at least $1/c$, and thus $\Out^{\gamma}(u)\neq \emptyset$ for all $u\in V$.
Let $V^{h}=\bigcup_u \Out^{\gamma}(u)$.
Intuitively, the set $V^{h}$ contains ``hot'' vertices, since each vertex  $u\in V^h$ is replaced frequently 
(with rate at least $\gamma$) by at least one neighbor $v$.

\smallskip\noindent{\em Bound on size of $V^h$.}
We first obtain a bound on the size of $V^h$. 
Consider a vertex $u \in V$ and a vertex $v \in \Out^{\gamma}(u)$ (i.e., $v \in V^h$).
For every vertex $w \in \In(v)$ such that $v\in \Out^{\gamma}(w)$ we can count $v \in V^h$ 
and to avoide multiple counting, we consider for each count of $v$ a contribution of 
$\frac{1}{|\{w\in \In(v):~v\in \Out^{\gamma}(w)\}|}$, which is at least $\frac{1}{c}$ due to
the degree bound.
Hence we have 
\[
|V^{h}| = \sum_{u\in V} \sum_{v\in \Out^{\gamma}(u)} \frac{1}{|\{w\in \In(v):~v\in \Out^{\gamma}(w)\}|} \geq \sum_{u\in V} \sum_{v\in \Out^{\gamma}(u)} \frac{1}{c} \geq \sum_{u\in V} \frac{1}{c}= \frac{n}{c}\ ;
\]
where the last inequality follows from the fact that $\Out^{\gamma}(u)\neq \emptyset$ for all $u\in V$.
Hence the probability that the initial mutant is a vertex in $V^h$ has probability at least $1/c$ according to the uniform
initialization.

\smallskip\noindent{\em Bound on probability.}
Consider that the initial mutant is a vertex $u\in V^{h}$.
Consider any configuration $\State_i=\{u\}$, we have the following:
\[
x=\Prb^{G_n,r}[|\State_{i+1}|=2 \mid \State_i=\{u\}] = \frac{r}{\Fitness(\State_i)}
\] 
\[
y=\Prb^{G_n,r}[|\State_{i+1}|=0 \mid \State_i=\{u\}] =
\frac{1}{\Fitness(\State_i)}\cdot \sum_{(v,u)\in E} W[v,u] 
\geq \frac{1}{\Fitness(\State_i)}\cdot \sum_{v:u\in \Out^{\gamma}(v)} \gamma
\geq \frac{1}{\Fitness(\State_i)}\cdot \gamma\ .
\]
Thus $x/y \leq r/\gamma$.
Hence by Lemma~\ref{lemm:technical} we have
\[
\FixProb(G_n,r,u) \leq 
\frac{r\cdot c}{1+r\cdot c}\ .
\]

Finally, we have
\begin{align*}
\FixProb(G_n,r,\Uniform)= & 
\sum_{u\in V^{h}} \frac{1}{n}\cdot \FixProb(G_n,r,u)
+ \sum_{u\in V\setminus V^{h}} 
\frac{1}{n}\cdot \FixProb(G_n,r,u) \\[2ex]
\leq &  \frac{1}{c}\cdot \frac{r\cdot c}{1+r\cdot c} + \frac{c-1}{c}\cdot 1 = 1-\frac{1}{c}\cdot \left(1-\frac{r\cdot c}{1+r\cdot c}\right) = 1-\frac{1}{c+r\cdot c^2} \ .
\end{align*}
The desired result follows.
\end{proof}
\fi

We thus arrive at the following corollary.
\smallskip
\begin{corollary}\label{cor:loopless_bounded_uniform}
There exists no self-loop free, bounded-degree family of graphs which is a strong uniform amplifier.
\end{corollary}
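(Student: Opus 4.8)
The plan is to read the corollary off \cref{them:loopless_bounded_uniform} directly; the theorem already does all the work, and what remains is only to match its uniform, $n$-independent bound against the definition of a strong uniform amplifier. Accordingly, let $\Graphseq = (G_n)_{n>0}$ be an arbitrary self-loop free family of bounded degree, and let $c$ be a constant with $\deg(G_n) \le c$ for every $n$; I would show that $\Graphseq$ cannot be a strong uniform amplifier.

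First I would fix a single advantageous fitness value $r_1 > 1$. Since each $G_n$ is self-loop free with degree at most $c$, applying \cref{them:loopless_bounded_uniform} member-by-member yields the single $n$-independent bound
\[
\FixProb(G_n, r_1, \Uniform) \le 1 - \frac{1}{c + r_1 \cdot c^2} \qquad \text{for all } n.
\]
Taking the limit inferior over $n$ then gives $\liminf_{n\to\infty}\FixProb(G_n, r_1, \Uniform) \le 1 - 1/(c + r_1 c^2) < 1$, so the requirement $\liminf_{n\to\infty}\FixProb(G_n, r_1, \Uniform) = 1$ in the definition of a strong uniform amplifier fails. Since $\Graphseq$ was arbitrary, no self-loop free, bounded-degree family can be a strong uniform amplifier.

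There is no genuine obstacle beyond two bookkeeping points. First, one must use that a bounded-degree \emph{family} shares a single degree constant $c$, so that the same bound $1 - 1/(c + r_1 c^2)$ holds simultaneously for every $n$ and therefore survives the $\liminf$; this is exactly what the definition of a bounded-degree family supplies. Second, note that only the advantageous branch $r_1 > 1$ of the strong-amplifier definition is invoked, since \cref{them:loopless_bounded_uniform} already caps the fixation probability strictly below $1$ for every $r \ge 1$; the disadvantageous branch $r_2 < 1$ plays no role in this argument.
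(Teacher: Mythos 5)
Your proposal is correct and matches the paper's intent exactly: the paper states \cref{cor:loopless_bounded_uniform} as an immediate consequence of \cref{them:loopless_bounded_uniform}, and the argument is precisely the one you give --- the $n$-independent bound $1-1/(c+r_1 c^2)<1$ for a fixed $r_1>1$ forces $\liminf_{n\to\infty}\FixProb(G_n,r_1,\Uniform)<1$, violating the definition of a strong uniform amplifier. Your two bookkeeping remarks (a single degree constant $c$ for the whole family, and that only the advantageous branch of the definition is needed) are both accurate.
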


\subsection{Negative Result~4}\label{subsec:bounded_weightless_uniform}
We now prove our negative result~4. 
\smallskip
\begin{theorem}\label{them:unweighted_bounded_uniform}
For all unweighted graphs $G_n$ with $c=\deg(G_n)$, and 
for every $r\geq 1$ we have  $\FixProb(G_n,r,\Uniform)\leq 1-1/(1+r\cdot c)$.
\end{theorem}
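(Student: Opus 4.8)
My plan is to exploit the key structural difference from the weighted case of \cref{them:loopless_bounded_uniform}: in an unweighted graph \emph{every} vertex is automatically ``hot''. Concretely, each non-self in-neighbor contributes an incoming weight of exactly $1/|\Out(v)| \geq 1/c$, so no subset argument over a constant fraction of vertices is needed — I can bound the per-vertex fixation probability uniformly and then average trivially. This is precisely what eliminates the extra factor of $c$ present in the weighted bound $1-1/(c+r\cdot c^2)$.

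First I would fix a vertex $u$ and the single-mutant configuration $\State_i=\{u\}$, and set up \cref{lemm:technical}. As in the proof of \cref{them:loopless_bounded_temp}, the probability that the mutant count rises is $x=\Prb^{G_n,r}[|\State_{i+1}|=2 \mid \State_i=\{u\}] \leq r/\Fitness(\State_i)$, since this event requires the unique mutant at $u$ to be chosen for reproduction (an event of probability exactly $r/\Fitness(\State_i)$, possibly discounted by a self-loop edge). The extinction probability is
\[
y=\Prb^{G_n,r}[|\State_{i+1}|=0 \mid \State_i=\{u\}] = \frac{1}{\Fitness(\State_i)}\cdot \sum_{v\in\In(u)\setminus\{u\}} W[v,u]\ .
\]

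The crux is a clean lower bound on $y$. Since $G_n$ is unweighted, $W[v,u]=1/|\Out(v)| \geq 1/c$ for each in-neighbor $v$, using $\deg(G_n)\leq c$. Because $G_n$ is connected and $n\geq 2$, the vertex $u$ has at least one in-neighbor $v\neq u$, so the sum defining $y$ is nonempty and $y \geq 1/(c\cdot\Fitness(\State_i))$. Combining with $x\leq r/\Fitness(\State_i)$ gives $x/y \leq r\cdot c$, so \cref{lemm:technical} yields, for \emph{every} vertex $u$,
\[
\FixProb(G_n,r,u) \leq \frac{x}{x+y} = \frac{1}{1+y/x} \leq 1-\frac{1}{1+r\cdot c}\ .
\]

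Finally I would average over the uniform initialization; as the per-vertex bound is independent of $u$,
\[
\FixProb(G_n,r,\Uniform) = \sum_{u\in V}\frac{1}{n}\cdot \FixProb(G_n,r,u) \leq 1-\frac{1}{1+r\cdot c}\ .
\]
There is no real obstacle here — the proof is genuinely shorter than that of \cref{them:loopless_bounded_uniform}. The only point requiring care is justifying that $\In(u)\setminus\{u\}\neq\emptyset$, which follows from connectivity with $n\geq 2$ exactly as in the proof of \cref{them:unweighted_bounded_temp}; once that is in place, the uniform lower bound $1/c$ on each incoming edge weight does all the work.
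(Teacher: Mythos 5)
Your proposal is correct and follows essentially the same route as the paper's proof: bound $x\leq r/\Fitness(\State_i)$ and $y\geq \frac{1}{c\cdot\Fitness(\State_i)}$ using the unweighted degree bound, apply \cref{lemm:technical} to get the uniform per-vertex bound $\frac{r\cdot c}{1+r\cdot c}$, and average. Your explicit justification that $\In(u)\setminus\{u\}\neq\emptyset$ is a point the paper leaves implicit, but the argument is otherwise identical.
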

\ifshortproofs
\begin{proof}
Let $G_n=(V,E,W)$ and consider that the initial mutant is placed on a vertex $u$. 
Then the fixation probability is at most 
$r\cdot c/(1+r\cdot c)$ (using Lemma~\ref{lemm:technical} and simple calculations).
Summing over all vertices $u$, we obtain
\[
\FixProb(G_n,r,\Uniform) = \frac{1}{n}\cdot \sum_u\FixProb(G_n,r,u) \leq \frac{r\cdot c}{1+r\cdot c} = 1-\frac{1}{1+r\cdot c}\ .
\]
The desired result follows.
\end{proof}
\fi
\iffullproofs
\begin{proof}
Let $G_n=(V,E,W)$ and consider that  $\State_0=u$ for some $u\in V$.
Consider any configuration $\State_i=\{u\}$, we have the following:
\[
x=\Prb^{G_n,r}[|\State_{i+1}|=2 \mid \State_i=\{u\}] \leq \frac{r}{\Fitness(\State_i)}\ .
\]
\[
y=\Prb^{G_n,r}[|\State_{i+1}|=0 \mid \State_i=\{u\}] =
\frac{1}{\Fitness(\State_i)}\cdot \sum_{v\in \In(u)\setminus\{u\}} W[v,u] 
\geq \frac{1}{\Fitness(\State_i)}\cdot \frac{1}{c}\ .
\]
Thus $x/y \leq  r\cdot c$.
By Lemma~\ref{lemm:technical} we have
\[
\FixProb(G_n,r,u)\leq 
\frac{r\cdot c}{1+r\cdot c}.
\]
Finally, we have
\begin{align*}
\FixProb(G_n,r,\Uniform) =\frac{1}{n}\cdot \sum_u \FixProb(G_n,r,u) \leq \frac{r\cdot c}{1+r\cdot c} = 1-\frac{1}{1+r\cdot c}\ .
\end{align*}
The desired result follows.
\end{proof}
\fi

We thus arrive at the following corollary.
\smallskip
\begin{corollary}\label{cor:weightless_bounded_uniform}
There exists no unweighted, bounded-degree family of graphs which is a strong uniform amplifier.
\end{corollary}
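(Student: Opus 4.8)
The plan is to derive the corollary as an immediate consequence of \cref{them:unweighted_bounded_uniform}. Fix any unweighted family $\Graphseq=(G_n)_{n\in\NatsPlus}$ of bounded degree, so that by definition there is a single constant $c$ with $\deg(G_n)\le c$ for every $n$. The key observation is that \cref{them:unweighted_bounded_uniform} already supplies an upper bound on $\FixProb(G_n,r,\Uniform)$ that is \emph{independent of $n$}, and this is exactly what is needed to preclude convergence of the fixation probability to $1$.

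Concretely, I would first fix an arbitrary advantageous fitness $r_1>1$. Applying \cref{them:unweighted_bounded_uniform} to each member of the family gives $\FixProb(G_n,r_1,\Uniform)\le 1-\frac{1}{1+r_1\cdot \deg(G_n)}$. Since $\deg(G_n)\le c$ and the map $x\mapsto 1-\frac{1}{1+r_1 x}$ is increasing in $x$ for $x>0$, replacing the per-graph degree $\deg(G_n)$ by its uniform bound $c$ only weakens the estimate, so that $\FixProb(G_n,r_1,\Uniform)\le 1-\frac{1}{1+r_1\cdot c}$ for all $n$. The right-hand side is a constant strictly below $1$, whence $\liminf_{n\to\infty}\FixProb(G_n,r_1,\Uniform)\le 1-\frac{1}{1+r_1 c}<1$. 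This contradicts the condition $\liminf_{n\to\infty}\FixProb(G_n,r_1,\Uniform)=1$ in the definition of a strong uniform amplifier, and therefore no such family can be a strong uniform amplifier.

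There is no substantive obstacle here, as the theorem does the real work; the only point deserving care is that the bound must hold \emph{uniformly} in $n$. This is precisely what the bounded-degree hypothesis secures: it furnishes a single $c$ valid for the entire family, and the monotonicity of $1-\frac{1}{1+r_1 c}$ in $c$ licenses the passage from $\deg(G_n)$ to $c$. I would also note that it suffices to exhibit a single violating $r_1>1$, so the companion condition on disadvantageous mutants $r_2<1$ is not needed for the negative conclusion. The argument mirrors \cref{cor:loopless_bounded_temp,cor:unweighted_bounded_temp,cor:loopless_bounded_uniform}, each of which turns an $n$-independent fixation bound into the nonexistence of the corresponding strong amplifier.
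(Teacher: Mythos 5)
Your proposal is correct and matches the paper, which states this corollary as an immediate consequence of \cref{them:unweighted_bounded_uniform} without a separate proof: the uniform-in-$n$ bound $1-\frac{1}{1+r_1c}<1$ rules out $\liminf_{n\to\infty}\FixProb(G_n,r_1,\Uniform)=1$ for any $r_1>1$. Your added care about uniformity in $n$ and monotonicity in the degree bound is exactly the (routine) content being suppressed.
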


\smallskip
\begin{remark}\label{rem:bounded_degree_uniform}
\cref{them:loopless_bounded_uniform,them:unweighted_bounded_uniform} establish the nonexistence of strong amplification with bounded degree graphs.
A relevant result can be found in~\cite{Mertzios13}, which establishes an upperbound of the fixation probability of
mutants under uniform initialization on unweighted, undirected graphs.
If the bounded degree restriction is relaxed to bounded average degree, then recent results
show that strong amplifiers (called \emph{sparse incubators}) exist~\cite{Goldberg17}.
\end{remark}

\section{Positive Result}\label{sec:amplifiers}

In the previous section we showed that self-loops and weights are necessary for the existence 
of strong amplifiers.
In this section we present our positive result, namely that every family of undirected graphs 
with self-loops and whose diameter is not ``too large'' can be made a strong amplifier by using  
appropriate weight functions.
Our result relies on several novel conceptual steps, therefore the proof is structured in three parts.
\begin{enumerate}
\item First, we introduce some formal notation that will help with the exposition of the ideas that follow.
\item Second, we describe an algorithm which takes as input an undirected graph $G_n=(V_n,E_n)$ of $n$ vertices,
and constructs a weight matrix $W_n$ to obtain the weighted graph $G^{\Weight}_n=(V_n, E_n, W_n)$.
\item Lastly, we prove that $G_n^{\Weight}$ is a strong amplifier both for uniform and temperature initialization.
\end{enumerate}
%
%

Before presenting the details we introduce some notation to be used in this section.

\subsection{Undirected Graphs and Notation}\label{subsec:notation}
We first present some additional notation required for the exposition of the results of this section.

\smallskip\noindent{\em Undirected graphs.}
Our input is an unweighted undirected graph $G_n=(V_n,E_n)$ with self loops.
For ease of notation, we drop the subscript $n$ and refer to the graph $G=(V,E)$ instead.
Since $G$ is undirected, for all vertices $u$ we have $\In(u)=\Out(u)$, and we denote
by $\Neighbors(u)=\In(u)=\Out(u)$ the set of neighbors of vertex $u$.
Hence, $v\in \Neighbors(u)$ iff $u\in \Neighbors(v)$.
Moreover, since $G$ has self-loops, we have $u\in \Neighbors(u)$.
Also we consider that $G$ is connected, i.e., for every pair of vertices $u,v$, there is a path from $u$ to $v$.

\smallskip\noindent{\em Symmetric weight function.}
So far we have used a stochastic weight matrix $W$, where for every $u$ we have $\sum_{v} W[u,v]=1$.
In this section, we will consider a weight function $\Weight:E\to \Realsplus$, and given a vertex $u \in V$ 
we denote by $\Weight(u)=\sum_{v\in \Neighbors(u)}\Weight(u,v)$.
Our construction will not only assign weights, but also ensure symmetry.
In other words, we we construct \emph{symmetric} weights such that for all $u,v$ we have  
$\Weight(u,v)=\Weight(v,u)$.
Given such a weight function $\Weight$, the corresponding stochastic weight matrix $W$ is defined as
$W[u,v]=\Weight(u,v)/\Weight(u)$ for all pairs of vertices $u,v$.
Given a unweighted graph $G$ and weight function $\Weight$, we denote by $G^{\Weight}$
the corresponding weighted graph.

\smallskip\noindent{\em Vertex-induced subgraphs.}
Given a set of vertices $X\subseteq V$, we denote by $G^{\Weight}\restr{X}=(X, E\restr{X}, \Weight\restr{X})$
the subgraph of $G$ induced by $X$, where $E\restr{X}= E\cap (X\times X)$,
and the weight function $\Weight\restr{X}: E\restr{X}\to \Realsplus$ defined as
\[
\Weight\restr{X}(u,v)=
\left\{
\begin{array}{lr}
\Weight(u,u) + \sum_{(u,w)\in E\setminus E\restr{X} }\Weight(u,w) & \text{ if } u=v\\
\Weight(u,v)&  \text{ otherwise }
\end{array}
\right.
\]
In words, the weights on the edges of $u$ to vertices that do not belong to $X$ are added to the self-loop weight of $u$.
Since the sum of all weights does not change, we have $\Weight\restr{X}(u)=\Weight(u)$ for all $u$.
The temperature of $u$ in $G\restr{X}$ is
\[
\Temp\restr{X}(u) = \sum_{v\in \Neighbors(u)\cap X} \frac{\Weight\restr{X}(v,u)}{ \Weight\restr{X}(v)}\ .
\]

\subsection{Algorithm for Weight Assignment on $G$}\label{subsec:weights}
We start with the construction of the weight function $\Weight$ on $G$.
Since we consider arbitrary input graphs, $\Weight$ is constructed by an algorithm.
The time complexity of the algorithm is $O(n\cdot \log n)$.
Since our focus is on the properties of the resulting weighted graph, we do not explicitly analyze 
the time complexity.

\smallskip\noindent{\bf Steps of the construction.}
Consider a connected graph $G$ with diameter $\Diameter(G)\leq n^{1-\eps}$, where $\eps>0$ is a constant independent of $n$.
We construct a weight function $\Weight$ such that whp an initial mutant arising under uniform or temperature initialization, eventually fixates on $G^{\Weight}$.
The weight assignment consists of the following conceptual steps.

\begin{compactenum}
\item {\em Spanning tree construction and partition.}
First, we construct a \emph{spanning tree} $\SpanTree_n^x$ of $G$ rooted on some arbitrary vertex $x$.
In words, a spanning tree of an undirected graph is a connected subgraph that is a tree and includes 
all of the vertices of the graph.
Then we partition the tree into a number of component trees of appropriate sizes.

\item {\em Hub construction.} Second, we construct the \emph{hub} of $G$, which consists of the vertices $x_i$ that are roots of the component trees, together with all vertices in the paths that connect each $x_i$ to the root $x$ of $\SpanTree_n^x$. All vertices that do not belong to the hub belong to the \emph{branches} of $G$.

\item {\em Weight assignment.} 
Finally, we assign weights to the edges of $G$, such that the following properties hold:
\begin{compactenum}
\item The hub is an isothermal graph, and evolves exponentially faster than the branches.
\item All edges between vertices in different branches are effectively cut-out (by being assigned weight $0$).
\end{compactenum}
\end{compactenum}

In the following we describe the above steps formally.

\smallskip\noindent{\bf Spanning tree $\SpanTree_n^x$ construction and partition.}
Given the graph $G$, we first construct a spanning tree using the standard breadth-first-search (BFS) algorithm.
Let $\SpanTree_n^x$ be such a spanning tree of $G$, rooted at some arbitrary vertex $x$.
We now construct the partitioning as follows:
We choose a constant $c=2\eps/3$, and pick a set $S\subset V$ such that
\begin{compactenum}
\item $|S|\leq n^c$, and
\item  the removal of $S$ splits $\SpanTree_n^x$ into $k$ trees $T^{x_1}_{n_1},\dots, T^{x_k}_{n_k}$, each $T^{x_i}_{n_i}$ rooted at vertex $x_i$ and of size $n_i$, with the property that $n_i\leq n^{1-c}$ for all $1\leq i\leq k$.
\end{compactenum}
The set $S$ is constructed by a simple bottom-up traversal of $\SpanTree_n^x$ in which we keep track of the size $\Size(u)$
of the subtree marked by the current vertex $u$ and the vertices already in $S$. Once $\Size(u)> n^{1-c}$, we add $u$ to $S$ and proceed as before.
Since every time we add a vertex $u$ to $S$ we have $\Size(u)>n^{1-c}$, it follows that $|S|\leq n^c$.
Additionally, the subtree rooted in every child of $u$ has size at most $n^{1-c}$, otherwise that child of $u$ would have been chosen to be included in $S$ instead of $u$.

\smallskip\noindent{\bf Hub construction: hub $\Sink$.}
Given the set of vertices $S$ constructed during the spanning tree partitioning, we construct the set of vertices $\Sink\subset V$ called the \emph{hub}, as follows:
\begin{compactenum}
\item We choose a constant $\gamma=\eps/3$.
\item For every vertex $u\in S$, we add in $\Sink$ every vertex $v$ that lies in the unique simple path $P_u:x\Path u$ between the root $x$ of $\SpanTree_n^x$ and $u$ (including $x$ and $u$). Since $\Diameter(G)\leq n^{1-\eps}$ and $|S|\leq n^c$, we have that $|\Sink|\leq n^{1-\eps+c} \leq n^{1-\gamma}$.
\item We add $n^{1-\gamma}-|\Sink|$ extra vertices to $\Sink$, such that in the end, the vertices of $\Sink$ form a connected subtree of $\SpanTree_n^x$ (rooted in $x$).
This is simply done by choosing a vertex $u\in \Sink$ and a neighbor $v$ of $u$ with $v\not \in \Sink$, and adding $v$ to $\Sink$, until $\Sink$ contains $n^{1-\gamma}$ vertices.
\end{compactenum}

\smallskip\noindent{\bf Branches $B_j=T_{m_j}^{y_j}$.}
The hub $\Sink$ defines a number of trees $B_j=T_{m_j}^{y_j}$, where each tree is rooted at a vertex $y_j\not\in \Sink$ adjacent to $\Sink$, and has $m_j$ vertices. 
We will refer to these trees as \emph{branches}(see~\cref{fig:sink_branches}).

\smallskip
\begin{proposition}\label{prop:sizes}
Note that by construction, we have $m_j\leq n^{1-2/3\cdot \eps}$ for every $j$, and $|\Sink|=n^{1-\eps/3}$, and 
$\sum_j m_j = n-n^{1-\eps/3}$.
\end{proposition}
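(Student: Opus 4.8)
The plan is to verify each of the three claims directly from the construction of $S$, $\Sink$, and the branches; all three are essentially bookkeeping, the only genuine content being the two size bounds. Throughout I will use three structural facts that the construction guarantees: (i)~$S\subseteq\Sink$, since for each $u\in S$ the hub contains the whole path $P_u:x\Path u$ and in particular $u$ itself; (ii)~$\Sink$ is a connected subtree of $\SpanTree_n^x$ containing the root $x$; and (iii)~the branches $B_j=T_{m_j}^{y_j}$ are precisely the connected components of $\SpanTree_n^x\setminus\Sink$, so they partition $V\setminus\Sink$. For readability I will treat the thresholds $n^{1-\gamma}$ and $n^{1-c}$ as integers, rounding where needed; rounding perturbs only lower-order terms and leaves the stated exponents unchanged.

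For the identity $|\Sink|=n^{1-\eps/3}$ I would first bound $\Sink$ after step~2 of the hub construction. There $\Sink$ is the union of the $|S|\le n^{c}$ simple paths $P_u$, each with at most $\Diameter(G)+1\le n^{1-\eps}+1$ vertices, so
\[
|\Sink|\;\le\;|S|\cdot\bigl(\Diameter(G)+1\bigr)\;\le\;n^{c}\cdot n^{1-\eps}\;=\;n^{\,1-\eps+c}\;=\;n^{\,1-\eps/3}\;=\;n^{1-\gamma},
\]
using $c=2\eps/3$ so that $1-\eps+c=1-\eps/3=1-\gamma$ (the $+1$ per path, and the root $x$ shared by all paths, affect only lower-order terms and are absorbed for large $n$). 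Hence before padding $|\Sink|\le n^{1-\gamma}$, and step~3 adds exactly $n^{1-\gamma}-|\Sink|$ further vertices; this is always possible because $G$ is connected, so one can repeatedly attach a neighbour of the current $\Sink$ until $|\Sink|=n^{1-\gamma}=n^{1-\eps/3}$. This gives the second claim, and the third is then immediate from fact~(iii): $\sum_j m_j=|V\setminus\Sink|=n-|\Sink|=n-n^{1-\eps/3}$.

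The only nontrivial step is the per-branch bound $m_j\le n^{1-2\eps/3}=n^{1-c}$, which I would obtain in two stages. First I show that every connected component of $\SpanTree_n^x\setminus S$ has at most $n^{1-c}$ vertices. Let $C$ be such a component and $w$ its topmost vertex (closest to $x$); writing $\Size(u)$ for the quantity tracked by the bottom-up pass, namely the number of vertices of $\mathrm{subtree}(u)$ surviving after removal of the already-placed $S$-vertices, one has $\Size(u)=1+\sum_{v}\Size(v)$ over children $v\notin S$, and consequently $|C|=\Size(w)$. Since $w\notin S$, the pass did \emph{not} add $w$ to $S$ when it was processed, so by the threshold rule $\Size(w)\le n^{1-c}$, whence $|C|\le n^{1-c}$. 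Second, I transfer this to branches: by fact~(i), $S\subseteq\Sink$, so removing $\Sink$ is at least as destructive as removing $S$, and every component of $\SpanTree_n^x\setminus\Sink$ lies inside some component of $\SpanTree_n^x\setminus S$. In particular each branch $B_j$ sits inside one such component, giving $m_j\le n^{1-c}=n^{1-2\eps/3}$.

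I expect the main obstacle to be purely the careful verification of the identity $|C|=\Size(w)$ together with the threshold inequality $\Size(w)\le n^{1-c}$ for the topmost vertex of a surviving component, i.e.\ confirming that the informally described bottom-up pass really leaves no component larger than $n^{1-c}$. Everything else — the exponent arithmetic $1-\eps+c=1-\gamma$, the containment $S\subseteq\Sink$, and the partition identity from fact~(iii) — is routine.
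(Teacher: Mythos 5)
Your proof is correct and follows exactly the reasoning the paper leaves implicit: the paper offers no separate proof of this proposition, relying instead on the remarks embedded in the construction (the threshold rule bounding surviving subtree sizes by $n^{1-c}$, the bound $|\Sink|\le n^{1-\eps+c}=n^{1-\gamma}$ followed by padding to exactly $n^{1-\gamma}$, and the branches partitioning $V\setminus\Sink$). You have simply made those steps explicit, including the one genuinely non-obvious transfer from components of $\SpanTree_n^x\setminus S$ to components of $\SpanTree_n^x\setminus\Sink$ via $S\subseteq\Sink$.
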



\begin{figure}
\centering
\includegraphics[scale=1]{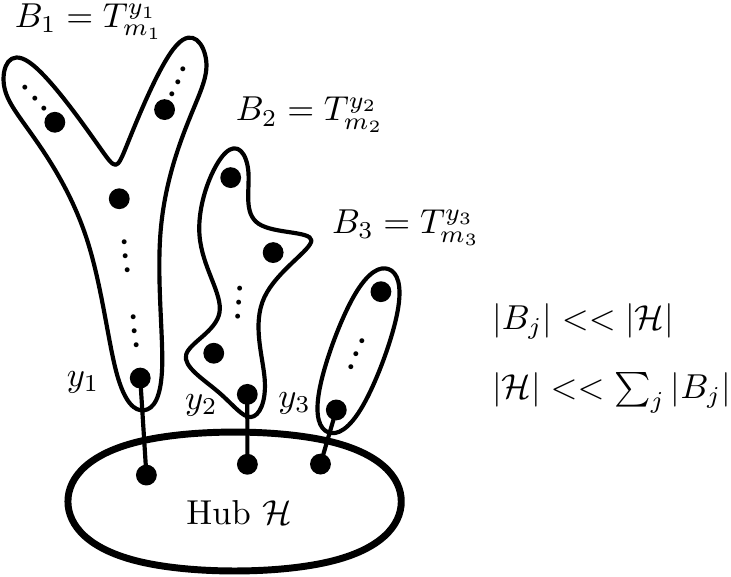}
\caption{
Illustration of the hub $\Sink$ and the branches $T_{m_j}^{y_j}$.
}
\label{fig:sink_branches}
\end{figure}

\smallskip\noindent{\bf Notation.} 
To make the exposition of the ideas clear, we rely on the following notation.
\begin{compactenum}
\item {\em Parent $\Parent(u)$ and ancestors $\Ancestors(u)$.}
Given a vertex $u\neq x$, we denote by $\Parent(u)$ the parent of $u$ in $\SpanTree_n^x$ and by $\Ancestors(u)$ the set of ancestors of $u$.

\item {\em Children $\Children(u)$ and descendants $\Descendants(u)$.} 
Given a vertex $u$ that is not a leaf in $\SpanTree_n^x$, we denote by $\Children(u)$ the children of $u$  in $\SpanTree_n^x$ that do not belong to the hub $\Sink$,
and by $\Descendants(u)$ the set of descendants of $u$ in $\SpanTree_n^x$ that do not belong to the hub $\Sink$.
\end{compactenum}

\smallskip\noindent{\bf Frontier, distance, and branches.} We present few notions required for the weight assignment:
\begin{compactenum}
\item {\em Frontier $\Frontier$.}
Given the hub $\Sink$, the \emph{frontier} of $\Sink$ is the set of vertices $\Frontier\subseteq \Sink$ defined as 
\[
\Frontier = \bigcup_{u\in V\setminus\Sink} \Neighbors(u) \cap \Sink\ .
\]
In words, $\Frontier$ contains all vertices of $\Sink$ that have a neighbor not in $\Sink$.

\item {\em Distance function $\Distance$.}
For every vertex $u$, we define its \emph{distance} $\Distance(u)$ to be the length of the shortest path $P:u\Path v$ in $T_n^{x}$
to some vertex $v\in \Frontier$
(e.g., if $u\in \Frontier$, we have (i)~$\Distance(u)=0$, and (ii)~for every $v\in \Neighbors(u)\setminus \Sink$ we have $\Distance(v)=1$).


\item {\em Values $\mu$ and $\nu$.}
For every vertex $u\in \Sink$, we define $\Degree(u)=|(\Neighbors(u)\cap \Sink)\setminus\{u\}|$
i.e., $\Degree(u)$ is the number of neighbors of $u$ that belong to the hub (excluding $u$ itself).
Let
\[
\mu=\max_{u\in \Frontier} |\Children(u)| \qquad  \text{ and } \qquad \nu=\max_{u\in \Sink}\Degree(u)\ .
\]
\end{compactenum}

\smallskip\noindent{\bf Weight assignment.}
We are now ready to define the weight function $\Weight:E\to \Realsplus$.

\begin{compactenum}
\item For every edge $(u,v)$ such that $u\neq v$ and $u,v\not \in \Sink$ and $u$ and $v$ are not neighbors in $\SpanTree_n^x$, we assign $\Weight(u,v)=0$.
\item For every vertex $u\in \Frontier$ we assign $\Weight(u,u)=(\mu-|\Children(u)|)\cdot 2^{-n}+\nu-\Degree(u)$.
\item For every vertex $u\in \Sink\setminus\Frontier$ we assign $\Weight(u,u)=\mu\cdot 2^{-n}+\nu-\Degree(u)$.
\item For every vertex $u\not\in\Sink$ we assign $\Weight(u,u)=n^{-2\cdot\Distance(u)}$.
\item For every edge $(u,v)\in E$ such that $u\neq v$ and $u,v\in \Sink$ we  assign $\Weight(u,v)=1$.
\item For every remaining edge $(u,v)\in E$ such that $u=\Parent(v)$ we assign $\Weight(u,v)=2^{-n}\cdot n^{-4\cdot \Distance(u)}$.
\end{compactenum}

%

The following lemma is straightforward from the weight assignment, and captures that every vertex in the hub has the same weight.
\smallskip
\begin{lemma}\label{lem:sink_weights}
For every vertex $u\in \Sink$ we have $\Weight(u)=\sum_{v\in\Neighbors(u)}\Weight(u,v)=\mu\cdot 2^{-n}+\nu$.
\end{lemma}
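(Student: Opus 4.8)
The plan is to verify the identity by a direct computation, splitting on whether $u$ lies on the frontier $\Frontier$ or in the interior $\Sink\setminus\Frontier$, and in each case partitioning the neighbors $\Neighbors(u)$ into three groups whose contributions I sum separately: (i)~the self-loop at $u$, (ii)~the hub-neighbors $(\Neighbors(u)\cap\Sink)\setminus\{u\}$, of which there are exactly $\Degree(u)$ by definition, and (iii)~the non-hub neighbors $\Neighbors(u)\setminus\Sink$. The whole point of the weight assignment is that the self-loop weights in rules~2 and~3 were reverse-engineered so that the total telescopes to $\mu\cdot 2^{-n}+\nu$ regardless of $u$, so the computation should close exactly.

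For group~(ii), rule~5 assigns weight $1$ to every edge between two distinct hub vertices, so this group contributes exactly $\Degree(u)\cdot 1=\Degree(u)$ in both cases. Now take $u\in\Sink\setminus\Frontier$: by definition of the frontier $u$ has no neighbor outside $\Sink$, so group~(iii) is empty, and rule~3 gives self-loop weight $\mu\cdot 2^{-n}+\nu-\Degree(u)$; summing, the $\pm\Degree(u)$ terms cancel and $\Weight(u)=\mu\cdot 2^{-n}+\nu$. For $u\in\Frontier$, rule~2 gives self-loop weight $(\mu-|\Children(u)|)\cdot 2^{-n}+\nu-\Degree(u)$, so it remains to show that group~(iii) contributes exactly $|\Children(u)|\cdot 2^{-n}$; once this is established, the $\nu$, the $\Degree(u)$, and the $|\Children(u)|\cdot 2^{-n}$ terms cancel pairwise, again leaving $\mu\cdot 2^{-n}+\nu$.

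The only step requiring care---and the main (minor) obstacle---is pinning down group~(iii) for frontier vertices. Here I use that $\Sink$ is a connected subtree of $\SpanTree_n^x$ rooted at $x$, so the tree-parent of $u$ (if $u\neq x$) lies in $\Sink$ and is already counted among the hub-neighbors; hence every non-hub neighbor of $u$ is reached either by a tree edge---in which case it is a child in $\Children(u)$---or by a non-tree edge. By rule~6 each tree edge to a child $v\in\Children(u)$ receives weight $2^{-n}\cdot n^{-4\cdot\Distance(u)}$, and since $u\in\Frontier$ has $\Distance(u)=0$ this is simply $2^{-n}$, giving the required $|\Children(u)|\cdot 2^{-n}$. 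Every non-tree edge from $u$ to a non-hub vertex is assigned weight $0$: it matches none of rules~1--6 (rule~1 covers only edges internal to $V\setminus\Sink$, and rule~6 only parent--child tree edges), so it falls under the default cut and contributes nothing. Adding the three groups then yields $\Weight(u)=\mu\cdot 2^{-n}+\nu$ in both cases, as claimed.
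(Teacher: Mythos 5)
Your proof is correct and follows essentially the same route as the paper's: a direct case split on $u\in\Frontier$ versus $u\in\Sink\setminus\Frontier$, decomposing $\Weight(u)$ into the self-loop, the $\Degree(u)$ hub-edges of weight $1$, and (for frontier vertices) the $|\Children(u)|$ child-edges of weight $2^{-n}$, with the engineered self-loop weights making everything cancel. Your extra care in noting that non-tree edges from a frontier vertex to non-hub vertices carry weight $0$ is a point the paper's computation leaves implicit, but it does not change the argument.
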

\iffullproofs
\begin{proof}
Consider any vertex $u\in \Sink\setminus\Frontier$. We have
\begin{align*}
\Weight(u) =& \Weight(u,u) + \sum_{v\in\Neighbors(u)\setminus\{u\}} \Weight(u,v)\\
=& \mu\cdot 2^{-n}+\nu - \Degree(u) + \sum_{v\in\Neighbors(u)\setminus\{u\}} 1\\
=& \mu\cdot 2^{-n}+\nu - \Degree(u) + \Degree(u) \\
=& \mu\cdot 2^{-n}+\nu
\numberthis\label{eq:weight_sink}
\end{align*}

Similarly, consider any $u\in \Frontier$. We have 
\begin{align*}
\Weight(u) =& \Weight(u,u) + \sum_{v\in(\Neighbors(u)\cap \Sink)\setminus\{u\}} \Weight(u,v)+ \sum_{v\in \Children(u)} \Weight(u,v)\\
=& (\mu-|\Children(u)|)\cdot 2^{-n}+\nu-\Degree(u) + \sum_{v\in(\Neighbors(u)\cap \Sink) \setminus\{u\}} 1 + \sum_{v\in \Children(u)}2^{-n} \\
=& \mu\cdot 2^{-n} - |\Children(u)|\cdot 2^{-n} +\nu - \Degree(u) + \Degree(u)  + |\Children(u)|\cdot 2^{-n}\\
=& \mu\cdot 2^{-n}+\nu
\numberthis\label{eq:weight_frontier}
\end{align*}
\end{proof}
\fi

\subsection{Analysis of the Fixation Probability}
In this section we present detailed analysis of the fixation probability and we start with the
outline of the proof.

\subsubsection{Outline of the proof}
The fixation of new mutants is guaranteed by showing that each of the following four stages happens with high probability.
\begin{compactenum}
\item[(A)] In stage~1 we consider the event $\ev_1$ that a mutant arises in one of the branches  (i.e., outside the hub $\Sink$).
We show that event $\ev_1$ happens whp.

\item[(B)] In stage~2 we consider the event $\ev_2$ that a mutant occupies a vertex $v$ of the branches 
which is a neighbor to the hub. We show that given event $\ev_1$ the event $\ev_2$ happens whp.

\item[(C)] In stage~3 we consider the event $\ev_3$ that the mutants fixate in the hub.
We show that given event $\ev_2$ the event $\ev_3$ happens whp.

\item[(D)] In stage~4 we consider the event $\ev_4$ that the mutants fixate in all the branches.
We show that given event $\ev_3$ the event $\ev_4$ happens whp.

\end{compactenum}

\smallskip\noindent{\bf Crux of the proof.} 
Before the details of the proof we present the main crux of the proof. 
We say a vertex $v \not \in \Sink$ hits the hub when it places an offspring to the hub.
First, our construction ensures that the hub is isothermal. 
Second, our construction ensures that a mutant appearing in a branch reaches to a vertex adjacent
to the hub, and hits the hub with a mutant polynomially many times.
Third, our construction also ensures that the hub reaches a homogeneous configuration whp 
between any two hits to the hub. 
We now describe two crucial events. 
\begin{itemize}
\item Consider that a mutant is adjacent to a hub of residents.
Every time a mutant is introduced in the hub it has a constant probability (around $1-1/r$ for large population)
of fixation since the hub is isothermal. 
The polynomially many hits of the hub by mutants ensure that the hub becomes mutants whp. 

\item In contrast consider that a resident is adjacent to a hub.
Every time a resident is introduced in the hub it has exponentially small probability 
(around $(r-1)/(r^{|\Sink|}-1)$) of fixation.
\end{itemize}
Hence, given a hub of mutants, the probability (say, $\eta_1=2^{-\Omega(|\Sink|)}$) that the residents win over the 
hub is exponentially small.
Given a hub of mutant, the probability that the hub wins over a branch $B_j$ is also 
exponentially small  (say, $\eta_2=2^{-O(|B_j|)}$). 
More importantly the ratio of $\eta_1/\eta_2$ is also exponentially small 
(by \cref{prop:sizes} regarding the sizes of the hub and branches).
Using this property, se show that fixation the mutants reach fixation whp.
We now analyze each stage in detail.

\subsubsection{Analysis of Stage~1: Event $\ev_1$}

\begin{lemma}\label{lem:initialization}
Consider the event $\ev_1$ that the initial mutant is placed at a vertex outside the hub.
Formally, the event $\ev_1$ is that $\State_0 \cap \Sink=\emptyset$.
The event $\ev_1$ happens with probability at least $1-O(n^{-\eps/3})$,
i.e., the event $\ev_1$ happens whp.
\end{lemma}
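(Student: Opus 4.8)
The plan is to compute directly the probability that the initial mutant lands in the hub $\Sink$, treating uniform and temperature initialization separately (the convex case then follows by linearity). Under uniform initialization the mutant arises at each vertex with probability $1/n$, so $\Prob{\State_0\cap\Sink\neq\emptyset}=|\Sink|/n$; by \cref{prop:sizes} we have $|\Sink|=n^{1-\eps/3}$, which immediately gives $\Prob{\State_0\cap\Sink\neq\emptyset}=n^{-\eps/3}$ and hence $\Prob{\ev_1}\geq 1-n^{-\eps/3}$. Thus all the work is in the temperature case, where the mutant arises at $u$ with probability $\Temp(u)/n$ and I must show $\sum_{u\in\Sink}\Temp(u)=O(|\Sink|)$, i.e.\ that the hub does not absorb a disproportionate share of the total temperature (recall $\sum_u\Temp(u)=n$).

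To bound the hub temperature I would expand $\sum_{u\in\Sink}\Temp(u)=\sum_{u\in\Sink}\sum_{v\in\Neighbors(u)}\Weight(v,u)/\Weight(v)$ and split the inner sum according to whether the neighbor $v$ lies in $\Sink$ or in a branch. For the hub-internal part I swap the order of summation to rewrite it as $\sum_{v\in\Sink}\tfrac{1}{\Weight(v)}\sum_{u\in\Neighbors(v)\cap\Sink}\Weight(v,u)$. By \cref{lem:sink_weights} every hub vertex satisfies $\Weight(v)=\mu\cdot 2^{-n}+\nu$, and the inner sum is exactly the total weight $v$ sends into the hub (its self-loop plus its $\Degree(v)$ unit-weight hub edges), which is at most $\Weight(v)$; hence each term is $\leq 1$ and the hub-internal contribution is at most $|\Sink|$. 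This is where the (near-)isothermality built into the weight construction does the work.

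For the branch-to-hub part, only a frontier vertex $u\in\Frontier$ can receive an edge from outside the hub, and such a neighbor is a child $v\in\Children(u)$ with $u=\Parent(v)$, so by the weight assignment $\Weight(v,u)=2^{-n}\cdot n^{-4\Distance(u)}=2^{-n}$ (since $\Distance(u)=0$), while $\Weight(v)\geq\Weight(v,v)=n^{-2\Distance(v)}=n^{-2}$. Every such term is therefore at most $n^{2}\cdot 2^{-n}$; as there are at most $n$ branch vertices, each contributing a single parent edge into the hub, the whole branch-to-hub part is at most $n^{3}\cdot 2^{-n}=o(1)$. Combining the two parts yields $\sum_{u\in\Sink}\Temp(u)\leq |\Sink|+o(1)=n^{1-\eps/3}+o(1)$, so $\Prob{\State_0\cap\Sink\neq\emptyset}\leq n^{-\eps/3}+o(1/n)=O(n^{-\eps/3})$ under temperature initialization, and thus $\Prob{\ev_1}\geq 1-O(n^{-\eps/3})$ in both (and, by convex combination, the $\coeff$-convex) initializations.

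The main obstacle is precisely the temperature bound: one must exploit the isothermal structure of the hub to keep $\sum_{u\in\Sink}\Temp(u)$ from exceeding $|\Sink|$ by more than lower-order terms, and simultaneously verify that the only edges carrying temperature \emph{into} the hub from the branches are the $2^{-n}$-scale spanning-tree edges (all other branch-incident non-tree edges having been cut to weight $0$), so that they are genuinely negligible against the $n^{-2}$-scale self-loop weights of the adjacent branch vertices. The uniform and convex cases require no further ideas.
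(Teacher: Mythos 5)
Your proof is correct, but for the temperature case it takes a genuinely different (dual) route from the paper's. The paper lower-bounds the temperature of each individual vertex $u\notin\Sink$ by keeping only the self-loop term $\Weight(u,u)/\Weight(u)$ and showing this is $1-2^{-\Omega(n)}$; that step needs the diameter hypothesis, since the self-loop weight $n^{-2\Distance(u)}$ must dominate the $2^{-n}$-scale incident edge weights even at the deepest branch vertices, which holds because $\Distance(u)\leq\Diameter(G)\leq n^{1-\eps}$ gives $n^{-2\Distance(u)}\geq 2^{-O(n^{1-\eps}\log n)}\gg 2^{-n}$. You instead upper-bound the aggregate $\sum_{u\in\Sink}\Temp(u)$ by swapping the order of summation (valid by the symmetry $\Weight(u,v)=\Weight(v,u)$), observing that each hub vertex sends at most its total weight $\Weight(v)$ into the hub (so the hub-internal part is at most $|\Sink|$, which is where \cref{lem:sink_weights} enters), and that the only branch-to-hub edges carrying temperature are the $2^{-n}$-weight tree edges from branch roots into $\Frontier$, each contributing at most $n^{2}\cdot 2^{-n}$ against the $n^{-2}$ self-loop of the distance-$1$ root. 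Since $\sum_u\Temp(u)=n$, the two bounds are logically equivalent in what they deliver, but your version has the minor advantage of not invoking the diameter bound for this lemma at all (only the weights of distance-$1$ vertices matter), while the paper's version yields the stronger per-vertex statement that every $u\notin\Sink$ has $\Temp(u)\geq 1-2^{-\Omega(n)}$, which it reuses implicitly elsewhere. The uniform case is identical to the paper's, and the convex case by linearity is a harmless addition.
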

\ifshortproofs
\begin{proof}
We present the result for the uniform and temperature initialization below.
\begin{compactitem}
\item \emph{(Uniform initialization):} The initial mutant is placed on a vertex $u\not\in \Sink$ with probability
\[
\sum_{u\not\in \Sink}\frac{1}{n}=\frac{|V\setminus \Sink|}{n} = \frac{n-n^{1-\gamma}}{n}  = 1-\frac{n^{1-\gamma}}{n}=1-O(n^{-\eps/3})\ ;
\]
since by definition $\gamma=\eps/3$.

\item \emph{(Temperature initialization):} For every vertex $u\not\in \Sink$, we have
\[
\sum_{v\in \Neighbors(u)\setminus\{u\}}\Weight(v,u) \leq  \sum_{v\in \Neighbors(u)\setminus\{u\}} 2^{-n} = 2^{-\Omega(n)}\ ;
\]
whereas since $\Diameter(G)\leq n^{1-\eps}$ we have
\[
\Weight(u,u) = n^{-2\cdot \Distance(u)} \geq n^{-2\cdot \Diameter(G)} \geq  n^{-O(n^{1-\eps})}\ .
\]
Let $A=\Weight(u,u)$ and $B=\sum_{v\in \Neighbors(u)\setminus\{u\}}\Weight(v,u)$, and we have
\[
\frac{\Weight(u,u)}{\Weight(u)} = \frac{A}{A + B} = 1-\frac{B}{A+B} = 1-2^{-\Omega(n)}\ .
\]
Then the desired event happens with probability at least
\begin{align*}
 \sum_{u\not \in \Sink} \frac{\Temp(u)}{n}
= &\frac{1}{n} \cdot \sum_{u\not \in \Sink} \sum_{v\in \Neighbors(u)} \frac{\Weight(v,u)}{ \Weight(v)} 
\geq  \frac{1}{n} \cdot \sum_{u\not \in \Sink} \frac{\Weight(u,u)}{\Weight(u)} 
\geq \frac{1}{n} \cdot \sum_{u\not \in \Sink} \left(1-2^{-\Omega(n)}\right) \\
=& 1-O(n^{-\eps/3})
\end{align*}
since $\gamma=\eps/3$. The desired result follows.
\end{compactitem}
\end{proof}
\fi

\iffullproofs
\begin{proof}
We examine the uniform and temperature initialization schemes separately.
\begin{compactitem}
\item \emph{(Uniform initialization):} The initial mutant is placed on a vertex $u\not\in \Sink$ with probability
\[
\sum_{u\not\in \Sink}\frac{1}{n}=\frac{|V\setminus \Sink|}{n} = \frac{n-n^{1-\gamma}}{n}  = 1-\frac{n^{1-\gamma}}{n}=1-O(n^{-\eps/3})\ ;
\]
since $\gamma=\eps/3$.

\item \emph{(Temperature initialization):} For any vertex $u\not\in \Sink$, we have
\[
\sum_{v\in \Neighbors(u)\setminus\{u\}}\Weight(u,v) \leq  \sum_{v\in \Neighbors(u)\setminus\{u\}} 2^{-n} = 2^{-\Omega(n)}\ ;
\]
whereas since $\Diameter(G)\leq n^{1-\eps}$ we have
\[
\Weight(u,u) = n^{-2\cdot \Distance(u)} \geq n^{-2\cdot \Diameter(G)} \geq  n^{-O(n^{1-\eps})}\ .
\]
Note that
\[
n^{-O(n^{1-\eps})}=2^{-O(n^{1-\eps}\cdot \log n)} >> 2^{-O(n)}\ .
\]
Let $A=\Weight(u,u)$ and $B=\sum_{v\in \Neighbors(u)\setminus\{u\}}\Weight(u,v)$, and we have
\[
\frac{\Weight(u,u)}{\Weight(u)} = \frac{A}{A + B} = 1-\frac{B}{A+B} = 1-\frac{2^{-\Omega(n)}}{n^{-O(n^{1-\eps})}+2^{-\Omega(n)}}
= 1-\frac{2^{-\Omega(n)}}{n^{-O(n^{1-\eps})}} = 1-2^{-\Omega(n)}\ .
\]
Then the desired event happens with probability at least
\begin{align*}
\sum_{u\not\in \Sink}\ProbT{\State_0=\{u\}} 
=& \sum_{u\not \in \Sink} \frac{\Temp(u)}{n}
= \frac{1}{n} \cdot \sum_{u\not \in \Sink} \sum_{v\in \Neighbors(u)} \frac{\Weight(u,v)}{ \Weight(v)} 
\geq  \frac{1}{n} \cdot \sum_{u\not \in \Sink} \frac{\Weight(u,u)}{\Weight(u)} 
\geq \frac{1}{n} \cdot \sum_{u\not \in \Sink} \left(1-2^{-\Omega(n)}\right) \\
=& \frac{|V\setminus \Sink|}{n}\cdot \left(1-2^{-\Omega(n)}\right)
=  \frac{n-n^{1-\gamma}}{n}\cdot \left(1-2^{-\Omega(n)}\right)  = (1-n^{-\gamma})\cdot  \left(1-2^{-\Omega(n)}\right)\\
=& 1-O(n^{-\eps/3})
\end{align*}
since $\gamma=\eps/3$. The desired result follows.

\end{compactitem}
\end{proof}
\fi

\subsubsection{Analysis of Stage~2: Event $\ev_2$}

The following lemma states that if a mutant is placed on a vertex $w$ outside the hub, then whp
the mutant will propagate to the ancestor $v$ of $w$ at distance $\Distance(v)=1$ from the hub
(i.e., the parent of $v$ belongs to the hub).
This is a direct consequence of the weight assignment, which guarantees that for every vertex $u\not\in\Sink$,
the individual occupying $u$ will place an offspring on the parent of $u$ before some neighbor of $u$ places an offspring on $u$,
and this event happens with probability at least $1-O(n^{-1})$.

\smallskip
\begin{lemma}\label{lem:ancestor_mutant}
Consider that at some time $j$ the configuration of the Moran process on $G^{\Weight}$ is $\State_j=\{w\}$ with $w\not\in \Sink$.
Let $v\in \Ancestors(w)$ with $\Distance(v)=1$, i.e., $v$ is the ancestor of $w$ and $v$ is adjacent to the hub.
Then a subsequent configuration $\State_{t}$ with $v\in \State_t$ is reached with probability $1-O(n^{-1})$, i.e., 
given event $\ev_1$, the event $\ev_2$ happens whp.
\end{lemma}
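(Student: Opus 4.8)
The plan is to exploit two features of the weight assignment. First, every edge between two non-hub vertices that are not parent--child in $\SpanTree_n^x$ is given weight $0$, so inside the branches a mutation can spread only along tree edges; since $v \in \Ancestors(w)$, the mutation can reach $v$ only by climbing the unique tree path $P=(w=u_D, u_{D-1}, \dots, u_1=v)$ with $u_{i-1}=\Parent(u_i)$ and $\Distance(u_i)=i$. Second, the self-loop weights $n^{-2\Distance(u)}$ increase by a factor $n^{2}$ per level as one moves toward the hub, so vertices closer to the hub are ``stickier'' (they rarely reproduce outward), which creates a strong upward drift. It therefore suffices to show that the mutation propagates up $P$ to $v$ with probability $1-O(n^{-1})$.

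First I would establish the local bias. Suppose a mutant occupies a vertex $u$ at distance $d=\Distance(u)$ whose parent is still a resident, and let $k=|\Children(u)|$. By a computation as in \cref{lem:sink_weights}, the self-loop dominates and $\Weight(u)=n^{-2d}(1+o(1))$. I would then compare the two relevant events per step. The ``good'' event, that $u$ reproduces onto $\Parent(u)$, has probability proportional to $r\cdot \Weight(u,\Parent(u))/\Weight(u)=r\cdot 2^{-n}n^{-2d+4}(1+o(1))$. The ``bad'' event, that some resident neighbour reproduces onto $u$, has probability proportional to $\sum_z \Weight(z,u)/\Weight(z)=(1+k)\cdot 2^{-n}n^{-2d+2}(1+o(1))$, with one term from the parent at distance $d-1$ and one from each of the $k$ children at distance $d+1$. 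The common factors $2^{-n}$ and $1/\Fitness(\State)$ cancel, leaving a good-to-bad ratio of order $rn^{2}/(1+k)$; hence $u$ copies upward before being overwritten with probability at least $1-\tfrac{1+k}{rn^{2}}(1+o(1))$.

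Next I would chain these races along $P$ and add up the failure probabilities. The decisive step is the counting: the path contributes $\sum_d 1 = D-1 \le \Diameter(G) \le n^{1-\eps}$, while the children sets of distinct path vertices are disjoint, so $\sum_d k_d \le n$. Thus the total failure probability is at most $\tfrac{1}{rn^{2}}\bigl(n^{1-\eps}+n\bigr)(1+o(1)) = O(n^{-1})$. This is exactly where the diameter bound is needed: it keeps $\sum_d 1$ negligible against $n$, so that the linear-in-$n$ contribution $\sum_d k_d$ dominates and the $n^{2}$ in the denominator produces the clean $O(n^{-1})$.

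The hard part will be making the chaining rigorous, because the dynamics is not literally a sequence of independent one-shot races: after $u_d$ copies up, the copy at $u_d$ may be overwritten, a parent that just turned mutant may be overwritten again, and the mutation may advance and retreat many times. I would handle this by tracking the topmost (closest-to-hub) mutant and its distance $D(t)$, noting that $D(t)$ decreases precisely when the topmost mutant reproduces onto its (resident) parent and can increase only when the topmost mutant is overwritten by a resident. The per-vertex estimate above shows that every such upward step beats its competing overwrite by the ratio $rn^{2}/(1+k)$, so I would either couple $D(t)$ with a strongly upward-biased walk absorbed at level $1$, or directly union-bound the probability of ever retreating over all reproduction events, in both cases reducing the statement to the summation already performed. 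Once $v$ is reached the lemma is established, giving event $\ev_2$ conditioned on $\ev_1$.
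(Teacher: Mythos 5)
Your proposal is correct and follows the same basic strategy as the paper: reduce the climb to a one-dimensional race up the tree path, using exactly the weight ratios the paper computes in \cref{eq:weights1}, \cref{eq:weights2} and \cref{eq:weights3}. The bookkeeping differs in two ways worth noting. The paper tracks the \emph{number} of mutant ancestors of $w$, bounds the downward probability by summing over \emph{all} (up to $n$) resident descendants of the topmost mutant, obtains a uniform forward bias of $\Omega(n)$ at every level, and invokes the constant-bias gambler's-ruin formula, whose failure probability $O(n^{-1})$ is then independent of the path length. You instead track the level of the topmost mutant, get the sharper per-level bias $r n^2/(1+k_d)$, and recover $O(n^{-1})$ by summing the per-level failure probabilities, using the disjointness of the children sets along the path ($\sum_d k_d\le n$). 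Your union bound over the first retreat is a legitimate way to make the chaining rigorous --- on the complement of ``ever retreating'' the frontier advances monotonically to $v$ --- and it neatly sidesteps the fact that your tracking variable can drop by more than one level when the topmost mutant is overwritten (a complication the paper avoids by counting ancestors, at the cost of a cruder downward event). One small correction: the diameter bound is not actually what saves you here; the path length is trivially at most $n$, so $\sum_d(1+k_d)\le 2n$ already yields $O(n^{-1})$. The hypothesis $\Diameter(G)\le n^{1-\eps}$ is needed elsewhere (to control the hub size and hence Stages~1, 3 and~4), not in this lemma.
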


\ifshortproofs
\begin{proof}
Consider any configuration $\State_i$, with $i\geq j$, and let $u\neq v$ be the highest ancestor of $w$ that is occupied by a mutant (initially it is $u=w$).
Let $\rho_1$ be the probability that $u$ reproduces and places an offspring on its parent $\Parent(u)$, and $\rho_2$ be the probability that a neighbor of $u$ places an offspring on $u$.
It is straightforward from the weight assignment that
\[
\frac{\rho_1}{\rho_2} = \Omega(n)\ ;
\]
that is, whp $u$ places an offspring on its parent before $u$ is replaced by a neighbor.
Hence, if $\Walk=s_i, s_{i+1},\dots $ is the one-dimensional random walk that tracks the highest ancestor of $u$ which is occupied by a mutant,
the forward bias of $\Walk$ is $\Omega(n)$. 
The desired result follows directly from the analysis of biased random walks (see, e.g.,~\cite{Kemeny12},~\cite[Section~6.3]{Nowak06b}).
\end{proof}
\fi
\iffullproofs
\begin{proof}
Let $t$ be the first time such that $v\in \State_t$ (possibly $t=\infty$, denoting that $v$ never becomes mutant). Let $s_i$ be the random variable such that
\[
s_i=
\left\{
\begin{array}{lr}
|\State_i\cap \Ancestors(w)| &\text{ if } i<t\\
| \Ancestors(w)| &\text{ if } i\geq t
\end{array}
\right.
\]
In words, $s_i$ counts the number of mutant ancestors of $u$ until time $t$.
Given the current configuration $\State_i$ with  $0<s_i<|\Ancestors(w)|$, let $u=\arg\min_{z\in \State_i\cap \Ancestors(w)} \Distance(z)$.
The probability that $s_{i+1}=s_i+1$ is lowerbounded by the probability that $u$ reproduces and places an offspring on $\Parent(u)$.
Similarly, the probability that $s_{i+1}=s_i-1$ is upperbounded by the probability that (i)~$\Parent(u)$ reproduces and places an offspring on $u$, plus (ii)~the probability that some $z\in\Descendants(u)\setminus\State_i$ reproduces and places an offspring on $\Parent(z)$.

We now proceed to compute the above probabilities.
Consider any configuration $\State_i$, and 
and let $z$ be any child of $u$ and $z'$ any child of $z$.
The above probabilities crucially depend on the following quantities:
\[
\frac{\Weight(u, \Parent(u))}{\Weight(u)};\qquad  \frac{\Weight(u, \Parent(u))}{\Weight(\Parent(u))}; \qquad \sum_{z_i\in \Descendants(u)}\frac{\Weight(\Parent(z_i),z_i)}{\Weight(z_i)}\ .
\]

Recall that
\begin{compactitem}
\item $\Weight(u, \Parent(u))=2^{-n}\cdot n^{-4\cdot \Distance(\Parent(u))}$
\item $\Weight(u,x)=2^{-n}\cdot n^{-4\cdot \Distance(u)}$
\item $\Weight(z,z')=2^{-n}\cdot n^{-4\cdot \Distance(z)}$
\item $\Weight(\Parent(u), \Parent(\Parent(u))) = 2^{-n}\cdot n^{-4\cdot \Distance(\Parent(\Parent(u)))}$
\item $\Weight(u,u)=n^{-2\cdot\Distance(u)}$
\item $\Weight(\Parent(u), \Parent(u))=n^{-2\cdot\Distance(\Parent(u))}$
\item $\Weight(z,z)=n^{-2\cdot\Distance(z)}$
\end{compactitem}

Thus, we have

\begin{align*}
\frac{\Weight(u, \Parent(u))}{\Weight(u)}=&\frac{\Weight(u, \Parent(u))}{\Weight(u,u) + \Weight(u, \Parent(u)) + |\Children(u)|\cdot \Weight(u,x)} 
= \frac{2^{-n}\cdot n^{-4\cdot(\Distance(u)-1)}}{O(n^{-2\cdot\Distance(u)})}\\
=&\Omega(2^{-n}\cdot n^{-2\cdot(\Distance(u)-2)})\numberthis\label{eq:weights1}
\end{align*}

\begin{align*}
\frac{\Weight(u, \Parent(u))}{\Weight(\Parent(u))}=&\frac{\Weight(u, \Parent(u))}{\Weight(\Parent(u), \Parent(u)) + \Weight(\Parent(u), \Parent(\Parent(u))) + |\Children(\Parent(u))|\cdot \Weight(u,\Parent(u))}\\
=& \frac{2^{-n}\cdot n^{-4\cdot(\Distance(u)-1)}}{\Omega(n^{-2\cdot(\Distance(u)-1)})}
=O(2^{-n}\cdot n^{-2\cdot(\Distance(u)-1)})\numberthis\label{eq:weights2}
\end{align*}

\begin{align*}
\sum_{z_i\in \Descendants(u)}\frac{\Weight(\Parent(z_i),z_i)}{\Weight(z_i)} = & |\Descendants(u)|\cdot \frac{\Weight(u,z)}{\Weight(z,z)+\Weight(u,z) + |\Children(z)|\cdot \Weight(z,z')}\\
\leq & |\Descendants(u)|\cdot \frac{2^{-n}\cdot n^{-4\cdot\Distance(u)}}{\Omega(n^{-2\cdot (\Distance(u)+1)})}
= n\cdot O(2^{-n}\cdot n^{-2\cdot (\Distance(u)-1)})\\
=& O(2^{-n}\cdot n^{-2\cdot \Distance(u)+3}) \numberthis\label{eq:weights3}
\end{align*}

Thus, using \cref{eq:weights1}, \cref{eq:weights2} and \cref{eq:weights3}, we obtain

\begin{align*}
\frac{\Probr{s_{i+1}=s_i+1}}{\Probr{s_{i+1}=s_i-1}}\geq &\frac{\frac{r}{\Fitness(\State')}\cdot \frac{\Weight(u, \Parent(u))}{\Weight(u)} }{\frac{1}{\Fitness(\State')}\cdot \left(\frac{\Weight(u, \Parent(u))}{\Weight(\Parent(u))} + \sum_{z_i\in \Descendants(u)}\frac{\Weight(\Parent(z_i),z_i)}{\Weight(z_i)}\right)}\\
=&\frac{\Omega(2^{-n}\cdot n^{-2\cdot(\Distance(u)-2)})}{O(2^{-n}\cdot n^{-2\cdot(\Distance(u)-1)})+ O(2^{-n}\cdot n^{-2\cdot \Distance(u)+3})}
=\Omega(n)
\numberthis\label{eq:ratio5}
\end{align*}

Let $\alpha(n)=1-O(n^{-1})$ and consider a one-dimensional random walk $P:s'_0, s'_1,\dots$ on states $0\leq i\leq |\Ancestors(w)| $, with transition probabilities 

\[\numberthis\label{eq:walk2}
\Probr{s'_{i+1}=\ell|s'_i}=
\left\{\begin{array}{lr}
        \alpha(n) & \text{ if } 0<s'_i<|\Sink| \text{ and } \ell=s'_i+1\\
        1-\alpha(n) & \text{ if } 0<s'_i<|\Sink| \text{ and } \ell=s'_i-1\\
        0 & \text{otherwise }
\end{array}\right.
\]

Using \cref{eq:ratio5}, we have that

\[
\frac{\Probr{s'_{i+1}=s'_i+1}}{\Probr{s'_{i+1}=s'_i-1}} = \frac{\alpha(n)}{1-\alpha(n)} = \Omega(n)  \leq \frac{\Probr{s_{i+1}=s_i+1}}{\Probr{s_{i+1}=s_i-1}}\ .
\]

Hence the probability that $s_{\infty}=|\Ancestors(w)|$ is lowerbounded by the probability that $s'_{\infty}=|\Ancestors(w)|$.
The latter event occurs with probability $1-O(n^{-1})$ (see e.g.,~\cite{Kemeny12},~\cite[Section~6.3]{Nowak06b}), as desired.
\end{proof}
\fi

\subsubsection{Analysis of Stage~3: Event $\ev_3$}
We now focus on the evolution on the hub $\Sink$, and establish several useful results.
\begin{compactenum}
\item First, we show that $G^{\Weight}\restr{\Sink}$ is isothermal (\cref{lem:sink_isothermal})

\item Second, the above result implies that the hub behaves as a well-mixed population.
Considering  advantageous mutants ($r>1$) this implies  the following (\cref{lem:sink_independent}).
\begin{compactenum}
\item Every time a mutant hits a hub of only residents, then the mutant has at least a \emph{constant} probability of fixating in the hub.
\item In contrast, every time a resident hits a hub of only mutants, then the resident has \emph{exponentially small} probability of fixating in the hub.
\end{compactenum}

\item Third, we show that an initial mutant adjacent to the hub, hits the hub a polynomial number of times (\cref{lem:sink_hit}).

\item Finally, we show that an initial mutant adjacent to the hub ensures fixating in the hub whp (\cref{lem:sink_mutant}), i.e., 
we show that given event $\ev_2$ the event $\ev_3$ happens whp.
\end{compactenum}
We start with observing that the hub is isothermal, which follows by a direct application of the definition of isothermal (sub)graphs~\cite{Lieberman05}.

\smallskip
\begin{lemma}\label{lem:sink_isothermal}
The graph $G^{\Weight}\restr{\Sink}$ is isothermal.
\end{lemma}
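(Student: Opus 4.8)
The plan is to verify the definition of isothermality directly: I need to show $\Temp\restr{\Sink}(u)=1$ for every vertex $u\in\Sink$. The whole argument reduces to combining two structural facts that are already available, namely the \emph{constancy} of the weighted degree across the hub and the \emph{symmetry} of the weight function produced by the construction.

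First I would invoke \cref{lem:sink_weights}, which gives that $\Weight(u)=\mu\cdot 2^{-n}+\nu$ takes the \emph{same} value, call it $W^\star$, for every $u\in\Sink$. Because the restriction operation only transfers the weight of edges leaving $\Sink$ onto the self-loop, the total weight is unchanged, so $\Weight\restr{\Sink}(u)=\Weight(u)=W^\star$ for all $u\in\Sink$; in particular the denominator $\Weight\restr{\Sink}(v)$ appearing in the temperature sum is the constant $W^\star$, independent of $v$. Moreover the neighbours of $u$ inside $G^{\Weight}\restr{\Sink}$ are exactly $\Neighbors(u)\cap\Sink$ (including $u$ itself through its self-loop), so $\Weight\restr{\Sink}(u)=\sum_{v\in\Neighbors(u)\cap\Sink}\Weight\restr{\Sink}(u,v)$, over precisely the index set used by the temperature.

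Next I would use that the construction yields a \emph{symmetric} weight function, so $\Weight\restr{\Sink}(v,u)=\Weight\restr{\Sink}(u,v)$ (the self-loop term $v=u$ is trivially symmetric, and for $v\neq u$ both lie in $\Sink$, so the restriction leaves the edge weight untouched). Pulling the constant denominator out of the sum and applying symmetry then gives
\[
\Temp\restr{\Sink}(u)=\sum_{v\in\Neighbors(u)\cap\Sink}\frac{\Weight\restr{\Sink}(v,u)}{\Weight\restr{\Sink}(v)}=\frac{1}{W^\star}\sum_{v\in\Neighbors(u)\cap\Sink}\Weight\restr{\Sink}(u,v)=\frac{\Weight\restr{\Sink}(u)}{W^\star}=\frac{W^\star}{W^\star}=1 .
\]
Since $u\in\Sink$ was arbitrary, $G^{\Weight}\restr{\Sink}$ is isothermal.

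There is no serious obstacle here; the statement is essentially the classical observation of Lieberman et al.\ that a symmetric weighting with uniform weighted degree is isothermal. The only thing that requires care — and the step I would double-check — is the bookkeeping around the restriction $\restr{\Sink}$: one must confirm that transferring the weight of edges leaving $\Sink$ onto the self-loop both preserves the common weighted degree $W^\star$ and leaves the symmetry intact, so that the cancellation above is exact rather than approximate. The role of \cref{lem:sink_weights} is precisely to guarantee that this common degree is genuinely uniform across the hub; the $\mu\cdot 2^{-n}$ and $\nu-\Degree(u)$ offsets in the self-loop weights were chosen exactly so as to equalize $\Weight(u)$ over all $u\in\Sink$, which is what makes the denominator a constant and drives the argument.
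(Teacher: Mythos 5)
Your proof is correct, and it reaches the conclusion by a cleaner route than the paper's. The paper verifies $\Temp\restr{\Sink}(u)=1$ by splitting into the two cases $u\in\Sink\setminus\Frontier$ and $u\in\Frontier$ and explicitly re-summing the incoming weights in each case (re-deriving, in effect, the same cancellation of the $-\Degree(u)$ and $-|\Children(u)|\cdot 2^{-n}$ offsets that \cref{lem:sink_weights} already performed for the outgoing weights). You instead observe that the restriction preserves both the symmetry of $\Weight$ and the total weight $\Weight\restr{\Sink}(u)=\Weight(u)$, so the numerator sum $\sum_{v\in\Neighbors(u)\cap\Sink}\Weight\restr{\Sink}(v,u)$ can be flipped by symmetry into $\Weight\restr{\Sink}(u)$ and handled by a single invocation of \cref{lem:sink_weights}; the constant denominator then gives the ratio $1$ with no case analysis. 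This is exactly the classical ``symmetric weights with uniform weighted degree implies isothermal'' argument, and it makes transparent why the frontier self-loops were padded the way they were. What the paper's version buys in exchange is an explicit audit of the frontier bookkeeping (the transfer of the $|\Children(u)|\cdot 2^{-n}$ mass onto the self-loop under $\restr{\Sink}$), which your proof delegates to the general statement that restriction preserves $\Weight(u)$ --- a statement the paper does assert in Section 5.1, so your reliance on it is legitimate. No gap.
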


\begin{proof}

Consider any vertex $u\in \Sink\setminus\Frontier$. We have
\begin{align*}
\Temp\restr{X}(u)=&\sum_{v\in \Neighbors(u)\cap \Sink}\frac{\Weight\restr{\Sink}(v,u)}{\Weight\restr{\Sink}(v)} =
\frac{\Weight\restr{\Sink}(u,u)}{\Weight\restr{\Sink}(u)} + \sum_{v\in (\Neighbors(u)\setminus\{u\})\cap \Sink }\frac{\Weight\restr{\Sink}(v,u)}{\Weight\restr{\Sink}(v)} \\
=&\frac{\Weight(u,u)}{\Weight(u)} + \sum_{v\in (\Neighbors(u)\setminus\{u\})\cap \Sink }\frac{\Weight(v,u)}{\Weight(v)}\\
=&\frac{1}{\mu\cdot 2^{-n}+\nu} \cdot \left(\Weight(u,u) + \sum_{v\in (\Neighbors(u)\setminus\{u\})\cap \Sink }1\right)\\
=&\frac{1}{\mu\cdot 2^{-n}+\nu}\cdot (\mu\cdot 2^{-n}+\nu - \Degree(u) + \Degree(u))\\
=& 1
\end{align*}

since by \cref{lem:sink_weights} we have $\Weight(u)=\mu\cdot 2^{-n}+\nu$. Similarly, consider any $u\in \Frontier$. We have 
\begin{align*}
\Temp\restr{X}(u)=&\sum_{v\in \Neighbors(u)\cap \Sink}\frac{\Weight\restr{\Sink}(v,u)}{\Weight\restr{\Sink}(v)} =
\frac{\Weight\restr{\Sink}(u,u)}{\Weight\restr{\Sink}(u)} + \sum_{v\in (\Neighbors(u)\setminus\{u\})\cap \Sink }\frac{\Weight\restr{\Sink}(v,u)}{\Weight\restr{\Sink}(v)} \\
=&\frac{\Weight(u,u) + \sum_{v\in \Neighbors(u)\setminus\Sink}\Weight(u,v)}{\Weight(u)} + \sum_{v\in (\Neighbors(u)\setminus\{u\})\cap \Sink }\frac{\Weight(v,u)}{\Weight(v)}\\
=&\frac{1}{\mu\cdot 2^{-n}+\nu} \cdot \left(\Weight(u,u) +\sum_{v\in \Neighbors(u)\setminus\Sink} 2^{-n}  + \sum_{v\in (\Neighbors(u)\setminus\{u\})\cap \Sink }1\right)\\
=&\frac{1}{\mu\cdot 2^{-n}+\nu}\cdot ((\mu-|\Children(u)|)\cdot 2^{-n}+\nu - \Degree(u) + |\Children(u)|\cdot 2^{-n} + \Degree(u))\\
=& 1
\end{align*}
Thus for all $u\in \Sink$ we have $\Temp\restr{X}(u)=1$, as desired.
\end{proof}

\smallskip
\begin{lemma}\label{lem:sink_independent}
Consider that at some time $j$ the configuration of the Moran process on $G^{\Weight}$ is $\State_j$.
\begin{compactenum}
\item If $|\Sink\cap \State_j|\geq 1$, i.e., there is at least one mutant in the hub, 
then a subsequent configuration $\State_{t}$ with $\Sink\subseteq \State_t$  will be reached with probability at least $1-r^{-1} - 2^{-\Omega(n)}$
(i.e., mutants fixate in the hub with constant probability). 

\item If $|\Sink\setminus \State_j|=1$, i.e., there is exactly one resident in the hub, 
then a subsequent configuration $\State_{t}$ with $\Sink\subseteq \State_t$  will be reached with probability at least 
$1-2^{-\Omega(m)}$, where $m=n^{1-\gamma}$
(i.e., mutants fixate in the hub with probability exponentially close to~1).
\end{compactenum}

\end{lemma}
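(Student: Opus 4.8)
The heart of the argument is that, by \cref{lem:sink_isothermal}, the induced subgraph $G^{\Weight}\restr{\Sink}$ is isothermal, so its \emph{internal} dynamics coincide with the Moran process on a well-mixed population of $m=|\Sink|=n^{1-\gamma}$ individuals, while the weight assignment makes these internal dynamics overwhelmingly faster than any interaction across the hub boundary. The plan is therefore to first compute both fixation probabilities for the \emph{isolated} isothermal hub, and then to couple the hub-restriction of the full process on $G^{\Weight}$ with this isolated process, arguing that the coupling breaks only with probability $2^{-\Omega(n)}$.

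For the isolated hub, the isothermal theorem~\cite{Lieberman05} gives that, starting from $k$ mutants among $m$ vertices, the fixation probability equals $\frac{1-r^{-k}}{1-r^{-m}}$, exactly as in the well-mixed case. Taking $k\ge 1$ gives fixation probability at least $\frac{1-r^{-1}}{1-r^{-m}}\ge 1-r^{-1}$ (isolated part~1), and taking $k=m-1$ (a single resident) gives mutant fixation probability
\[
\frac{1-r^{-(m-1)}}{1-r^{-m}} = 1-\frac{r-1}{r^{m}-1} = 1-2^{-\Omega(m)}
\]
(isolated part~2).

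To transfer these bounds I would couple the two processes so that they agree on $\Sink$ until the first \emph{boundary event}, namely a step in which either a vertex of $\Sink$ reproduces outside $\Sink$ or a vertex outside $\Sink$ reproduces onto $\Sink$. The weight assignment is tailored for exactly this: all cross edges that are not spanning-tree edges carry weight $0$, and every spanning-tree edge crossing the frontier carries weight at most $2^{-n}$, whereas the internal hub edges all have weight $1$ and, by \cref{lem:sink_weights}, $\Weight(u)=\mu\cdot 2^{-n}+\nu$ for every $u\in\Sink$. Consequently the probability of a boundary event in any single step is $2^{-\Omega(n)}$, while whenever the hub is not yet homogeneous the probability of an internal, configuration-changing reproduction is at least inverse-polynomial in $n$ (the connected hub always contains a mutant--resident edge of weight $1$). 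Hence, conditioned on the next step that either changes the hub or crosses its boundary, the probability it is a boundary event is $2^{-\Omega(n)}$.

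It remains to bound how many internal steps are needed before the hub homogenizes. The isolated process is a Moran process on a connected $m$-vertex graph, whose expected number of configuration-changing steps before absorption is polynomial in $n$; combining this with the per-step bound above through a union bound (a ``race'' between internal homogenization and the first boundary event) shows that a boundary event precedes homogenization with probability at most $\mathrm{poly}(n)\cdot 2^{-\Omega(n)}=2^{-\Omega(n)}$. Up to this coupling error the hub resolves exactly as the isolated isothermal hub, giving $1-r^{-1}-2^{-\Omega(n)}$ for part~1 and, since $2^{-\Omega(n)}\le 2^{-\Omega(m)}$, the bound $1-2^{-\Omega(m)}$ for part~2; both are valid lower bounds on \emph{ever} reaching $\Sink\subseteq\State_t$, since subsequent hits can only help. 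I expect the last coupling step to be the main obstacle: one must simultaneously certify the exponentially small per-step boundary probability \emph{and} control the absorption time of the isothermal hub, ruling out pathological slow-downs of the internal dynamics that would otherwise let the rare boundary events accumulate over too many steps.
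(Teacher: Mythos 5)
Your proposal is correct, but it routes the argument differently from the paper. The paper never isolates the hub nor invokes the isothermal fixation formula $\frac{1-r^{-k}}{1-r^{-m}}$; instead it projects the full process onto the mutant count $s_i=|\Sink\cap\State_i|$ and shows that the per-step ratio of down-moves to up-moves, \emph{with the boundary interactions already folded in}, is at most $\frac{1}{r}+2^{-\Omega(n)}$ (the adverse cross-boundary replacements contribute $O(n^2 2^{-n})$ to the numerator, while an internal weight-$1$ mutant--resident edge contributes $\Omega(n^{-2})$ to the denominator). It then stochastically dominates this by a homogeneous walk with bias $\beta=\frac{1}{r}+2^{-\Omega(n)}$ and applies the gambler's-ruin formula to the \emph{perturbed} walk, giving $\rho_1\ge 1-\beta$ and $\rho_2\ge 1-\beta^{|\Sink|-1}$. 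The payoff of that decomposition is that no control of absorption time is needed in this lemma: because the rare boundary events are absorbed into the bias rather than treated as coupling failures, it is irrelevant how long the walk wanders. Your route --- exact isothermal computation for the isolated hub, plus a coupling that breaks at the first boundary event, plus a race between homogenization and that first break --- reaches the same bounds but must additionally certify that the embedded configuration-changing walk absorbs in expected $\mathrm{poly}(n)$ steps and then argue via a Wald/optional-stopping or time-truncation step (e.g.\ cutting at $2^{n/2}$ steps) that $\mathrm{poly}(n)\cdot 2^{-\Omega(n)}$ bounds the failure probability; you correctly flag this as the delicate point, and it is exactly the machinery the paper defers to \cref{lem:sink_leak}, where it genuinely is needed. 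Your approach is conceptually more transparent (``the hub is a well-mixed population until an exponentially rare leak''), at the cost of importing that extra timing argument here; also make sure the coupling treats outside-to-hub replacements conservatively as failures (they can only help when the invader is a mutant), which you implicitly do.
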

\ifshortproofs
\begin{proof}
Given a configuration $\State_i$, denote by $s_i=|\Sink \cap \State_i|$ the number of mutant individuals in the hub.
Consider any configuration $\State_i$ with $0<s_i<|\State_i|$, i.e., mutants and residents coexist in the hub.
It follows directly from the weight assignment and \cref{lem:sink_isothermal} that
the ratio of probabilities of increasing the mutants by one (i.e., $s_{i+1}=s_i+1$) over decreasing the mutants by one (i.e., $s_{i+1}=s_i-1$)
is at most
\[
\frac{1}{r} + 2^{-\Omega(n)} = \beta\ .
\]
Hence we have a one-dimensional random walk $\Walk=s_j, s_{j+1},\dots $ with backward bias $\beta=1/r + 2^{-\Omega(n)}$.
We use the standard results on one-dimensional random walks (see, e.g.,~\cite{Kemeny12},~\cite[Section~6.3]{Nowak06b}) to obtain a lowerbound on the probability that the Moran process reaches a configuration $\State_j$ with $\Sink\subseteq \State_t$. 
In particular, we have the following.
\begin{compactenum}
\item If initially $s_j=|\Sink\cap \State_j|\geq 1$ (i.e., a single mutant invades a resident hub), then the probability is
\[
\rho_1\geq \frac{1-\beta}{1-\beta^{|\Sink|}} \geq 1-r^{-1} - 2^{-\Omega(n)}\ .
\]
\item If initially $s_j=|\Sink\setminus \State_j|=1$ (i.e., a single resident invades a mutant hub), then the probability is
\[
\rho_2\geq \frac{1-\beta^{-1}}{1-\beta^{-|\Sink|}} \geq 1-2^{-\Omega(m)}\ .
\]
where $m=|\Sink|=n^{1-\gamma}$.
\end{compactenum}
The desired result follows.
\end{proof}
\fi

\iffullproofs
\begin{proof}
Given a configuration $\State_i$, denote by $s_i=|\Sink \cap \State_i|$.
Let $\State_i$ be any configuration of the Moran process with $0<s_i<|\State_i|$, 
$u$ be the random variable that indicates the vertex that is chosen for reproduction in $\State_i$,
and $\State_{i+1}$ be the random variable that indicates the configuration of the population in the next step.
By \cref{lem:sink_isothermal}, the subgraph $G^{\Weight}\restr{\Sink}$ induced by the hub $\Sink$ is isothermal, thus
\[
\frac{\Probr{s_{i+1}=s_i-1|u\in \Sink}}{\Probr{s_{i+1}=s_i+1|u\in \Sink}}=\frac{1}{r}\ . \numberthis\label{eq:ratio1}
\]
Additionally,
\begin{align*}
 \Probr{s_{i+1}=s_i-1|u\not \in \Sink} \leq & \sum_{\substack{v \in \Frontier \\ u\in\Children(v)}} \left( \frac{1}{\Fitness(\State_i)}\cdot \frac{\Weight(u,v)}{\Weight(u)} \right)
\leq n^{-1}\cdot \sum_{\substack{v \in \Frontier \\ u\in\Children(v)}} \frac{ 2^{-n}}{n^{-2}}\\
\leq&  n^{-1}\cdot n \cdot 2^{-n}\cdot n^{2} =O(n^{2}\cdot 2^{-n}) \numberthis\label{eq:ratio2}
\end{align*}

since $1/\Fitness(\State_i)\leq n^{-1}$, $\Weight(u,v)=2^{-n}$ and $\Weight(u,u)=n^{-2}$.
Moreover, as $\Sink$ is heterogeneous, it contains at least a mutant vertex $v$ and a resident vertex $w\in \Neighbors(v)$,
and $v$ reproduces with probability $r/\Fitness(\State_i)\geq n^{-1}$, and replaces the individual $v\in \Sink$
with probability at least $1/\Weight(v)$. Hence we have
\begin{align*}
\Probr{s_{i+1}=s_i+1|u\in \Sink}\cdot \Probr{u\in \Sink} \geq  \frac{1}{\Weight(u)} \cdot \frac{r}{\Fitness(\State_i)} \geq 
\frac{1}{\mu\cdot 2^{-n} + \nu} \cdot n^{-1} \geq \frac{1}{n\cdot 2^{-n} + n}\cdot n^{-1} = \Omega(n^{-2}) \numberthis\label{eq:ratio3}
\end{align*}

since by \cref{lem:sink_weights} we have $\Weight(v)=\mu\cdot 2^{-n}+\nu$.
Using \cref{eq:ratio1}, \cref{eq:ratio2} and \cref{eq:ratio3}, we have
\begin{align*}
\frac{\Probr{s_{i+1}=s_i-1}}{\Probr{s_{i+1}=s_i+1}}=&\frac{\Probr{s_{i+1}=s_i-1|u\in \Sink}\cdot \Probr{u\in \Sink} + \Probr{s_{i+1}=s_i-1|u\not \in \Sink}\cdot \Probr{u\not \in \Sink}}{\Probr{s_{i+1}=s_i+1|u\in \Sink}\cdot \Probr{u \in \Sink} + \Probr{s_{i+1}=s_i+1|u \not \in \Sink}\cdot \Probr{u \not \in \Sink}}\\
\leq& \frac{\Probr{s_{i+1}=s_i-1|u\in \Sink}\cdot \Probr{u\in \Sink} + \Probr{s_{i+1}=s_i-1|u\not \in \Sink}\cdot \Probr{u\not \in \Sink}}{\Probr{s_{i+1}=s_i+1|u\in \Sink}\cdot \Probr{u \in \Sink}}\\
\leq & \frac{\Probr{s_{i+1}=s_i-1|u\in \Sink}}{\Probr{s_{i+1}=s_i+1|u\in \Sink}} + O(n^{2})\cdot \Probr{s_{i+1}=s_i-1|u\not \in \Sink} = \frac{1}{r} + 2^{-\Omega(n)}
\numberthis\label{eq:ratio4}
\end{align*}
Hence, $s_{j}, s_{j+1},\dots$ performs a one-dimensional random walk on the states $0\leq i\leq |\Sink|$, with the ratio of transition probabilities given by \cref{eq:ratio4}.
Let $\alpha(n)=r/(r+1+2^{-\Omega(n)})$ and consider the one-dimensional random walk $\Walk:s'_{j}, s'_{j+1},\dots$ on states $0\leq i\leq |\Sink|$, with transition probabilities 

\[\numberthis\label{eq:walk1}
\Probr{s'_{i+1}=\ell|s'_i}=
\left\{\begin{array}{lr}
        \alpha(n) & \text{ if } 0<s'_i<|\Sink| \text{ and } \ell=s'_i+1\\
        1-\alpha(n) & \text{ if } 0<s'_i<|\Sink| \text{ and } \ell=s'_i-1\\
        0 & \text{otherwise }
\end{array}\right.
\]

Using \cref{eq:ratio4} we have that
\[
\frac{\Probr{s'_{i+1}=s'_i-1}}{\Probr{s'_{i+1}=s'_i+1}} = \frac{1-\alpha(n)}{\alpha(n)} = \frac{1}{r} + 2^{-\Omega(n)} \geq \frac{\Probr{s_{i+1}=s_i-1}}{\Probr{s_{i+1}=s_i+1}}\ .
\]
 
Let $\rho_1$ (resp. $\rho_2$) be the probability that the Moran process starting on configuration $\State_j$ with $|\Sink\cap \State_j|\geq 1$ (resp. $|\Sink\setminus \State_j|=1$)  will reach a configuration $\State_t$ with $\Sink\subseteq \State_t$.
We have that $\rho_1$ (resp. $\rho_2$) is lowerbounded by the probability that $\Walk$ gets absorbed in $s'_{\infty}=|\Sink|$ when it starts from $s'_{j}=1$ (resp. $s'_{j}=|\Sink|-1$).
Let 
\[
\beta=\frac{\Probr{s'_{i+1}=s'_i-1}}{\Probr{s'_{i+1}=s'_i+1}}=\frac{1}{r} + 2^{-\Omega(n)}<1\ ;
\]
and we have (see e.g.,~\cite{Kemeny12},~\cite[Section~6.3]{Nowak06b})
\[
\rho_1\geq \frac{1-\beta}{1-\beta^{|\Sink|}}\geq 1-\beta = 1-\frac{1}{r} - 2^{-\Omega(n)}\ ;
\]
and 
\[
\rho_2\geq 1-\frac{1-\beta^{-1}}{1-\beta^{-|\Sink|}} \geq 1-\frac{\beta^{-1}}{\beta^{-|\Sink|}} = 1-\beta^{|\Sink|-1}= 1-\left(\frac{1}{r} + 2^{-\Omega(n)}\right)^{n^{1-\gamma}-1}= 1-2^{-\Omega(n^{1-\gamma})}\ ;
\]
since $\beta^{-|\Sink|} > \beta^{-1}$ and thus $(\beta^{-1}-1)/(\beta^{-|\Sink|}-1)\leq \beta^{-1}/\beta^{-|\Sink|}$.
The desired result follows.
\end{proof}
\fi

\smallskip
\begin{lemma}\label{lem:sink_hit}
Consider that at some time $j$ the configuration of the Moran process on $G^{\Weight}$ is $\State_j$ such that 
$v\in \State_j$ for some $v\not\in \Sink$ that is adjacent to the hub ($\Distance(v)=1)$.
Then a mutant hits the hub at least $n^{1/3}$ times with probability $1-O(n^{-1/3})$.
\end{lemma}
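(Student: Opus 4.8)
The plan is to prove the bound by focusing on the single frontier-adjacent mutant vertex $v$ and setting up, at each step, a race between two competing events: a \emph{hit}, in which the mutant at $v$ reproduces along the tree edge into the hub (onto $\Parent(v)\in\Frontier$), and a \emph{loss}, in which a resident neighbour of $v$ reproduces onto $v$ and turns it resident. Every other event (self-loops anywhere, $v$ reproducing onto one of its children, a mutant neighbour reproducing onto $v$, or anything not touching $v$) leaves the mutant status of $v$ unchanged. Hence, restricting attention to the subsequence of hit/loss events, $v$ remains mutant until the first loss, and all hit/loss events preceding that first loss are hits. It therefore suffices to show that the first loss is preceded by at least $n^{1/3}$ hits with probability $1-O(n^{-1/3})$; since the hits contributed by $v$ form a subset of all mutant hits on the hub, the claim follows.

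First I would read the relevant weights off the construction. Since $\Distance(v)=1$ and $v\not\in\Sink$, rule~(4) gives $\Weight(v,v)=n^{-2}$, rule~(6) gives $\Weight(v,\Parent(v))=2^{-n}$ (because $\Distance(\Parent(v))=0$), and each child edge has $\Weight(v,z)=2^{-n}\cdot n^{-4}$; all remaining incident weights are $O(2^{-n})$. Thus the self-loop dominates and $n^{-2}\le\Weight(v)\le 2n^{-2}$. Combined with $1/n\le r/\Fitness(\State_i)\le r/n$, which holds in \emph{every} configuration since $n\le\Fitness(\State_i)\le r n$, this yields a configuration-independent lower bound on the per-step hit probability,
\[
p=\frac{r}{\Fitness(\State_i)}\cdot\frac{\Weight(v,\Parent(v))}{\Weight(v)}\ \ge\ \frac{1}{n}\cdot\frac{2^{-n}}{2n^{-2}}=\frac{n\cdot 2^{-n}}{2}.
\]

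Next I would bound the loss probability uniformly. The resident neighbours of $v$ lie among $\Parent(v)\in\Sink$ and the at most $n$ children of $v$. For the parent, \cref{lem:sink_weights} gives $\Weight(\Parent(v))=\mu\cdot 2^{-n}+\nu\ge 1$, so $\Weight(\Parent(v),v)/\Weight(\Parent(v))\le 2^{-n}$; for each child $z$ we have $\Distance(z)=2$, hence $\Weight(z)\ge\Weight(z,z)=n^{-4}$ and $\Weight(z,v)/\Weight(z)\le 2^{-n}\cdot n^{-4}/n^{-4}=2^{-n}$. Using $r/\Fitness(\State_i)\le r/n$ for every neighbour and summing over the $\le n+1$ neighbours gives
\[
q\ \le\ (n+1)\cdot\frac{r}{n}\cdot 2^{-n}\ \le\ 2r\cdot 2^{-n}.
\]
The crucial consequence is $p/q=\Omega(n)$ for fixed $r$: the polynomially large self-loop makes the single tree edge into the hub polynomially more visible than the combined, exponentially small weights along which the neighbours can overwrite $v$.

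Finally, conditioning on the next hit/loss event, the probability that it is a loss is at most $q/(p+q)\le q/p\le 4r/n$, and this holds uniformly over all configurations in which $v$ is a mutant. A union bound over the first $n^{1/3}$ hit/loss events shows the probability that any of them is a loss is at most $n^{1/3}\cdot(4r/n)=O(n^{-2/3})\subseteq O(n^{-1/3})$; on the complementary event the first $n^{1/3}$ hit/loss events are all hits, so a mutant hits the hub at least $n^{1/3}$ times. I expect the main obstacle to be establishing the two probability bounds \emph{uniformly across all reachable configurations} — in particular arguing that $\Weight(v)$ is always dominated by its self-loop and that the contribution of the possibly $\Theta(n)$ children to the loss probability remains exponentially small — after which the race and the union bound are routine consequences of $p/q=\Omega(n)$.
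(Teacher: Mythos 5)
Your proposal is correct and rests on the same core decomposition as the paper's proof: a race between the \emph{hit} event in which the mutant at $v$ reproduces along the edge to $\Parent(v)$ (per-step probability $\Omega(n\cdot 2^{-n})$, from $\Weight(v,\Parent(v))=2^{-n}$ against the dominant self-loop $\Weight(v,v)=n^{-2}$) and the \emph{loss} event in which a neighbour overwrites $v$ (per-step probability $O(2^{-n})$, summing the parent's contribution via $\Weight(\Parent(v))\geq 1$ from Lemma~\ref{lem:sink_weights} and the at most $n$ children's contributions). Where you diverge is in how the race is finished. The paper bounds the number of steps until $n^{1/3}$ hits occur by a negative binomial random variable, bounds the waiting time for the first overwrite by a geometric one, and applies Markov's inequality at the threshold $\alpha=2^{n}n^{-1/3}$; the $O(n^{-1/3})$ error in the lemma comes precisely from that Markov step. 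You instead pass to the embedded sequence of hit-or-loss events, observe that each such event is a loss with probability at most $q/(p+q)=O(1/n)$ uniformly over all reachable configurations in which $v$ is still a mutant, and union-bound over the first $n^{1/3}$ of them. This is more elementary (no waiting-time distributions, no Markov inequality) and yields the sharper bound $O(n^{-2/3})$, which of course implies the stated $O(n^{-1/3})$. The only point worth making explicit in your write-up is that hit-or-loss events continue to occur with positive probability in every configuration while $v$ is a mutant (a hit does not require the hub to be resident, and $v$ can only turn resident through a loss), so the first $n^{1/3}$ such events are almost surely realized unless a loss intervenes; with that observation your argument is complete.
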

\ifshortproofs
\begin{proof}
For any configuration $\State_i$ occurring after $\State_j$, let
\begin{compactenum}
\item $A$ be the event that $v$ places an offspring on $\Parent(v)$ in $\State_{i+1}$, and
\item $B$ be the event that a neighbor of $v$ places an offspring on $v$ in $\State_{i+1}$,
\end{compactenum}
and let $\rho_A$ and $\rho_B$ be the corresponding probabilities.
It follows directly from the weight assignments that
\[
\rho_A=\Omega\left( n\cdot 2^{-n} \right) \qquad \text{ and } \qquad \rho_B = O(2^{-n})\ ;
\]
i.e., the event $A$ occurs $\Omega(n)$ times more frequently than event $B$.
Using Markov's inequality, we have that with probability at least $\Omega(n^{1/3})$, the event $A$ occurs at least $n^{1/3}$ times before event $B$ occurs.
The desired result follows.
\end{proof}
\fi

\iffullproofs
\begin{proof}
For any configuration $\State_i$ occurring after $\State_j$, let
\begin{compactenum}
\item $A$ be the event that $v$ places an offspring on $\Parent(v)$ in $\State_{i+1}$, and
\item $B$ be the event that a neighbor of $v$ places an offspring on $v$ in $\State_{i+1}$,
\end{compactenum}
and let $\rho_A$ and $\rho_B$ be the corresponding probabilities.
Using \cref{eq:weights1}, we have
\[
\rho_A=\frac{r}{\Fitness(\State_i)}\cdot \frac{\Weight(v, \Parent(v))}{\Weight(v)}=\Omega\left( n\cdot 2^{-n} \right)\ ;
\numberthis\label{eq:rhoA}
\]
and using \cref{eq:weights2} and \cref{eq:weights3}
\[
\rho_B\leq \frac{r}{\Fitness(\State_i)}\cdot \left(\frac{\Weight(v, \Parent(v))}{\Weight(\Parent(u))} + \sum_{z\in \Children(v)}\frac{\Weight(v,z)}{\Weight(z)}\right)
\leq \frac{r}{n}\cdot \left(2^{-n}+O\left(n\cdot 2^{-n}\right)\right) = 2^{-\Omega(n)}\ .
\numberthis\label{eq:rhoB}
\]
since $\Parent(u)\in \Sink$ and by \cref{lem:sink_weights} we have $\Weight(\Parent(u))\geq 1$.
Let $X$ be the random variable that counts the time required until event $A$ occurs $n^{1/3}$ times.
Then, for all $\ell\in \Nats$ we have $\Prob{X\geq \ell}\leq \Prob{X'\geq \ell}$
where $X'$ is a random variable that follows the negative binomial distribution on $n^{1/3}$ failures with success rate $\rho_{X'}=1-O(n\cdot 2^{-n})\leq \rho_A$
(using \cref{eq:rhoA}).
The expected value of $X'$ is
\[
\Expect{X'} = \frac{\rho_{X'}\cdot n^{1/3}}{1-\rho_{X'}} = O\left(\frac{1-n\cdot 2^{-n}}{n^{2/3}\cdot 2^{-n}}\right)\ .
\]
Let $\alpha=2^n\cdot n^{-1/3}$, and by Markov's inequality, we have
\[
\Prob{X'\geq \alpha}\leq \frac{\Expect{X'}}{\alpha}=\frac{O\left(\frac{1-n\cdot 2^{-n}}{n^{2/3}\cdot 2^{-n}}\right)}{2^n\cdot n^{-1/3}}=O(n^{-1/3})\ .
\]
Similarly, let $Y$ be the random variable that counts the time required until event $B$ occurs.
Then, for all $\ell\in \Nats$, we have $\Prob{Y\leq \ell}\leq \Prob{Y'\leq \ell}$, where $Y'$ is a geometrically distributed variable
with rate $\rho_{Y'}=2^{-\Omega(n)}\geq \rho_B$ (using \cref{eq:rhoB}).
Then
\[
\Prob{Y'\leq \alpha} = 1-(1-\rho_{Y'})^{\alpha} = O(n^{-1/3})\ ;
\]
and thus
\[
\Prob{Y\leq  X} \leq \Prob{Y\leq \alpha} + \Prob{X\geq \alpha} \leq \Prob{Y'\leq \alpha } + \Prob{X'\geq \alpha} = O(n^{-1/3})\ .
\numberthis\label{eq:hitsink}
\]
Hence, with probability at least $1-O(n^{1/3})$, the vertex $v$ places an offspring on $\Parent(v)$ at least $n^{1/3}$ times before 
it is replaced by a neighbor.
The desired result follows.
\end{proof}
\fi

\smallskip
\begin{lemma}\label{lem:sink_mutant}
Consider that at some time $j$ the configuration of the Moran process on $G^{\Weight}$ is $\State_j$ with
$v\in \State_j$ for some $v\not\in \Sink$ that is adjacent to the hub ($\Distance(v)=1)$.
Then a subsequent configuration $\State_t$ with $\Sink \subseteq \State_t$ (mutants fixating in the hub) 
is reached with probability $1-O(n^{-1/3})$, i.e., given event $\ev_2$, the event $\ev_3$ happens whp.
\end{lemma}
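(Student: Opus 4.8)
The plan is to combine \cref{lem:sink_hit} and \cref{lem:sink_independent} through a \emph{retry} argument: the mutant at $v$ hits the hub polynomially many times, and each hit is an essentially independent opportunity to fixate the hub, where the per-hit success probability is a positive constant. First I would invoke \cref{lem:sink_hit} to conclude that, with probability $1-O(n^{-1/3})$, the individual occupying $v$ places a mutant offspring into the hub at least $n^{1/3}$ times before it is itself overwritten by a neighbor; call this event $H$, and condition on it throughout.

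Next I would exploit the separation of time scales built into the weight assignment. A hit of the hub occurs at rate $O(n\cdot 2^{-n})$ per Moran step, whereas an internal reproduction event inside the hub (which is isothermal by \cref{lem:sink_isothermal}) changes $|\Sink\cap\State_i|$ at rate $\Omega(n^{-2})$. Consequently, between any two consecutive hits the hub performs $2^{\Omega(n)}$ internal steps in expectation, so by the absorption analysis underlying \cref{lem:sink_independent} it reaches a homogeneous configuration (all residents or all mutants) before the next hit, except on an event of probability $2^{-\Omega(n)}$. A union bound over the at most polynomially many inter-hit intervals shows that, outside an event of probability $2^{-\Omega(n)}$, every hit after which the hub fails to fixate is followed by the hub returning to the all-resident state before the next hit. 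Hence each of the $\geq n^{1/3}$ hits may be treated as a fresh attempt in which a single mutant invades a resident hub.

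By \cref{lem:sink_independent}(1), each such attempt reaches the all-mutant configuration $\Sink\subseteq\State_t$ with probability at least $p:=1-r^{-1}-2^{-\Omega(n)}$, which for fixed $r>1$ is a positive constant bounded away from $0$ (and away from $1$). Dominating the true hub walk by the biased walk of \cref{lem:sink_independent}, the attempts stochastically dominate independent $\mathrm{Bernoulli}(p)$ trials, so the probability that all $n^{1/3}$ of them fail is at most $(1-p)^{n^{1/3}}=\left(r^{-1}+2^{-\Omega(n)}\right)^{n^{1/3}}=2^{-\Omega(n^{1/3})}$. Adding the three error contributions, namely the $O(n^{-1/3})$ from $H$, the $2^{-\Omega(n)}$ from inter-hit equilibration, and the $2^{-\Omega(n^{1/3})}$ from simultaneous failure of all attempts, gives
\[
\Prob{\Sink\subseteq \State_t \text{ for some } t\geq j} \;\geq\; 1-O(n^{-1/3})-2^{-\Omega(n)}-2^{-\Omega(n^{1/3})} \;=\; 1-O(n^{-1/3}),
\]
which is exactly the claimed bound and shows that, given $\ev_2$, the event $\ev_3$ holds whp.

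The main obstacle I anticipate is making the ``fresh attempt'' reduction rigorous: one must argue that the slow branch dynamics do not interfere with the fast hub equilibration, so that the per-hit fixation events can legitimately be decoupled into independent trials. This is precisely where the exponential weight gap of order $2^{-n}$ between intra-hub edges and hub-to-branch edges is essential, and where a careful coupling/stochastic-domination argument (dominating the hub walk by the biased walk of \cref{lem:sink_independent} and the number of hits by a negative-binomial variable, as in \cref{lem:sink_hit}) is needed to control the accumulated error over all $n^{1/3}$ rounds.
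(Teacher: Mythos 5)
Your proposal is correct and follows essentially the same route as the paper's proof: invoke \cref{lem:sink_hit} for the $n^{1/3}$ hits, use the timescale separation (exponentially small crossing weights versus polynomial hub equilibration time, as in \cref{lem:sink_leak}) to argue the hub is homogeneous at each hit time up to a $2^{-\Omega(n)}$ error, apply \cref{lem:sink_independent} to each hit as a near-independent trial with constant success probability, and sum the error terms to get $1-O(n^{-1/3})$. The paper phrases the decoupling as approximate pairwise independence of the failure events $\overline{A}_i$ rather than your explicit Bernoulli domination, but the content is the same.
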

\ifshortproofs
\begin{proof}
By \cref{lem:sink_hit}, we have that with probability at least $\Omega(n^{1/3})$, the vertex $v$ places an offspring on $\Parent(v)$ at least $n^{1/3}$ times before it is replaced by a neighbor.
By \cref{lem:sink_independent}, each time event a new mutant hits the hub, the probability that a configuration $\State_t$ is reached with $\Sink\subseteq \State_t$ (i.e., the hub becomes mutant) is at least
$
1-r^{-1}-2^{-\Omega(n)}
$.
Hence, the probability that a configuration $\State_t$ is reached with $\Sink\in \State_t$
is at least
\[
(1-O(n^{-1/3}))\cdot \left(1-\left(r^{-1}+z\right)^{m}\right) = 1-O(n^{-1/3})\ ;
\]
where $z=2^{-\Omega(n)}$ and $m=n^{1/3}$.
The desired result follows.
\end{proof}
\fi

\iffullproofs
\begin{proof}
By \cref{lem:sink_hit}, we have that with probability at least $\Omega(n^{1/3})$, the vertex $v$ places an offspring on $\Parent(v)$ at least $n^{1/3}$ times before it is replaced by a neighbor.
Let $t_i$ be the time that $v$ places its $i$-th offspring on $\Parent(v)$, with $1\leq i\leq n^{1/3}$.
Let $A_i$ be the event that a configuration $\State_t$ is reached, where $t\geq t_i$ and such that $\Sink\subseteq \State_t$.
By \cref{lem:sink_independent}, we have $\Probr{A_i}\geq 1-r^{-1}-2^{-\Omega(n)}$.
Moreover, with probability $1-2^{-\Omega(n)}$, at each time $t_i$ the hub is in a homogeneous state,
i.e., either $\Sink\subseteq \State_{t_i}$ or $\Sink\cap \State_{t_i}=\emptyset$.
The proof is similar to that of \cref{lem:sink_leak}, and is based on the fact that every edge which has one end on the hub and the other outside the hub has exponentially small weight (i.e., $2^{-n}$), whereas the hub $G^{\Weight}\restr{\Sink}$ resolves to a homogeneous state in polynomial time with probability exponentially close to 1.
It follows that with probability at least $p=1-2^{-\Omega(n)}$, the events $\bar{A_i}$ are pairwise independent, and thus

\[
\Probr{\overline{A}_1\cap \overline{A}_2\dots \cap \overline{A}_{n^{1/3}}} \leq p\cdot \prod_{i=1}^{n^{1/3}}\Probr{\overline{A}_i} + (1-p) \leq 
\prod_{i=1}^{n^{1/3}}(1-\Probr{A_i}) + 2^{-\Omega(n)} \leq \left(r^{-1}+2^{-\Omega(n)}\right)^{n^{1/3}} + 2^{-\Omega(n)}\ .
\numberthis\label{eq:independentAi}
\]

Finally, starting from $\State_0=\{u\}$, the probability that a configuration $\State_t$ is reached such that $\Sink\subseteq\State_t$ is lowerbounded by the probability of the events that
\begin{compactenum}
\item the ancestor $v$ of $u$ is eventually occupied by a mutant, and
\item $v$ places at least $n^{1/3}$ offsprings to $\Parent(v)\in \Sink$ before a neighbor of $v$ places an offspring on $v$, and
\item the event $\overline{A}_1\cap \overline{A}_2\dots \cap \overline{A}_{n^{1/3}}$ does not occur.
\end{compactenum}
Combining \cref{lem:ancestor_mutant}, \cref{eq:hitsink} and \cref{eq:independentAi}, we obtain that the goal configuration $\State_t$ is reached with probability at least
\[
(1-O(n^{-1}))\cdot (1-O(n^{-1/3}))\cdot \left(1-\Probr{\overline{A}_1\cap \overline{A}_2\dots \cap \overline{A}_{n^{1/3}}}\right) = 1-O(n^{-1/3})\ ;
\]
as desired.
\end{proof}
\fi


\subsubsection{Analysis of Stage~4: Event $\ev_4$}


In this section we present the last stage to fixation.
This is established in four intermediate steps.
\begin{compactenum}
\item First, we consider the event of some vertex in the hub placing an offspring in one of the branches, while the hub is heterogeneous.
We show that this event has exponentially small probability of occurring (\cref{lem:sink_leak}).
\item We introduce the \emph{modified} Moran process which favors residents when certain events occur, more than the conventional Moran process.
This modification underapproximates the fixation probability of mutants, but simplifies the analysis.
\item We define a set of simple Markov chains $\MC_j$ and show that the fixation of mutants on the $j$-th branch $T_{m_j}^{y_j}$ is captured by the absorption probability to a specific state of $\MC_j$ (\cref{lem:coupling}).
This absorption probability is computed in \cref{lem:mc_absorb}.
\item Finally we combine the above steps in  \cref{lem:modified_process_fix} to show that if the hub is occupied by mutants (i.e., given that event $\ev_3$ holds), the mutants eventually fixate in the graph (i.e., event $\ev_4$ holds) whp.
\end{compactenum}

We start with an intermediate lemma, which states that while the hub is heterogeneous,
the probability that a node from the hub places an offspring to one of the branches is exponentially small.
\smallskip
\begin{lemma}\label{lem:sink_leak}
For any configuration $\State_j$ with $|\Sink\setminus \State_j|=1$, let $t_1\geq j$ be the first time such that $\Sink\subseteq \State_{t_1}$ (possibly $t_1=\infty$),
and $t_2\geq j$ the first time in which a vertex $u\in \Frontier$ places an offspring on some vertex $v\in\Neighbors(u)\setminus\Sink$.
We have that $\Probr{t_2<t_1}=2^{-\Omega(m)}$, where $m=n^{1-\gamma}$.
\end{lemma}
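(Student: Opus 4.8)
The plan is to set up a \emph{race} on the hub between two processes: the fast, polynomial-rate resolution of the hub to the all-mutant state, and the exponentially slow leaking of offspring from the frontier into the branches. Since the failure event $\{t_2<t_1\}$ can only occur if either a leak beats the resolution or the hub happens to resolve to all residents, it suffices to bound each of these two modes by $2^{-\Omega(m)}$ and add them. I would organize the proof around a uniform per-step leak bound, a directional bound, and a time bound.

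First I would bound the per-step leak probability uniformly over all configurations. A leak is the event that some $u\in\Frontier$ reproduces and places its offspring on a neighbor $v\notin\Sink$. By the weight assignment, the only positive-weight edges from a frontier vertex $u$ into the branches are the ones to its branch-children $v\in\Children(u)$ (the $u=\Parent(v)$ rule), each of weight $\Weight(u,v)=2^{-n}\cdot n^{-4\Distance(u)}=2^{-n}$ since $\Distance(u)=0$ on the frontier; by \cref{lem:sink_weights} we have $\Weight(u)=\mu\cdot 2^{-n}+\nu\geq\nu\geq 1$. As any single vertex reproduces with probability at most $r/\Fitness(\State_i)\leq r/n$, summing over all frontier vertices and using $\sum_{u\in\Frontier}|\Children(u)|\leq n$ gives that in any single step the probability of a leak is at most $\tfrac{r}{n}\cdot n\cdot 2^{-n}=r\cdot 2^{-n}=2^{-\Omega(n)}$, independently of the current configuration.

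Next I would control the hub while it is heterogeneous. By \cref{lem:sink_isothermal} and the analysis of \cref{lem:sink_independent}, the number of mutants in the hub performs a random walk with backward bias $\beta=1/r+2^{-\Omega(n)}<1$; starting from $|\Sink|-1$ mutants it is absorbed at the all-mutant state, rather than the all-resident state, with probability at least $1-2^{-\Omega(m)}$. Moreover, \cref{eq:ratio3} shows that while the hub is heterogeneous the per-step probability of an internal hub change is $\Omega(n^{-2})$, so each step of the embedded hub walk costs only polynomially many real steps. Combining the constant forward bias with a standard concentration estimate for the absorption time of a biased walk, the hub reaches the all-mutant state within some $N=\Poly(n)$ steps with probability $1-2^{-\Omega(m)}$. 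A union bound over this window then controls the leak: the chance of \emph{any} leak among the first $N$ steps is at most $N\cdot r\cdot 2^{-n}=\Poly(n)\cdot 2^{-\Omega(n)}=2^{-\Omega(n)}$.

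Finally I would assemble the three estimates. Writing $\{t_2<t_1\}$ as a union of (i) the hub resolving to all residents, (ii) resolution to all mutants taking more than $N$ steps, and (iii) a leak occurring within the first $N$ steps, the three contributions are $2^{-\Omega(m)}$, $2^{-\Omega(m)}$, and $2^{-\Omega(n)}$; since $n\geq m$ their sum is $2^{-\Omega(m)}$, as claimed. I expect the main obstacle to be the time-tail estimate of the third paragraph: the per-step leak bound is immediate from the $2^{-n}$ weights and the directional (all-mutant versus all-resident) bound is immediate from \cref{lem:sink_independent}, but converting ``resolves in polynomial time'' into a high-probability statement with error $2^{-\Omega(m)}$ requires a genuine concentration argument for the absorption time of the biased hub walk rather than a mere bound on its expectation.
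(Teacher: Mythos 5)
Your proposal is correct and follows the same decomposition as the paper: a uniform per-step leak bound of $2^{-\Omega(n)}$, the directional bound from \cref{lem:sink_independent} for the hub resolving to all residents, and a bound on the time the hub spends heterogeneous, assembled by a union bound. The one place you diverge is the step you yourself flag as the main obstacle: you propose proving that the hub resolves within $N=\Poly(n)$ steps with probability $1-2^{-\Omega(m)}$ via a concentration estimate for the absorption time of the biased walk. The paper sidesteps this entirely. It only computes the \emph{expected} fixation time $H_f=O(n^3)$ (via the $n^{-2}$ laziness factor times the $O(n)$ unlooped absorption time) and then applies Markov's inequality with the exponentially large threshold $\alpha=2^{n/2}$, giving $\Probr{t_1'>\alpha\mid t_1'<\infty}\leq n^3\cdot 2^{-n/2}$. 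Because the per-step leak rate is as small as $r\cdot n^2\cdot 2^{-n}$, one can afford to wait $2^{n/2}$ steps and still have total leak probability $O(n^2\cdot 2^{-n/2})=2^{-\Omega(n)}$, which is dominated by the $2^{-\Omega(m)}$ term from the wrong-absorption event. Your concentration route does also work --- e.g.\ by a restart argument over $\Theta(m)$ blocks of length $O(n^3)$, each absorbing with constant probability by Markov, giving a window $N=O(n^3 m)$ that is still polynomial --- but it is more machinery than the lemma needs; the expectation bound plus an exponential threshold suffices.
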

\ifshortproofs
\begin{proof}
Consider any configuration $\State_i$, and let $s_i=|\Sink \cap \State_i|$ denote the number of mutants in the hub.
As shown in the proof of \cref{lem:sink_independent}, the random variables $s_i$ perform a random walk $\Walk$ with backward bias upperbounded by
\[
\beta= \frac{1}{r} + 2^{-\Omega(n)}<1\ .
\]
%
%
Without self-loops, the walk $\Walk$ has expected fixation time $O(n)$.
With self-loops, in every round, the walk $\Walk$ changes state with probability at least $n^{-2}$.
Hence, the expected fixation time of $\Walk$ is $O(n^3)$, i.e., $\Expect{t_1|t_1 \text{ is finite}}=O(n^3)$.
On the other hand, we have $\Expect{t_2}=2^{\Omega(n)}$ since the probability of a vertex $u\in \Frontier$ placing an offspring on some vertex $v\in\Neighbors(u)\setminus\Sink$ is at most
\[
\frac{1}{n}\cdot \Weight(u,v)=2^{-\Omega(n)}\ .
\]

The desired result follows easily from applying Markov's inequality on the expectations $\Expect{t_1}$ and $\Expect{t_2}$.
\end{proof}
\fi

\iffullproofs
\begin{proof}
Given a configuration $\State_i$, denote by $s_i=|\Sink \cap \State_i|$.
Recall from the proof of \cref{lem:sink_mutant} that $s_j,s_{j+1},\dots $
performs a one-dimensional random walk on the states $0\leq i\leq |\Sink|$, with the ratio of transition probabilities given by \cref{eq:ratio4}.
Observe that in each $s_i$, the random walk changes state with probability at least $n^{-2}$, which is a lowerbound on the probability that the walk progresses to $s_{i+1}=s_i + 1$ (i.e., the mutants increase by one).
Consider that the walk starts from $s_j$, and let $H_a$ be the expected absorption time, $H_f$ the expected fixation time on state $|\Sink|$, and $H_e$ the expected extinction time on state $0$ of the random walk, respectively.
The unlooped variant of the random walk $\Walk=s_i,s_{i+1},\dots$ has expected absorption time $O(n)$~\cite{Levin06},
hence the random walk $s_j,s_{j+1},\dots $ has expected absorption time
\[
H_a\leq n^{2}\cdot O(n) = O(n^3)\ ;
\]
and since by \cref{lem:sink_independent} for large enough $n$ we have  $\Probr{s_{\infty}=|\Sink|}\geq \Probr{s_{\infty}=0}$, we have
\[
H_a = \Probr{s_{\infty}=|\Sink|}\cdot H_f + \Probr{s_{\infty}=0}\cdot  H_e \implies 
H_f \leq 2\cdot H_a =O(n^3)\ .
\]
Let $t_1'$ be the random variable defined as $t_1'=t_1-j$, and we have 
\[
\Expect{t_1'|t'_1<\infty} = H_f=O(n^3)\ ;
\]
i.e., given that a configuration $\State_{t_1}$ with $\Sink\subseteq \State_{t_1}$ is reached (thus $t_1<\infty$ and $t'_1<\infty$),
the expected time we have to wait after time $j$ for this event to happen equals the expected fixation time $H_f$ of the random walk $s_j,s_{j+1},\dots$.
Let $\alpha=2^{\frac{n}{2}}$, and by Markov's inequality, we have
\[
\Probr{t_1'>\alpha|t'_1<\infty} \leq \frac{\Expect{t_1'|t'_1<\infty}}{\alpha} = n^3\cdot 2^{-\frac{n}{2}}\ .
\numberthis\label{eq:markov_bound}
\]
Consider any configuration $\State_i$. The probability $p$ that a vertex $u\in \Frontier$ places an offspring on some vertex $v\in\Neighbors(u)\setminus\Sink$ is at most
\[
p\leq \frac{r}{\Fitness(\State_i)}\cdot \sum_{u\in \Frontier}\sum_{v\in \Neighbors(u)\setminus \Sink}\frac{\Weight(u,v)}{\Weight(u)}\leq r\cdot n^{-1}\cdot n^{1-\gamma}\cdot 2^{-n} \leq r\cdot n^2 \cdot 2^{-n}\ .
\]
since $\Weight(u,v)=2^{-n}$ and by \cref{lem:sink_weights} we have $\Weight(u)>1$.
Let $t_2'=t_2-i$, and we have $\Probr{t'_2\leq\alpha}\leq \Probr{X\leq \alpha}$, where $X$ is a geometrically distributed random variable with rate $\rho= r\cdot n^2 \cdot 2^{-n}$.
Since $\Probr{t_2<t_1} = \Probr{t'_2<t'_1}$, we have
\begin{align*}
\Probr{t_2<t_1} = & \Probr{t'_2<t'_1|t'_1<\infty}\cdot \Probr{t'_1<\infty} + \Probr{t'_2<t'_1|t'_1=\infty}\cdot \Probr{t'_1=\infty}\\
\leq & \Probr{t'_2<t'_1|t'_1<\infty} + \Probr{t'_1=\infty} \\
\leq & \Probr{t'_2<t'_1|t_1<\infty} + 2^{-\Omega(n^{1-\gamma})}\\
\leq & \Probr{t'_2\leq \alpha|t'_1<\infty} + \Probr{t'_1>\alpha|t'_1<\infty} + 2^{-\Omega(n^{1-\gamma})}\\
\leq & \Probr{t'_2\leq \alpha|t'_1<\infty} + n^3\cdot 2^{-\frac{n}{2}} + 2^{-\Omega(n^{1-\gamma})}\\
\leq & \Probr{X\leq \alpha} +  2^{-\Omega(n^{1-\gamma})}\\
\leq & 1-(1-\rho)^{\alpha} + 2^{-\Omega(n^{1-\gamma})}\\
\leq & 1-(1-r\cdot n^2\cdot 2^{-n})^{2^{n/2}}+ 2^{-\Omega(n^{1-\gamma})}\\
=& 2^{-\Omega(n^{1-\gamma})}
\end{align*}

The second inequality holds since by \cref{lem:sink_independent} we have $\Probr{t'_1=\infty}=2^{-\Omega(n^{1-\gamma})}$.
The fourth inequality comes from \cref{eq:markov_bound}.
\end{proof}
\fi

To simplify the analysis, we replace the Moran process with a \emph{modified} Moran process, which favors the residents (hence it is conservative) and allows for rigorous derivation of the fixation probability of the mutants.

\smallskip\noindent{\bf The modified Moran process.}
Consider the Moran process on $G^{\Weight}$, and assume there exists a first time $t^*<\infty$ when a configuration $\State_{t^{*}}$ is reached such that $\Sink\subseteq \State_{t^*}$.
We underapproximate the fixation probability of the Moran process starting from $\State_{t^*}$ by the fixation probability of the \emph{modified} Moran process $\ModState_{t^*},\ModState_{t^{*}+1},\dots$, which behaves as follows.
Recall that for every vertex $y_j$ with $\Distance(y_j)=1$, we denote by $T_{m_j}^{y_j}$ the subtree of $\SpanTree_n^{x}$ rooted at $y_j$, which has $m_j$ vertices.
Let $V_i$ be the set of vertices of $T_{m_i}^{y_i}$, and note that by construction $m_i\leq n^{1-c}$,
while there are at most $n$ such trees.
The \emph{modified} Moran process is identical to the Moran process, except for the following modifications.

\begin{compactenum}
\item\label{item:mod0} Initially, $\ModState_{t^*}=\Sink$.
\item\label{item:mod1} At any configuration $\ModState_i$ with $\Sink\in \ModState_i$, for all trees $T_{m_j}^{y_j}$, if a resident vertex $u\in V_j$ places an offspring on some vertex $v$ with $u\neq v$, then $\ModState_{i+1}=\ModState_i\setminus V_j$ and $|\Sink\setminus\ModState_{i+1}|=1$ i.e., all vertices of $T_{m_j}^{y_j}$ become residents and the hub is invaded by a single resident.
\item\label{item:mod2} If the modified process reaches a configuration $\ModState_i$ with $\ModState_i\cap \Sink=\emptyset$, the process instead transitions to configuration $\ModState_i=\emptyset$, i.e., if the hub becomes resident, then all mutants go extinct.
\item\label{item:mod3} At any configuration $\ModState_i$ with $\Sink\setminus\ModState_i \neq \emptyset$, if some vertex $u\in \Frontier$ places
an offspring on some vertex $v\in\Neighbors(u)\setminus\Sink$, then the process instead transitions to configuration $\ModState_i=\emptyset$,
i.e., if while the hub is heterogeneous, an offspring is placed from the hub to a vertex outside the hub, the mutants go extinct.
\end{compactenum}
Note that any time a case of \cref{item:mod0}-\cref{item:mod3} applies, the Moran and modified Moran processes transition to configurations $\State_i$ and $\ModState_i$ respectively, with $\ModState_i\subseteq \State_i$. 
\ifshortproofs
Thus,  the fixation probability of the Moran process on $G_{n}^{\Weight}$ is underapproximated by the fixation probability of the modified Moran process.
\fi
\iffullproofs
Thus,  the fixation probability of the Moran process on $G_{n}^{\Weight}$ is underapproximated by the fixation probability of the modified Moran process
(i.e., we have $\Probr{\State_{\infty}=V|t^{*}<\infty}\geq \Probr{\ModState_{\infty}=V}$).
\fi
It is easy to see that \cref{lem:sink_independent} and \cref{lem:sink_leak} directly apply to the modified Moran process.

\smallskip\noindent{\bf The Markov chain $\MC_j$.}
Recall that $T_{m_j}^{y_j}$ refers to the $j$-th branch of the weighted graph $G^{\Weight}$, rooted at the vertex $y_j$ and consisting of $m_j$ vertices.
We associate $T_{m_j}^{y_j}$ with a Markov chain $\MC_j$ of $m_j+3$ vertices, which captures the number of mutants in $T_{m_j}^{y_j}$, and whether the state of the hub.
Intuitively, a state $0\leq i \leq m_j$ of $\MC_j$ represents a configuration where the hub is homogeneous and consists only of mutants, and there are $i$ mutants in the branch $T_{m_j}^{y_j}$.
The state $\HeterState$ represents a configuration where the hub is heterogeneous, whereas the state $\DeadState$ represents a configuration where the mutants have gone extinct in the hub, and thus the modified Moran process has terminated.
We first present formally the Markov chain $\MC_j$, and later (in \cref{lem:coupling}) we couple $\MC_j$ with the modified Moran process.

Consider any tree $T_{m_j}^{y_j}$, and let $\alpha=1/(n^3+1)$.
We define  the Markov chain $\MC_j=(\MCStates_j, \trans_j)$ as follows:
\begin{compactenum}
\item The set of states is $\MCStates_j=\{\HeterState, \DeadState\}\cup \{0,1,\dots, m_j\}$
\item The transition probability matrix $\trans_j: \MCStates_j\times \MCStates_j\to [0,1]$ is defined as follows:
\begin{compactenum}
\item $\trans_j[i, i+1]=\alpha $ for $0\leq i < m_j$,
\item $\trans_j[i, 0]= 1-\alpha$ for $1<i<m_j$,
\item $\trans_j[0, \HeterState] = 1-\alpha$,
\item $\trans_j[\HeterState, 0]=1-2^{-\Omega(m)}$,  and $\trans_j[\HeterState, \DeadState]=2^{-\Omega(m)}$, where $m=n^{1-\gamma}$,
\item $\trans_j[m_j,m_j]=\trans_j[\DeadState, \DeadState]=1$,
\item $\trans_j[x,y]=0$ for all other pairs $x,y\in \MCStates_j$
\end{compactenum}
\end{compactenum}

See \cref{fig:mc_coupling} for an illustration.
\begin{figure}[!h]
\centering
\begin{tikzpicture}[thick,
pre/.style={<-,shorten >= 1pt, shorten <=1pt, thick,  bend angle = 20},
post/.style={->,shorten >= 1pt, shorten <=1pt,  thick,  bend angle = 20},
every loop/.style={<-,shorten >= 1pt, shorten <=1pt, thick, auto, in=110,out=70, looseness=8, min distance=14mm},
state/.style={circle,draw=black!80, line width=1.5pt, inner sep=2pt,minimum size=30pt},
target state/.style={circle, draw=black!80, line width = 1.2pt, minimum size = 25pt},
virt/.style={circle,draw=white!50,fill=white!20,thick, inner sep=2pt,minimum size=30pt},
node distance=2.5cm]

\newcommand{\xdisp}{2.7}
\newcommand{\ydisp}{1.5}
\newcommand{\bend}{20}
\newcommand{\bendtwo}{45}
\node	[state]	(d)	at	(0*\xdisp,0)	{$\DeadState$};
\node [state, right of=d]	(h)	{$\HeterState$};
\node [state, right of=h]	(0)	{$0$};
\node [state, right of=0]	(1)	{$1$};
\node [state, right of=1]	(2)	{$2$};
\node[virt, right of=2]	(dots)	{$\dots$};
\node [state, right of=dots]	(nj)	{$n_j$};

\draw[post, bend right=\bend] (h) to node[auto, above]	{$2^{-\Omega(m)}$} 	(d);
\draw[post, bend right=\bend] (h) to node[auto, below]	{$1-2^{-\Omega(m)}$} (0);
\draw[post, bend right=\bend] (0) to node[auto, above]	{$1-\alpha$} (h);
\draw[post, bend right=\bend] (0) to node[auto, below]	{$\alpha$} (1);
\draw[post, bend right=\bend] (1) to node[auto, above]	{$1-\alpha$} (h);
\draw[post, bend right=\bend] (1) to node[auto, below]	{$\alpha$} (2);
\draw[post, bend right=\bendtwo] (2) to node[auto, above]	{$1-\alpha$} (h);
\draw[post, loop above] (d) to node[auto, above]	{$1$} (d);
\draw[post, loop above] (nj) to node[auto, above]	{$1$} (nj);

\end{tikzpicture}
\caption{The Markov chain $\MC_j$ given a tree $T_{n_j}^{x_j}$.}\label{fig:mc_coupling}
\end{figure}
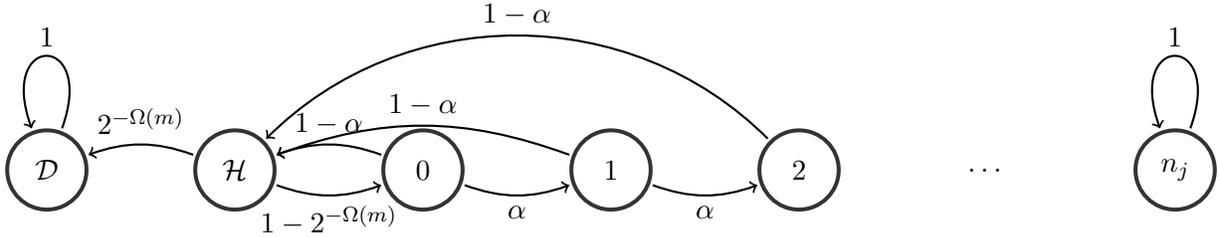
The Markov chain $\MC_j$ has two absorbing states, $\DeadState$ and $m_j$.
We denote by $\rho_j$ the probability that a random walk on $\MC_j$ starting from state $0$ will be absorbed in state $m_j$.
The following lemma lowerbounds $\rho_j$, and comes from a straightforward analysis of $\MC_j$.
\smallskip
\begin{lemma}\label{lem:mc_absorb}
For all Markov chains $\MC_j$, we have  $\rho_j=1-2^{-\Omega(m)}$, where $m=n^{1-\gamma}$.
\end{lemma}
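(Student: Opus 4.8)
The plan is to carry out a standard first-step (absorption) analysis on $\MC_j$, obtain $\rho_j$ in closed form, and then bound it using the size separation guaranteed by \cref{prop:sizes}. For each state $s\in\MCStates_j$ let $q_s$ denote the probability of being absorbed in the state $m_j$ when the walk starts from $s$, so that $\rho_j=q_0$. The key structural feature of $\MC_j$ (as read off from \cref{fig:mc_coupling}) is that the only route to the absorbing death state $\DeadState$ is through $\HeterState$: from every transient level $i$ with $0\le i<m_j$ a failure to climb funnels the walk into $\HeterState$, from which it either dies with the tiny probability $\beta:=\trans_j[\HeterState,\DeadState]=2^{-\Omega(m)}$ or is sent back to level $0$.

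Writing the one-step equations
\[
q_{m_j}=1,\qquad q_{\HeterState}=(1-\beta)\,q_0,\qquad q_i=\alpha\,q_{i+1}+(1-\alpha)\,q_{\HeterState}\quad(0\le i<m_j),
\]
I would solve the geometric recurrence downward from $q_{m_j}=1$, summing the resulting geometric series to get $q_0=\alpha^{m_j}+q_{\HeterState}\,(1-\alpha^{m_j})$, then substitute $q_{\HeterState}=(1-\beta)\,q_0$ and solve the resulting scalar equation. This yields the closed form
\[
\rho_j=q_0=\frac{\alpha^{m_j}}{\alpha^{m_j}+\beta\,(1-\alpha^{m_j})},\qquad\text{so}\qquad 1-\rho_j=\frac{\beta\,(1-\alpha^{m_j})}{\alpha^{m_j}+\beta\,(1-\alpha^{m_j})}\le\frac{\beta}{\alpha^{m_j}}.
\]

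It then remains to show $\beta/\alpha^{m_j}=2^{-\Omega(m)}$. Using $\alpha=1/(n^3+1)$ and the branch-size bound $m_j\le n^{1-2\eps/3}$ from \cref{prop:sizes}, I would write $\alpha^{m_j}=2^{-O(n^{1-2\eps/3}\log n)}$, while $\beta=2^{-\Omega(n^{1-\gamma})}=2^{-\Omega(n^{1-\eps/3})}$ since $\gamma=\eps/3$. Hence $\beta/\alpha^{m_j}=2^{-\Omega(n^{1-\eps/3})+O(n^{1-2\eps/3}\log n)}$.

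The only real obstacle is the competition of these two exponents: I must verify that the death exponent $n^{1-\eps/3}$ strictly dominates the climb exponent $n^{1-2\eps/3}\log n$. This is precisely where the construction's choice of constants $\gamma=\eps/3$ and $c=2\eps/3$ is used, and where \cref{prop:sizes} (the hub has $m=n^{1-\eps/3}$ vertices while each branch has at most $n^{1-2\eps/3}$) is essential: since $1-\eps/3>1-2\eps/3$, the polynomial gap in the powers of $n$ swamps the logarithmic factor for all large $n$, so the term $\Omega(n^{1-\eps/3})$ absorbs $O(n^{1-2\eps/3}\log n)$. Consequently $\beta/\alpha^{m_j}=2^{-\Omega(n^{1-\eps/3})}=2^{-\Omega(m)}$, giving $1-\rho_j=2^{-\Omega(m)}$, i.e. $\rho_j=1-2^{-\Omega(m)}$ as claimed.
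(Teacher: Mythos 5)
Your proposal is correct and follows essentially the same route as the paper: a first-step absorption analysis of $\MC_j$ yielding the linear system $q_{m_j}=1$, $q_{\HeterState}=(1-\beta)q_0$, $q_i=\alpha q_{i+1}+(1-\alpha)q_{\HeterState}$, a closed form for $\rho_j$, and then the exponent comparison $\alpha^{-m_j}=2^{O(n^{1-2\eps/3}\log n)}$ versus $\beta=2^{-\Omega(n^{1-\eps/3})}$ using $\gamma<c$. Your derivation is in fact slightly cleaner than the paper's (which solves for $x_{\HeterState}$ first and contains a couple of index typos), but the underlying argument is the same.
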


\iffullproofs
\begin{proof}
Given a state $a\in \MCStates_j$, we denote by $x_a$ the probability that a random walk starting from state $a$ will be absorbed in state $m_j$. Then $\rho_j=x_0$, and we have the following linear system

\begin{align*}
  x_{\HeterState} =& \trans[\HeterState,0] \cdot x_0 =  \left(1-2^{\Omega(n^{1-\gamma})}\right)\cdot x_0\\
  x_{i} =& \trans[i, \HeterState] \cdot x_{\HeterState} + \trans[i,i+1] \cdot x_{i+1} = (1-\alpha)\cdot x_{\HeterState} + \alpha\cdot x_{i+1} & \text{ for } 0\leq i<m_j\\
  x_{m_j} =& 1
\end{align*}
and thus

\begin{align*}
&x_{\HeterState} = \left(1-2^{-\Omega(n^{1-\gamma})}\right)\cdot\left(x_{\HeterState}\cdot (1-\alpha)\cdot \sum_{0=1}^{m_j} a^i +a^{m_j}\right)\\
\implies & x_{\HeterState} = \left(1-2^{-\Omega(n^{1-\gamma})}\right)\cdot \left(x_{\HeterState}\cdot \left(1-a^{m_j-1}\right) + a^{m_j}\right)\\
\implies& x_{\HeterState}\left(1-\left(1-2^{-\Omega(n^{1-\gamma})}\right)\cdot \left(1-a^{m_j-1}\right)\right) = a^{m_j}
\numberthis\label{eq:xzero}
\end{align*}

Note that
\[
1-\left(1-2^{-\Omega(n^{1-\gamma})}\right)\cdot \left(1-a^{n_j-1}\right) \leq 2^{-\Omega(n^{1-\gamma})} + a^{n^j}\ ;
\]
and from \cref{eq:xzero} we obtain
\[
x_{\HeterState}\geq \frac{\alpha^{n_j}}{2^{-\Omega(n^{1-\gamma})} + \alpha^{n_j}} = 1 - \frac{2^{-\Omega(n^{1-\gamma})}}{2^{-\Omega(n^{1-\gamma})} + \alpha^{n_j}} \geq 1-2^{-\Omega(n^{1-\gamma})}\cdot \alpha^{-n_j} = 1-2^{-\Omega(n^{1-\gamma})}\cdot (n^3+1)^{n^{1-c}} = 1-2^{-\Omega(n^{1-\gamma})}\ ;
\]
since $a=1/(n^3+1)$ and by construction $n_j\leq n^{1-c}$ and  $\gamma=\eps/3<\eps/2=c$.
Finally, we have that $\rho_j=x_0\geq x_{\HeterState}=1-2^{-\Omega(n^{1-\gamma})}$, as desired.
\end{proof}
\fi

Given a configuration $\ModState_k$ of the modified Moran process, we denote by $\ov{\rho}_j(\ModState_k)$
the probability that the process reaches a configuration $\ModState_t$ with $\Sink \cup V_j\subseteq \ModState_t$.
The following lemma states that the probability $\ov{\rho}_j(\ModState_{\ell})$ is underapproximated by the probability $\rho_j$.
The proof is by a coupling argument, which ensures that
\begin{compactenum}
\item every time the run on $\MC_j$ is on a state $0\leq i\leq m_j$, there are at least $i$ mutants placed on $T_{m_j}^{y_j}$, and
\item  every time the modified Moran process transitions to a configuration where hub is heterogeneous (i.e., we reach a configuration $\State$ with $\Sink\setminus\State\neq \emptyset$),
the run on $\MC_j$ transitions to state $\HeterState$. 
\end{compactenum}

\smallskip
\begin{lemma}\label{lem:coupling}
Consider any configuration $\ModState_{\ell}$ of the modified Moran process, with $\Sink\subseteq \ModState_{\ell}$,
and any tree $T_{m_j}^{y_j}$.
We have $\ov{\rho}_j(\ModState_{\ell})\geq \rho_j$.
\end{lemma}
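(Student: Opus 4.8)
The proof is a coupling argument: I would run a copy of $\MC_j$, started in state $0$, alongside the modified Moran process, tracking only the interaction of the process with the branch $T_{m_j}^{y_j}$ and the hub $\Sink$. The plan is to maintain, at the boundaries of suitably chosen epochs, the domination invariant that ``(i)'' if $\MC_j$ is in a state $i\in\{0,1,\dots,m_j\}$ then $\Sink\subseteq\ModState$ and $|V_j\cap\ModState|\ge i$; ``(ii)'' if $\MC_j$ is in $\HeterState$ then $|\Sink\setminus\ModState|=1$; and ``(iii)'' if $\MC_j$ is in $\DeadState$ then $\ModState=\emptyset$. Since $\ModState_{\ell}\supseteq\Sink$ has an all-mutant hub and $|V_j\cap\ModState_{\ell}|\ge 0$, the invariant holds initially with $\MC_j$ in state $0$. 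Because $m_j\in\{0,\dots,m_j\}$, absorption of $\MC_j$ in $m_j$ forces $\Sink\subseteq\ModState$ and $|V_j\cap\ModState|\ge m_j$, i.e.\ $\Sink\cup V_j\subseteq\ModState$. Thus the event ``$\MC_j$ is absorbed in $m_j$'' is contained in the event ``the process reaches some $\ModState_t\supseteq\Sink\cup V_j$,'' and $\ov{\rho}_j(\ModState_{\ell})\ge\rho_j$ follows once the invariant is shown to survive the coupling.

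I would advance $\MC_j$ only at the epochs at which a branch-$j$-relevant or hub-resolving event fires, holding it fixed during neutral reproductions, internal hub moves, and events confined to other branches. Two estimates certify its transitions. While the hub is homogeneously mutant ($\MC_j$ in state $i<m_j$), I compare the probability that a mutant places a fresh offspring one step deeper into $V_j$ against the probability that \emph{some} resident of $V_j$ reproduces and, by modification~\ref{item:mod1}, reverts the entire branch while injecting one resident into the hub. The weight estimates recalled in \cref{eq:weights1,eq:weights2,eq:weights3} show that the aggregate reproduction rate of the (at most $n$) residents of $V_j$ exceeds a single downward mutant placement by a factor at most $n^3$, so the conditional probability of progress is at least $1/(n^3+1)=\alpha$, matching $\trans_j[i,i+1]=\alpha$; the complementary outcome is a setback, recorded as the transition out of the mutant-hub layer to $\HeterState$. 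While the hub carries a single resident ($\MC_j$ in $\HeterState$), \cref{lem:sink_independent}(2) gives that the hub re-fixates to all mutants with probability at least $1-2^{-\Omega(m)}$ (the transition $\HeterState\to 0$), and \cref{lem:sink_leak} bounds by $2^{-\Omega(m)}$ the probability that, before re-fixation, a frontier vertex leaks an offspring into a branch; under modifications~\ref{item:mod2}--\ref{item:mod3} both a resident takeover of the hub and any such leak are counted as extinction, matching $\trans_j[\HeterState,\DeadState]=2^{-\Omega(m)}$.

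Every discrepancy introduced by this construction favors the residents: a single branch setback resets the \emph{entire} branch, the mutant progress accumulated in $V_j$ is discarded whenever a \emph{different} branch invades the hub, and every leak is treated as immediate death. Hence the coupled pair satisfies $(\text{chain state})\le(\text{true branch count})$ with a consistent hub status at all epoch boundaries, and the invariant is preserved. The main obstacle is precisely this entanglement: a setback is not a move of a scalar counter but a joint perturbation of the branch and the hub, and a hub perturbation may be triggered by residents of other branches entirely. The crux is therefore to check that the epochs are well defined and that all such cross-branch effects can be absorbed pessimistically into the single bundle of transitions $\{i\}\to\HeterState\to\{0,\DeadState\}$ without ever violating clause~``(i)'', so that the one-dimensional chain $\MC_j$ genuinely lower-bounds the true multi-branch dynamics. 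Combining the coupling with \cref{lem:mc_absorb} then gives $\ov{\rho}_j(\ModState_{\ell})\ge\rho_j=1-2^{-\Omega(m)}$.
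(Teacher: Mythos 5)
Your proposal is correct and follows essentially the same route as the paper's proof: a coupling of the modified Moran process with $\MC_j$ maintaining the domination invariant that state $i$ certifies an all-mutant hub and at least $i$ mutants in $V_j$, with the branch-progress-versus-setback comparison yielding the $\alpha=1/(n^3+1)$ bound and \cref{lem:sink_independent} together with \cref{lem:sink_leak} certifying the $\HeterState$ transitions. Your write-up is in fact somewhat more explicit than the paper's (which also leaves the epoch construction and cross-branch entanglement at a high level), so there is nothing to add.
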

\begin{proof}
The proof is by coupling the modified Moran process and the Markov chain $\MC_j$.
To do so, we let the modified Moran process execute, and use certain events of that process as the source of randomness for a run in $\MC_j$. We describe the coupling process in high level.
Intuitively, every time the run on $\MC_j$ is on a state $0\leq i\leq m_j$, there are at least $i$ mutants placed on $T_{m_j}^{y_j}$.
Additionally, every time the modified Moran process transitions to a configuration where hub is heterogeneous (i.e., we reach a configuration $\State$ with $\Sink\setminus\State\neq \emptyset$),
then the run on $\MC_j$ transitions to state $\HeterState$. 
Finally, if the modified Moran process ends on a configuration $\State=\emptyset$, then the run on $\MC_j$ gets absorbed to state $\DeadState$.
The coupling works based on the following two facts.
\begin{compactenum}
\item For every state $0<i<m_j$, the ratio $\trans_j[i,i+1]/\trans_j[i,i-1]$ is upperbounded by the ratio of the probabilities of increasing the number of mutant vertices in $T_{m_j}^{y_j}$ by one, over decreasing that number by one and having the hub being invaded by a resident.
Indeed, we have 
\[
\frac{\trans_j[i,i+1]}{\trans_j[i,i-1]}=\frac{\alpha}{1-\alpha}=\frac{1}{n^3}\ ;
\]
while for every mutant vertex $x$ of $G$ with at last one resident neighbor, the probability that $x$ becomes mutant in the next step of the modified Moran process over the probability that $x$ becomes resident is at least $1/n^3$ (this ratio is at least $1/n^2$ for every resident neighbor $y$ of $x$, and there are at most $n$ such resident neighbors).
The same holds for the ratio $\trans_j[0,1]/\trans_j[0,\HeterState]$.
\item The probability of transitioning from state $\HeterState$ to state $0$ is upperbounded by the probability that once the mutant hub gets invaded by a resident the modified Moran process reaches a configuration where the hub consists of only mutants (using \cref{lem:sink_independent} and \cref{lem:sink_leak}).
\end{compactenum}
\end{proof}

The following lemma captures the probability that the modified Moran process reaches fixation whp.
That is, whp a configuration $\ModState_{i}$ is reached which contains all vertices of $G^{\Weight}$.
The proof is based on repeated applications of \cref{lem:coupling} and \cref{lem:mc_absorb}, one for each subtree $T_{m_j}^{y_j}$.

\smallskip
\begin{lemma}\label{lem:modified_process_fix}
Consider that at some time $t^*$ the configuration of the Moran process on $G^{\Weight}$ is $\State_{t^*}$ with $\Sink\subseteq \State_{t^*}$.
Then, a subsequent configuration $\State_t$ with $\State_t=V$ is reached with probability at least $1-2^{-\Omega(m)}$ where $m=n^{1-\gamma}$,
i.e., given event $\ev_3$, the event $\ev_4$ is happens whp.
\end{lemma}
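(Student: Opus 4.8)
The plan is to work entirely inside the \emph{modified} Moran process, since its fixation probability underapproximates that of the genuine process started from $\State_{t^*}$ (i.e. $\Probr{\State_{\infty}=V\mid t^*<\infty}\geq\Probr{\ModState_{\infty}=V}$). Hence it suffices to show that the modified process, started from $\ModState_{t^*}=\Sink$, reaches the all-mutant configuration $V$ with probability at least $1-2^{-\Omega(m)}$. I would establish this branch by branch, in an arbitrary fixed order $T_{m_1}^{y_1},\dots,T_{m_k}^{y_k}$, tracking the milestones $M_j$: ``the hub together with the first $j$ branches is entirely mutant'', i.e. a configuration $\ModState$ with $\Sink\cup V_1\cup\dots\cup V_j\subseteq\ModState$. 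The starting configuration already satisfies $M_0$.

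The first step is a monotonicity observation: once a branch $V_i$ is fully occupied by mutants it remains so until (possibly) the whole process dies. Indeed, by \cref{item:mod1} a branch can only revert when one of \emph{its own} vertices is a reproducing resident, which is impossible when $V_i$ is all-mutant; and while the hub is heterogeneous, \cref{item:mod3} sends the process straight to $\emptyset$ the moment any hub vertex would place an offspring into a branch, so no resident ever leaks from the hub into an already-fixated branch without the process terminating. Consequently each milestone $M_{j-1}$ is preserved while the process works on branch $j$, and reaching $M_j$ does not undo $M_1,\dots,M_{j-1}$.

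Given monotonicity, I would apply the coupling of \cref{lem:coupling} to branch $j$ from any configuration satisfying $M_{j-1}$ (which in particular contains $\Sink$): the probability of reaching a configuration containing $\Sink\cup V_j$, hence $M_j$, is at least $\rho_j$, which by \cref{lem:mc_absorb} equals $1-2^{-\Omega(m)}$. Chaining these conditional bounds along the milestones (via the strong Markov property) yields
\[
\Probr{\ModState_{\infty}=V}\ \geq\ \prod_{j=1}^{k}\rho_j\ \geq\ \prod_{j=1}^{k}\bigl(1-2^{-\Omega(m)}\bigr)\ \geq\ 1-k\cdot 2^{-\Omega(m)}\ \geq\ 1-n\cdot 2^{-\Omega(m)},
\]
using $k\leq n$ and the inequality $\prod_j(1-a_j)\geq 1-\sum_j a_j$ for $a_j\in[0,1]$. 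Since $m=n^{1-\gamma}$ with $\gamma=\eps/3$ fixed, the factor $n$ is absorbed, $n\cdot 2^{-\Omega(n^{1-\gamma})}=2^{-\Omega(m)}$, giving the claim.

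The main obstacle is not the arithmetic but justifying that the per-branch bounds may be multiplied despite all branches sharing the single hub. The required near-independence is manufactured by the modified process itself: \cref{lem:sink_independent} guarantees the hub recovers from a single resident invasion with probability $1-2^{-\Omega(m)}$, and \cref{lem:sink_leak} guarantees that, while heterogeneous, the hub almost never leaks into the branches — exactly the two facts encoded in the transitions out of state $\HeterState$ of each $\MC_j$. I must also check that the death contributions from the $\HeterState$ states, accumulated over all reset cycles of all branches, stay $2^{-\Omega(m)}$; this is precisely what \cref{lem:mc_absorb} secures through $\gamma<c$ (so that $m=n^{1-\gamma}$ dominates the branch sizes $m_j\leq n^{1-c}$), ensuring no loss beyond the union bound over the $k\leq n$ branches.
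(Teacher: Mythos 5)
Your proposal is correct and follows essentially the same route as the paper's proof: reduce to the modified Moran process, observe that a fully-mutant branch stays fixated, then inductively apply \cref{lem:coupling} and \cref{lem:mc_absorb} branch by branch and multiply the $k\leq n$ per-branch bounds $\rho_j=1-2^{-\Omega(m)}$ to get $1-2^{-\Omega(m)}$. Your added justifications of the monotonicity step and of why the product over branches is legitimate are elaborations of what the paper leaves implicit, not a different argument.
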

\ifshortproofs
\begin{proof}
It suffices to consider the modified Moran process on $G$ starting from configuration $\ModState_{t^*}=\Sink$,
and showing that whp we eventually reach a configuration $\ModState_t=V$.
First note that if there exists a configuration $\ModState_{t'}$ with $V_i\subseteq \ModState_{t'}$ for any $V_i$,
then for all $t''\geq {t'}$ with $\ModState_{t''}\neq \emptyset$ we have $V_i\subseteq \ModState_{t''}$.
Let $t_1=t^*$.
Since $\Sink\subseteq \ModState_{t_1}$, by \cref{lem:coupling},
with probability $\ov{\rho}_1(\ModState_{t_1}) \geq \rho_1$ there exists a time $t_2\geq t_1$ such that $\Sink \cup V_1\subseteq \ModState_{t_2}$. Inductively, given the configuration $\ModState_{t_i}$, with probability $\ov{\rho}_i(\ModState_{t_i}) \geq \rho_i$ there exists a time $t_{i+1}\geq t_i$ such that $\Sink\cup V_1\cup\dots \cup V_i\subseteq \ModState_{t_{i+1}}$.
Since $V=\Sink\cup (\bigcup_{i=1}^k V_i)$, we obtain that the probability that the mutants get fixed is at least
\[
\prod_{i=1}^{n} \rho_i \geq \left(1-2^{-\Omega(m)}\right)^n=1-2^{-\Omega(m)}\ ;
\]
as by \cref{lem:mc_absorb} we have that $\rho_i=1-2^{-\Omega(m)}$ for all $i$.
The desired result follows.
\end{proof}
\fi
\iffullproofs
\begin{proof}
It suffices to consider the modified Moran process on $G$ starting from configuration $\ModState_{t^*}=\Sink$,
and showing that whp we eventually reach a configuration $\ModState_t=V$.
First note that if there exists a configuration $\ModState_{t'}$ with $V_i\subseteq \ModState_{t'}$ for any $V_i$,
then for all $t''\geq {t'}$ with $\ModState_{t''}\neq \emptyset$ we have $V_i\subseteq \ModState_{t''}$.
Let $t_1=t^*$.
Since $\Sink\subseteq \ModState_{t_1}$, by \cref{lem:coupling},
with probability $\ov{\rho}_1(\ModState_{t_1}) \geq \rho_1$ there exists a time $t_2\geq t_1$ such that $\Sink \cup V_1\subseteq \ModState_{t_2}$. Inductively, given the configuration $\ModState_{t_i}$, with probability $\ov{\rho}_i(\ModState_{t_i}) \geq \rho_i$ there exists a time $t_{i+1}\geq t_i$ such that $\Sink\cup V_1\cup\dots \cup V_i\subseteq \ModState_{t_{i+1}}$.
Since $V=\Sink\cup (\bigcup_{i=1}^k V_i)$, we obtain
\[
\Probr{\ModState_{\infty}=V} \geq \prod_{i=1}^{n} \rho_i = \prod_{i=1}^{n} \left(1-2^{-\Omega(n^{1-\gamma})}\right) \geq \left(1-2^{-\Omega(n^{1-\gamma})}\right)^n=1-2^{-\Omega(m)}\ ;
\]
as by \cref{lem:mc_absorb} we have that $\rho_i=1-2^{-\Omega(m)}$ for all $i$.
The desired result follows.
\end{proof}
\fi

\subsubsection{Main Positive Result}

We are now ready to prove the main theorem of this section.
First, combining \cref{lem:initialization}, \cref{lem:ancestor_mutant}, \cref{lem:sink_mutant} and \cref{lem:modified_process_fix}, we obtain that if $r>1$, then the mutants fixate $G_n$ whp.

\smallskip
\begin{lemma}\label{lem:amplifiers}
For any fixed $\eps >0$, for any graph $G_n$ of $n$ vertices and diameter $\Diameter(G_n)\leq n^{1-\eps}$,
there exists a weight function $\Weight$ such that for all $r>1$, we have $\FixProb(G^{\Weight}_n,r,\Uniform)=1-O(n^{-\eps/3})$ and $\FixProb(G^{\Weight}_n,r,\Temp)=1-O(n^{-\eps/3})$.
\end{lemma}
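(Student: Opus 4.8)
The plan is to instantiate the weight function $\Weight$ produced by the construction of \cref{subsec:weights} and to lower-bound the fixation probability by the probability that the four stage-events occur in succession. Since reaching the all-mutant configuration $\State_t=V$ is implied by event $\ev_4$ (fixation in every branch on top of a fully-mutant hub), for $Z\in\{\Uniform,\Temp\}$ I would write
\[
\FixProb(G_n^{\Weight},r,Z) \geq \Prob{\ev_1}\cdot \Prob{\ev_2\mid \ev_1}\cdot \Prob{\ev_3\mid \ev_2}\cdot \Prob{\ev_4\mid \ev_3}\ .
\]
Each conditional factor has already been bounded in the preceding lemmas: \cref{lem:initialization} gives $\Prob{\ev_1}\geq 1-O(n^{-\eps/3})$ for both initialization schemes, \cref{lem:ancestor_mutant} gives $\Prob{\ev_2\mid\ev_1}\geq 1-O(n^{-1})$, \cref{lem:sink_mutant} gives $\Prob{\ev_3\mid\ev_2}\geq 1-O(n^{-1/3})$, and \cref{lem:modified_process_fix} gives $\Prob{\ev_4\mid\ev_3}\geq 1-2^{-\Omega(m)}$ with $m=n^{1-\gamma}=n^{1-\eps/3}$.

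The next step is simply to multiply the four factors. Each is of the form $1-o(1)$, so the product is at least $1$ minus the sum of the individual error terms; since $0<\eps<1$ we have $\eps/3<1/3<1$, so the slowest-decaying error is $n^{-\eps/3}$, which dominates $n^{-1}$, $n^{-1/3}$ and the super-polynomially small $2^{-\Omega(m)}$. Hence the product equals $1-O(n^{-\eps/3})$, and as the fixation probability is trivially at most $1$, this yields the claimed estimate $\FixProb(G_n^{\Weight},r,Z)=1-O(n^{-\eps/3})$ simultaneously for $Z=\Uniform$ and $Z=\Temp$. The initialization scheme enters only through \cref{lem:initialization}, which produces the same asymptotics in both cases, so no separate treatment of temperature and uniform is needed beyond that point.

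The main obstacle I expect is making the conditional chain fully rigorous, because each stage-lemma is stated for a particular configuration \emph{shape} rather than as a single-step transition. I would argue that the occurrence of $\ev_i$ drives the process into exactly the configuration that the next lemma assumes: $\ev_1$ yields $\State_0=\{w\}$ with $w\notin\Sink$; $\ev_2$ yields a configuration in which a mutant occupies a vertex $v$ with $\Distance(v)=1$, which is precisely the hypothesis of \cref{lem:sink_mutant}; and $\ev_3$ yields $\Sink\subseteq\State_{t^*}$, the hypothesis of \cref{lem:modified_process_fix}. Each stage-event is a hitting event (``the process eventually reaches such a configuration''), so the factorization of the joint probability into the displayed product is justified by the strong Markov property applied at the relevant stopping times---the first time the mutant reaches the distance-one ancestor, and the first time the hub becomes homogeneously mutant. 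Once this bookkeeping is in place, the final estimate follows directly from the four lemmas.
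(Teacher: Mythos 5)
Your proposal is correct and follows essentially the same route as the paper, which obtains the lemma precisely by chaining \cref{lem:initialization}, \cref{lem:ancestor_mutant}, \cref{lem:sink_mutant} and \cref{lem:modified_process_fix} and observing that the dominant error term is the $O(n^{-\eps/3})$ loss from the initialization stage (the initialization scheme entering only through \cref{lem:initialization}). Your explicit appeal to the strong Markov property at the stage-transition stopping times is exactly the bookkeeping the paper leaves implicit, and is the right way to make the factorization rigorous.
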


It now remains to show that if $r<1$, then the mutants go extinct whp.
This is a direct consequence of the following lemma, which states that for any $r\geq 1$, the fixation probability of a mutant with relative fitness $1/r$ is upperbounded by one minus the fixation probability of a mutant with relative fitness $r$, in the same population.

\smallskip
\begin{lemma}\label{lem:uniform_suppress_residents}
For any graph $G_n$ and any weight function $\Weight$, for all $r\geq 1$, we have that  $\FixProb(G^{\Weight}_n,1/r,\Uniform) \leq  1-\FixProb(G^{\Weight}_n,r,\Uniform)$.
\end{lemma}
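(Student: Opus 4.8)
The plan is to combine two structural facts about the Moran process: a \emph{fitness-reversal duality} relating the disadvantageous process to the advantageous one with the roles of the two types exchanged, and the \emph{monotonicity} of the fixation probability in the initial mutant set. Throughout, write $\FixProb(G^{\Weight}_n,r,S)=\Prb^{G_n,r}_S[\ev]$ for the probability that mutants of fitness $r$ fixate starting from the configuration $S\subseteq V$, so that $\FixProb(G^{\Weight}_n,r,u)=\FixProb(G^{\Weight}_n,r,\{u\})$, and assume $n\ge 2$ (for $n=1$ the statement is vacuous).

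First I would establish the duality $\FixProb(G^{\Weight}_n,1/r,u)=1-\FixProb(G^{\Weight}_n,r,V\setminus\{u\})$. The key observation is that the birth event selects a reproducing vertex with probability proportional to fitness, and this rule is invariant under multiplying all fitnesses by a common positive constant. Hence the process with mutant fitness $1/r$ and resident fitness $1$ induces the same law on the sequence of type-configurations as the process with mutant fitness $1$ and resident fitness $r$. Relabelling the two types in the latter---calling the fitness-$r$ type the ``mutant'' and the fitness-$1$ type the ``resident''---carries the initial configuration (a single fitness-$1/r$ mutant at $u$) to the configuration with mutants on all of $V\setminus\{u\}$, and carries the original fixation event to the event that those mutants go extinct. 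Since $G_n$ is connected, the process reaches a homogeneous state with probability one, so this extinction probability equals $1-\FixProb(G^{\Weight}_n,r,V\setminus\{u\})$, giving the identity.

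Second I would record monotonicity: if $S\subseteq S'$ then $\FixProb(G^{\Weight}_n,r,S)\le \FixProb(G^{\Weight}_n,r,S')$. This is proved by a coupling of two copies of the process started from $S$ and $S'$, using shared randomness for the choice of reproducing vertex and replaced neighbour, while maintaining the invariant that the mutant set of the first copy is always contained in that of the second; the first copy fixates only if the second does. The only care needed is that the two copies have different total fitness, so the coupling must be arranged so that the extra mutants of the larger copy can only help, preserving the nesting.

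Finally I would assemble the bound. Averaging the duality over $u$ gives
\[
\FixProb(G^{\Weight}_n,1/r,\Uniform)=\frac1n\sum_{u}\bigl(1-\FixProb(G^{\Weight}_n,r,V\setminus\{u\})\bigr)=1-\frac1n\sum_u\FixProb(G^{\Weight}_n,r,V\setminus\{u\}),
\]
so it suffices to show $\sum_u\FixProb(G^{\Weight}_n,r,\{u\})\le\sum_u\FixProb(G^{\Weight}_n,r,V\setminus\{u\})$. For every ordered pair $u\ne w$ we have $\{w\}\subseteq V\setminus\{u\}$, so monotonicity gives $\FixProb(G^{\Weight}_n,r,\{w\})\le\FixProb(G^{\Weight}_n,r,V\setminus\{u\})$; summing over all $n(n-1)$ ordered pairs and dividing by $n-1$ yields exactly the required inequality, whence $\FixProb(G^{\Weight}_n,1/r,\Uniform)\le 1-\FixProb(G^{\Weight}_n,r,\Uniform)$. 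I expect the monotone coupling to be the only genuinely delicate step, since it must respect the configuration-dependent, fitness-weighted birth rule on a weighted graph with self-loops; the duality and the pair-summation are then routine.
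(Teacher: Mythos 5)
Your proposal is correct and follows essentially the same route as the paper's proof: the fitness-rescaling duality $\FixProb(G^{\Weight}_n,1/r,u)=1-\FixProb(G^{\Weight}_n,r,V\setminus\{u\})$ combined with monotonicity of fixation in the initial mutant set is exactly the ``observation'' the paper invokes via an irreflexive permutation $\sigma$, and your averaging over all ordered pairs $u\neq w$ is equivalent to the paper's summation over $u\mapsto\sigma(u)$. You simply make explicit (via the monotone coupling and the relabelling argument) the two facts the paper leaves as an observation.
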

\begin{proof}
Let $\sigma$ be any irreflexive permutation of $V$ (i.e., $\sigma(u)\neq u$ for all $u\in V$), and observe that for every vertex $u$, the probability that a mutant of fitness $1/r$ arising at $u$ fixates in $G_n$ is upperbounded by one minus the probability that a mutant of fitness $r$ arising in $\sigma(u)$ fixates in $G_n$. We have
\begin{align*}
\FixProb(G^{\Weight}_n,1/r,\Uniform) =& \frac{1}{n}\sum_{u}\FixProb(G^{\Weight}_n,1/r,u)\\
\leq& \frac{1}{n}\cdot \sum_{u} (1- \FixProb(G^{\Weight}_n,r,\sigma(u)))\\
=&1- \frac{1}{n}\cdot \sum_{\sigma(u)}\FixProb(G^{\Weight}_n,r,u)\\
=& 1-\FixProb(G^{\Weight}_n,r,\Uniform)
\end{align*}
\end{proof}

A direct consequence of the above lemma is that under uniform initialization, for any graph family where the fixation probability of advantageous mutants ($r>1$) approaches $1$, the fixation probability of disadvantageous mutants ($r<1$) approaches zero.
Since under our weight function $\Weight$ temperature initialization coincides with uniform initialization whp,
\cref{lem:amplifiers} and \cref{lem:uniform_suppress_residents} lead to the following corollary,
which is our positive result.

\smallskip
\begin{theorem}\label{them:amplifiers}
Let $\eps>0$ and $n_0>0$ be any two fixed constants, and
consider any sequence of unweighted, undirected graphs $(G_n)_{n>0}$ such that $\Diameter(G_n)\leq n^{1-\eps}$ for all $n>n_0$.
There exists a sequence of weight functions $(\Weight_n)_{n>0}$ such that the graph family
$\Graphseq=(G_n^{\Weight_n})$ is a (i)~strong uniform, (ii)~strong temperature, and (iii)~strong convex amplifier.
\end{theorem}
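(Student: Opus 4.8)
The plan is to assemble the four preceding lemmas into the two limit statements demanded by the definition of a strong amplifier, treating advantageous and disadvantageous mutants separately. For every $n>n_0$ the hypothesis $\Diameter(G_n)\le n^{1-\eps}$ lets me invoke \cref{lem:amplifiers} to obtain a weight function $\Weight_n$; for the finitely many $n\le n_0$ I assign arbitrary (say uniform) weights, which is harmless since a finite prefix affects neither $\liminf$ nor $\limsup$. Throughout I fix $r_1>1$ and $r_2<1$ and write $r=1/r_2>1$.

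For advantageous mutants I would argue directly from \cref{lem:amplifiers}, which gives $\FixProb(G_n^{\Weight_n},r_1,\Uniform)=1-O(n^{-\eps/3})$ and $\FixProb(G_n^{\Weight_n},r_1,\Temp)=1-O(n^{-\eps/3})$, so both tend to $1$. Since $\FixProb(G_n^{\Weight_n},r_1,\coeff)=(1-\coeff)\FixProb(G_n^{\Weight_n},r_1,\Uniform)+\coeff\,\FixProb(G_n^{\Weight_n},r_1,\Temp)$ is a convex combination of the two, it also tends to $1$. Hence $\liminf_{n\to\infty}\FixProb(G_n^{\Weight_n},r_1,Z)=1$ for each $Z\in\{\Uniform,\Temp,\coeff\}$.

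For disadvantageous mutants the uniform case is immediate from \cref{lem:uniform_suppress_residents}: $\FixProb(G_n^{\Weight_n},r_2,\Uniform)=\FixProb(G_n^{\Weight_n},1/r,\Uniform)\le 1-\FixProb(G_n^{\Weight_n},r,\Uniform)=O(n^{-\eps/3})\to 0$, where the last bound uses the advantageous estimate above for $r>1$. The temperature case I would reduce to the uniform one by comparing the two initialization distributions $P_\Uniform(u)=1/n$ and $P_\Temp(u)=\Temp(u)/n$. The weight construction forces the self-loop of every vertex $u\notin\Sink$ to dominate its weight (this is exactly the estimate $\Weight(u,u)/\Weight(u)=1-2^{-\Omega(n)}$ from \cref{lem:initialization}, and the only other incoming weights are the $2^{-n}\cdot n^{-4\Distance(\cdot)}$ tree edges), so $\Temp(u)=1\pm 2^{-\Omega(n)}$ there, while $\sum_{u\in\Sink}\Temp(u)=O(n^{1-\eps/3})$ and $|\Sink|=n^{1-\eps/3}$. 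Summing gives $\sum_u|P_\Temp(u)-P_\Uniform(u)|=O(n^{-\eps/3})$, i.e. the two distributions agree up to total variation $O(n^{-\eps/3})$. As $\FixProb(G_n^{\Weight_n},r_2,Z)$ is the average of the vertex probabilities $\FixProb(G_n^{\Weight_n},r_2,u)\in[0,1]$ against the corresponding distribution, this yields $\FixProb(G_n^{\Weight_n},r_2,\Temp)\le\FixProb(G_n^{\Weight_n},r_2,\Uniform)+O(n^{-\eps/3})\to 0$, and the convex case follows once more by taking the convex combination. Thus $\limsup_{n\to\infty}\FixProb(G_n^{\Weight_n},r_2,Z)=0$ for each $Z$, giving all three amplifier claims.

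The main obstacle I anticipate is precisely the disadvantageous temperature case, since \cref{lem:uniform_suppress_residents} is stated only for uniform initialization and its irreflexive-permutation argument does not transport to the non-uniform temperature weights (the sum $\sum_u P_\Temp(u)\bigl(1-\FixProb(G_n^{\Weight_n},r,\sigma(u))\bigr)$ no longer telescopes). The resolution is the total-variation bound above, equivalently a maximal coupling under which, whp, temperature and uniform initialization place the initial mutant on the same vertex; conditioned on agreement the two induce identical fixation probabilities, so the schemes differ by at most $O(n^{-\eps/3})$ for \emph{every} fixed fitness, disadvantageous ones included. The one point I would verify carefully is that the estimate $\Temp(u)=1\pm 2^{-\Omega(n)}$ is uniform over all $u\notin\Sink$, so that the accumulated $L^1$ error stays $2^{-\Omega(n)}$ off the hub and $O(n^{-\eps/3})$ on it.
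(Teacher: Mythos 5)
Your proposal is correct and follows essentially the same route as the paper: \cref{lem:amplifiers} handles $r>1$ for both initializations, \cref{lem:uniform_suppress_residents} handles $r<1$ under uniform initialization, and the disadvantageous temperature case is reduced to the uniform one by observing that the two initialization distributions nearly coincide, with the convex case following by taking convex combinations. Your explicit total-variation bound $\sum_u|\Temp(u)/n-1/n|=O(n^{-\eps/3})$ is simply a rigorous rendering of the paper's remark that ``temperature initialization coincides with uniform initialization whp,'' so no genuinely different idea is involved.
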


{
\bibliographystyle{abbrv}
\bibliography{bibliography}
}

\end{document}